\let\endhere\qedhere
\crefname{condition}{condition}{conditions}
\Crefname{condition}{Condition}{Conditions}
\crefname{statement}{statement}{statements}
\Crefname{statement}{Statement}{Statements}
\crefname{observation}{Observation}{Observations}
\newlength{\myA}%
\newlength{\myB}%
\newcolumntype{C}{>{\centering\arraybackslash}X}
\def\cleartheorem#1{%
    \expandafter\let\csname#1\endcsname\relax
    \expandafter\let\csname c@#1\endcsname\relax
}
\declaretheorem[
	name=Definition, 
	sibling=theorem,
	style=definition,
	qed={\lipicsEnd}
]{definition}
\newcommand{\convexpath}[2]{
[
    create hullnodes/.code={
        \global\edef\namelist{#1}
        \foreach [count=\counter] \nodename in \namelist {
            \global\edef\numberofnodes{\counter}
            \node at (\nodename) [draw=none,name=hullnode\counter] {};
        }
        \node at (hullnode\numberofnodes) [name=hullnode0,draw=none] {};
        \pgfmathtruncatemacro\lastnumber{\numberofnodes+1}
        \node at (hullnode1) [name=hullnode\lastnumber,draw=none] {};
    },
    create hullnodes
]
($(hullnode1)!#2!-90:(hullnode0)$)
\foreach [
    evaluate=\currentnode as \previousnode using \currentnode-1,
    evaluate=\currentnode as \nextnode using \currentnode+1,
    ] \currentnode in {1,...,\numberofnodes} {
-- ($(hullnode\currentnode)!#2!-90:(hullnode\previousnode)$)
  let \p1 = ($(hullnode\currentnode)!#2!-90:(hullnode\previousnode) - (hullnode\currentnode)$),
    \n1 = {atan2(\y1,\x1)},
    \p2 = ($(hullnode\currentnode)!#2!90:(hullnode\nextnode) - (hullnode\currentnode)$),
    \n2 = {atan2(\y2,\x2)},
    \n{delta} = {-Mod(\n1-\n2,360)}
  in
    {arc [start angle=\n1, delta angle=\n{delta}, radius=#2]}
}
-- cycle
}
\definecolor{cblue}{HTML}{007582}
\definecolor{cred}{HTML}{FF0000}
\tikzstyle{regular}=[anchor=center, circle, inner sep = .5pt]
\tikzstyle{edge}=[dotted]
\tikzstyle{guard}=[edge, very thick]
\tikzstyle{blue}=[regular, text=cblue]
\tikzstyle{red}=[regular, text=cred]
\newcommand*{\nat}{\mathbb{N}}
\newcommand*{\natpos}{\nat_{\geq{} 1}}
\newcommand*{\reell}{\mathbb{R}}
\newcommand*{\isdef}{\coloneqq}
\newcommand*{\isomorphic}{\cong}
\newcommand{\dom}{\operatorname{dom}}
\newcommand{\img}{\operatorname{img}}
\newcommand*{\pto}{\rightharpoonup}
\newcommand*{\pot}{\mathcal{P}}
\newcommand*{\kpot}{\mathcal{P}_{k}}
\newcommand*{\all}{\text{for all}}
\newcommand*{\hi}{{\hat{\imath}}}
\newcommand*{\hj}{{\hat{\jmath}}}
\newcommand*{\tupel}[1]{\overline{#1}}
\newcommand*{\class}[1]{\mathsf{#1}}
\newcommand*{\classC}{\mathfrak{C}}
\newcommand*{\classIC}{\mathfrak{C}_{\mathsf{I}}}
\newcommand*{\Hom}{\operatorname{Hom}}
\newcommand*{\hV}{\ensuremath{h_V}}
\newcommand*{\hE}{\ensuremath{h_E}}
\let\oldphi\phi%
\let\phi\varphi%
\let\varphi\oldphi%
\let\emptyset\varnothing%
\newcommand*{\set}[1]{\ensuremath{\{#1\}}}
\newcommand*{\enum}[1]{\ensuremath{\langle #1 \rangle}}
\renewcommand*{\mid}{\,:\,}
\newcommand*{\union}{\cup}
\newcommand*{\intersect}{\cap}
\newcommand*{\bigunion}{\bigcup}
\newcommand*{\disunion}{\mathrel{\dot{\cup}}}
\newcommand*{\T}{T}
\newcommand*{\F}{F}
\newcommand*{\Ts}{\dot{T}}
\newcommand*{\sF}{\F_{\text{s}}}
\newcommand*{\induced}[2]{{#1}[{#2}]}
\newcommand*{\troot}{\ensuremath{\omega}}
\newcommand*{\troots}{\operatorname{\Omega}}
\newcommand*{\height}{\operatorname{height}}
\newcommand*{\level}{\operatorname{level}}
\newcommand*{\p}{\operatorname*{P}}
\newcommand*{\lcv}{\operatorname*{\land}} %
\newcommand*{\TW}{\class{TW}}
\newcommand*{\TWk}{\TW_k}
\newcommand*{\Pn}{\mathcal{P}}
\newcommand*{\HG}{\mathcal{H}}
\newcommand*{\GG}{\mathcal{G}}
\newcommand*{\V}{\ensuremath{V}}
\newcommand*{\E}{\ensuremath{E}}
\newcommand*{\f}{\beta}
\newcommand*{\hd}{\operatorname{hd}}
\newcommand*{\shd}{\operatorname{shd}}
\newcommand*{\hGamma}{\widehat{\Gamma}}
\newcommand*{\sGamma}{\Gamma_{\text{s}}}
\newcommand*{\hsGamma}{\widehat{\Gamma}_{\text{s}}}
\newcommand*{\ulGamma}{\widetilde{\Gamma}}
\newcommand*{\HD}{\class{HD}}
\newcommand*{\HDk}{\HD_k}
\newcommand*{\IHD}{\class{IHD}}
\newcommand*{\IHDk}{\IHD_k}
\newcommand*{\SHD}{\class{SHD}}
\newcommand*{\SHDk}{\SHD_k}
\newcommand*{\ISHD}{\class{ISHD}}
\newcommand*{\ISHDk}{\ISHD_k}
\newcommand*{\GHW}{\class{GHW}}
\newcommand*{\GHWk}{\GHW_k}
\newcommand*{\IGHW}{\class{IGHW}}
\newcommand*{\IGHWk}{\IGHW_k}
\newcommand*{\IG}[1]{\ensuremath{#1}}
\newcommand*{\I}{\IG{I}}
\newcommand*{\J}{\IG{J}}
\newcommand*{\kLI}{L}
\newcommand*{\Mf}{M_f}
\newcommand*{\blueN}{\operatorname{\upbeta}}
\newcommand*{\R}{\mathcal{R}}
\newcommand*{\B}{\mathcal{B}}
\newcommand*{\labels}{\operatorname{labels}}
\newcommand*{\XB}{X_b}
\newcommand*{\XR}{X_r}
\newcommand*{\changeB}[3]{#1\langle#2{\to}#3\rangle}
\newcommand*{\removeB}[2]{\changeB{#1}{#2}{\bullet}}
\newcommand*{\changeR}[3]{#1[#2 {\to} #3]}
\newcommand*{\removeR}[2]{\changeR{#1}{#2}{\bullet}}
\newcommand*{\transition}[2]{#1[{\leadsto} #2]}
\newcommand*{\domb}[1]{\ensuremath{\textup{db}_{#1}}}
\newcommand*{\dombQ}{\domb{Q}}
\newcommand*{\domr}[1]{\ensuremath{\textup{dr}_{#1}}}
\newcommand*{\domrQ}{\domr{Q}}
\newcommand*{\RvertexMap}[1]{\operatorname{succ}^{R}_{#1}}
\newcommand*{\BvertexMap}[1]{\operatorname{succ}^{B}_{#1}}
\newcommand*{\isoR}{\pi_{R}}
\newcommand*{\isoB}{\pi_{B}}
\newcommand*{\GLI}{\class{GLI}}
\newcommand*{\GLIDk}{\class{GLI}^k}
\newcommand*{\Logic}[1]{\textsf{#1}}
\newcommand*{\CL}{\Logic{C}}
\newcommand*{\GC}{\Logic{GC}}
\newcommand*{\GCk}{\GC^{k}}
\newcommand*{\GCD}[1]{\GC_{#1}}
\newcommand*{\GCDk}{\GCD{k}}
\newcommand*{\RGCk}{\Logic{RGC}^k}
\newcommand*{\varx}{\mathtt{x}}
\newcommand*{\vare}{\mathtt{e}}
\newcommand*{\varv}{\mathtt{v}}
\newcommand*{\VARE}{\textsf{VAR}_\vare}
\newcommand*{\VARV}{\textsf{VAR}_\varv}
\newcommand*{\free}{\textstyle\operatorname*{free}}
\newcommand*{\freeE}{\free_\vare}
\newcommand*{\freeV}{\free_\varv}
\newcommand*{\gdepth}{\operatorname*{gd}}
\newcommand*{\vset}{\operatorname{vars}}
\newcommand*{\IInt}{\mathcal{I}}
\newcommand*{\assignmentE}{\nu_{\vare}}
\newcommand*{\assignmentV}{\nu_{\varv}}
\newcommand*{\bigland}{\bigwedge}
\newcommand*{\LogGuard}[1]{\begingroup\def\x{#1}\ifx\x\@empty\Delta\else\Delta_{\x}\fi\endgroup}
\newcommand*{\existsgeq}[1]{\exists^{\scalebox{0.6}{$\geq$}#1}}
\newcommand*{\existseq}[1]{\exists^{\scalebox{0.6}{$=$}#1}}
\newcommand*{\logeq}{\kern1pt{=}\kern1pt}
\newcommand*{\qsep}{\mathbin{.}}
\title{On Homomorphism Indistinguishability and Hypertree Depth}
\author{Benjamin Scheidt}{Humboldt-Universität zu Berlin, Germany}{benjamin.scheidt@hu-berlin.de}{https://orcid.org/0000-0003-2379-3675}{}%
\authorrunning{B.\ Scheidt}
\keywords{homomorphism indistinguishability, counting logics, guarded logics, hypergraphs, incidence graphs, tree depth, elimination forest, hypertree width}
\begin{document}

\maketitle

\begin{abstract}
$\GCk$ is a logic introduced by Scheidt and Schweikardt (2023) to express properties of hypergraphs
It is similar to first-order logic with counting quantifiers ($\CL$) adapted to the hypergraph setting.
It has distinct sets of variables for vertices and for hyperedges and requires vertex variables to be guarded by hyperedge variables on every quantification.

We prove that two hypergraphs $G$, $H$ satisfy the same sentences in the logic $\GCk$ with \emph{guard depth} at most $k$ if, and only if, they are homomorphism indistinguishable over the class of hypergraphs of strict hypertree depth at most $k$.
This lifts the analogous result for tree depth $\leq k$ and sentences of first-order logic with counting quantifiers of quantifier rank at most $k$ due to Grohe (2020) from graphs to hypergraphs.
The guard depth of a formula is the quantifier rank with respect to hyperedge variables, and strict hypertree depth is a restriction of hypertree depth as defined by Adler, Gaven\v{c}iak and Klimo\v{s}ová (2012).
To justify this restriction, we show that for every $H$, the strict hypertree depth of $H$ is at most 1 larger than its hypertree depth, and we give additional evidence that strict hypertree depth can be viewed as a reasonable generalisation of tree depth for hypergraphs. \end{abstract}

\section{Introduction}%
\label{sec:introduction}

The ($k$-dimensional) Weisfeiler-Leman algorithm describes a technique to classify the vertices (or $k$-tuples) of a graph, by iteratively computing a colouring (i.e., a classification) of the vertices (or $k$-tuples), which gets refined each iteration until it stabilises.
While it can be used as a way to (imperfectly) test graphs for isomorphism, it has found many other -- seemingly very different -- uses, e.g.\ reducing the cost of solving linear programs~\cite{Grohe2014}, as graph kernels~\cite{Shervashidze} or even as an architecture for graph neural networks~\cite{Xu2018,Morris2019,Grohe2021a,Grohe2020a}.
For a more in-depth overview on the expressive power of the Weisfeiler-Leman algorithm itself, consult~\cite{Kiefer2020} as a starting point.
The success of the Weisfeiler-Leman algorithm can in part be explained by its simplicity, but also by the fact that it appears to capture the structure of a graph really well.
This can be explained by its connection to first-order logic with counting quantifiers and to homomorphism counts over graphs of bounded tree width.
A classical result due to Cai, Fürer and Immerman~\cite{Cai1992} and Immerman and Lander~\cite{Immerman1990} says that two graphs are indistinguishable by the $k$-dimensional Weisfeiler-Leman algorithm if, and only if, they satisfy the same sentences of first-order logic with counting quantifiers ($\CL$) and $k{+}1$ variables ($\CL^{k{+}1}$).

Dvo\v{r}ák~\cite{Dvorak2010} and Dell, Grohe, Rattan~\cite{Dell2018} related the Weisfeiler-Leman algorithm to homomorphism counts over graphs of bounded tree width (this was subsequently generalised to relational structures of bounded tree width by Butti and Dalmau~\cite{Butti2021}).
They showed that two graphs are homomorphism indistinguishable over the class $\TWk$ of graphs of tree width at most $k$ if, and only if, they are indistinguishable by the $k$-dimensional Weisfeiler-Leman algorithm.
Here, two graphs $G$ and $H$ are \emph{homomorphism indistinguishable} over a class $\classC$ of graphs, if the number of homomorphisms from $F$ into $G$ equals the number of homomorphisms from $F$ into $H$ for all $F \in \classC$.
Dvo\v{r}ák used the previously mentioned connection to $\CL^{k{+}1}$ and an inductive characterisation of the graphs of bounded tree width to prove this result, while Dell et al.\ used elaborate algebraic techniques on vectors containing homomorphism counts.

In recent years, a whole theory has emerged around homomorphism indistinguishability.
There are characterisations of homomorphism indistinguishability for classes of graphs other than $\TWk$ (cf.~\cite{Dawar2021,Fluck2024,Montacute2022,Roberson2022}), among which we would like to emphasise the following: A classical result by Lovász~\cite{Lovasz1967}, stating that two graphs are isomorphic if, and only if, they are homomorphism indistinguishable over all graphs; a well-received result by Man\v{c}inska and Roberson~\cite{Mancinska2020}, stating that two graphs are homomorphism indistinguishable over the class of planar graphs if, and only if, they are quantum isomorphic; and, of importance for this paper, Grohe~\cite{Grohe2020} showed that two graphs are homomorphism indistinguishable over the graphs of tree depth at most $m$ if, and only if, they satisfy the same sentences of $\CL$ with quantifier rank at most $m$ ($\CL_m$).
There is also work concerned with a more fundamental analysis of homomorphism counting from restricted classes (cf.~\cite{Boeker2019a,Grohe2022,Neuen2023,Rattan2023,Seppelt2023}).

Some real-world problems can be represented by hypergraphs in a much more natural way than by graphs.
The great track record of the Weisfeiler-Leman method poses the question, whether a similar algorithm exists that works on \emph{hypergraphs}.
A direct application of the Weisfeiler-Leman algorithm on the incidence structure of a hypergraph is sometimes used.
But Böker noted in~\cite{Boeker2019}, that this approach does not capture the hypergraph structure well, since the algorithm will mix up hyperedges and vertices.
Thus, a proper variant of the Weisfeiler-Leman algorithm that works on hypergraphs is, to the best of our knowledge, still missing.
We believe that establishing results analogous to the ones mentioned so far can give valuable insight on how the algorithm should operate on hypergraphs.
A first step from this angle is a result by Scheidt and Schweikardt~\cite{Scheidt2023}, who lift Dvo\v{r}ák's result to hypergraphs by proving the following: two hypergraphs $G$, $H$ are homomorphism indistinguishable over the class $\GHWk$ of hypergraphs of generalised hypertree width at most $k$ if, and only if, they satisfy the same sentences of the logic $\GCk$.
$\GCk$ is a novel logic introduced in~\cite{Scheidt2023}.
It has distinct variables for vertices and for hyperedges and counting quantifiers for both variable types.
The main feature of $\GCk$ is that it bounds the number of variables for hyperedges by $k$, and it requires that vertex variables are \enquote{guarded by} (i.e., contained in) hyperedge variables on every quantification.
\subparagraph*{Contributions} 
As the main contribution of this work, we show that two hypergraphs satisfy the same sentences of the logic $\GCk$ with guard depth at most $k$ if, and only if, they are homomorphism indistinguishable over the class of hypergraphs of strict hypertree depth at most $k$ (\cref{thm:main}).
The guard depth is the quantifier depth of the hyperedge variables.
This theorem follows from an inductive characterisation of the class of hypergraphs of strict hypertree depth $\leq k$, combined with the main technical lemmas of Scheidt and Schweikardt~\cite{Scheidt2023}.
We believe that this inductive characterisation is interesting on its own, since the same technique combined with the core lemmata in Dvo\v{r}ák's work~\cite{Dvorak2010} can be used to give a concise proof of the analogous result on graphs due to Grohe~\cite{Grohe2020}, which was independently recognised and shown by Fluck et al.~\cite{Fluck2024} recently.
Strict hypertree depth is a mild restriction of hypertree depth as defined by Adler, Gaven\v{c}iak and Klimo\v{s}ová~\cite{Adler2012}.
This (as it turns out only slight) deviation from hypertree depth is surprising at first.
Because of the properties and relations between strict hypertree depth and hypertree depth we acquire in this paper, we claim that strict hypertree depth can be viewed as a reasonable generalisation of tree depth for hypergraphs too.
In particular, we show that the strict hypertree depth of a hypergraph is at most 1 larger than its hypertree depth (\cref{thm:shd-is-almost-hd}).
Moreover, we show that the distinguishing power of homomorphism counts from hypergraphs of hypertree depth at most $k$ is different from the distinguishing power of homomorphism counts from their respective incidence graphs (\cref{thm:hd-hom-are-skewed}).
Compared to other hypergraph parameters, this is very unexpected.
\subparagraph*{Organisation} The remainder of the paper is organised as follows.
\Cref{sec:preliminaries} is dedicated to the introduction of the necessary notation and definitions.
In particular, we introduce incidence graphs as representations of hypergraphs that will be used throughout the paper, following Böker~\cite{Boeker2019} and Scheidt and Schweikardt~\cite{Scheidt2023}.
The notions of (strict) hypertree depth are introduced in \cref{sec:hypertree-depth}, followed by \cref{sec:homomorphisms} where we handle the differences between homomorphisms between hypergraphs and homomorphisms between incidence graphs.
In \cref{sec:k-labeled-incidence-graphs} we introduce $k$-labeled incidence graphs that were the principle tool used in~\cite{Scheidt2023} to achieve their result.
We utilise them in \cref{sec:characterising-shd} to give an inductive characterisation of the hypergraphs of strict hypertree depth at most $k$ (\cref{thm:inductive_def_hypertree_depth}).
\Cref{sec:gc} is devoted to the logic $\GCk$.
In \cref{sec:main-result} we combine the results from \cref{sec:characterising-shd} and \cref{sec:homomorphisms} with the results from~\cite{Scheidt2023} to obtain \cref{thm:main}.
\Cref{sec:conclusion} concludes the paper with a summary of the results obtained in this paper, as well as an outlook on further research directions. 

\section{Preliminaries}%
\label{sec:preliminaries}

Since we heavily rely on the work by Scheidt and Schweikardt~\cite{Scheidt2023}, we will keep our notation close to theirs.
We denote the set of natural numbers \emph{including} 0 by $\nat$, the set of \emph{positive} natural numbers by $\natpos$, and we write $[n]$ to denote the set $\set{ 1, 2, \dots, n }$.
To denote isomorphism of two objects, we use $\isomorphic$.
A tuple is denoted using a bar, e.g.\ $\tupel{a}$.
For a given $\ell$-tuple $\tupel{a}$, we use $a_i$ to denote the $i$-th element of $\tupel{a}$, i.e., $\tupel{a} = (a_1, a_2, \dots, a_{\ell})$.
For any set $S$, let $\pot(S)$ be the set of subsets of $S$ and let $\kpot(S)$ be the subsets of cardinality exactly $k$.
If $S$ is a set of sets, let $\bigunion S = \bigunion_{s \in S} s$.

For a finite set $S$ of cardinality $\ell \in \nat$, a total order $<$ on $S$ and any number $d \in \nat$, we say that $\enum{ i_{d{+}1}, i_{d{+}2} \dots, i_{d{+}\ell}}$ is the \emph{$<$-enumeration of $S$}, if $i_{d{+}1} < i_{d{+}2} < \cdots < i_{d{+}\ell}$ and $\set{ i_{d{+1}}, \dots, i_{d{+}\ell} } = S$.
If the order $<$ is clear from the context, we simply say that $\enum{ i_{d{+}1}, i_{d{+}2} \dots, i_{d{+}\ell}}$ is the enumeration of $S$.
Note that we usually let $d = 0$, i.e., we usually write $\enum{ i_1, \dots, i_\ell }$.
Furthermore, the enumeration $\enum{ i_{d{+}1}, i_{d{+}2} \dots, i_{d{+}\ell}}$ of $S$ is empty if, and only if, $S$ is empty.

We denote a partial function $f$ from $A$ to $B$ by $f\colon A \pto B$, and we let $\dom(f) \isdef \set{ a \in A \mid f(a) \text{ is defined} }$ and $\img(f) \isdef \set{ b \in B \mid \text{ex. } a \in A \text{ s.t.\ } f(a) = b }$.
We say that two functions $f$ and $g$ are \emph{compatible}, if $f(x) = g(x)$ for all $x \in \dom(f) \intersect \dom(g)$.
We identify a (partial) function $f$ with the set $\set{ (x, f(x)) \mid x \in \dom(f) }$ whenever we are using set notation on functions.
For example, we write $f \subseteq g$ to indicate $\dom(f) \subseteq \dom(g)$ and $f(x) = g(x)$ for all $x \in \dom(f)$.
In particular, by $f \union g$ we denote the function $h$ with $\dom(h) = \dom(f) \union \dom(g)$ and $h(x) = f(x)$ for all $x \in \dom(f)$ and $h(x) = g(x)$ for all $x \in \dom(g) \setminus \dom(f)$.
Note that $f$ has \emph{precedence} over $g$, but this only matters if $f$ and $g$ are not compatible.
For a (partial) function $f$ and a set $S \subseteq \dom(f)$ we write $f(S)$ to denote $\set{ f(x) \mid x \in S }$, and we call the function $g \subseteq f$ with $\dom(g) \isdef S$ the \emph{restriction of $f$ to $S$}.
Finally, we define partial functions inline like this: $\set{ a \to 3, b \to 2, c \to 5 }$.
In particular, the empty set $\emptyset$ denotes a partial function with empty domain.
\subparagraph*{Graphs, Trees and Forests} An (undirected) graph is a tuple $G = (\V(G), \E(G))$, where $\V(G)$ is a finite set and $\E(G) \subseteq \pot_2(\V(G))$.
For a set $S \subseteq \V(G)$, $\induced{G}{S}$ denotes the subgraph induced by $S$, i.e., $\V(\induced{G}{S}) \isdef S$ and $\E(\induced{G}{S}) \isdef \E(G) \intersect \pot_2(S)$.
A connected component of a graph is a maximal induced subgraph that is connected.
A tree is a connected acyclic graph and a forest is a graph were each connected component is a tree.
A rooted tree $\T$ is a tree with some distinguished node that we call its \emph{root}, which we denote by $\troot_\T$.
A rooted forest $\F$ is the disjoint union of a collection of rooted trees.
It therefore has a set of roots denoted by $\troots_\F$.
We may omit the index if it is clear from the context.
Note that a rooted tree is also a rooted forest and that every node $n$ in a rooted forest is contained in a unique connected component which is a tree that we call the \emph{tree for $n$} and whose root is the \emph{root for $n$}.

For a rooted forest $\F$ we let $\leq_{\F}$ be the induced partial order on the nodes, i.e., the roots are the minimal elements and $s \leq_\F t$ if $s$ is on the unique path from $t$ to its root in $\Omega$.
By $\p(s,t)$ we denote the set of nodes on the path from $s$ to $t$ (including $s$ and $t$).
In particular, if no path from $s$ to $t$ exists, $\p(s,t) = \emptyset$.
By $\p(s)$ we denote the set of nodes on the path from $s$ to the root for $s$ and by $\lcv(s,t)$ we denote the unique element, if it exists, where the paths $\p(s)$, $\p(t)$ join, i.e., $\lcv(s,t) \isdef \max_{\leq_\F}(\p(s) \intersect \p(t))$.
Notice that $\lcv(s,t)$ is undefined iff $s$ and $t$ are not in the same tree, and that $\lcv(s,t) = s$, iff $s \leq_\F t$ (and conversely, $\lcv(s,t) = t$ iff $t \leq_\F s$).

The subtree $\T_t$ induced by $t \in \V(\F)$ is the tree $\induced{\F}{V}$ with root $t$ and $V \isdef \set{ s \in \V(\T) \mid t \leq_\F s }$.
The \emph{level} of a node $s \in \V(\F)$ is defined as the number of elements on the path from $s$ to its root, i.e., $\level(s) \isdef |\p(s)|$.
The \emph{height} of a rooted tree $\T$ is the maximal level, i.e., $\height(\T) \isdef \max \set{ \level(t) \mid t \in \V(\T) }$ and the height of a node $t \in \V(\T)$ is the height of its induced subtree $\T_t$, i.e., $\height(t) \isdef \height(\T_t)$.

\subparagraph*{Hyper- and Incidence Graphs}
A \emph{hypergraph} is a tuple $\HG = (V, E, \beta)$, where $V$ and $E$ are disjoint finite sets and $\beta$ is a total function from $E$ to $\pot(V)$ with $\V = \bigunion_{e \in E} \beta(e)$.
We call the elements of $V$ \emph{vertices} and the elements of $E$ \emph{hyperedges} and for every $e \in E$, we call $\beta(e)$ its \emph{contents}.
We denote $V$ by $\V(\HG)$, $E$ by $\E(\HG)$ and $\beta$ by $\f_{\HG}$, though we may omit the index if there is no ambiguity.
Notice that, in general, multiple hyperedges with the same content and hyperedges without content are allowed.
We call $\HG$ \emph{simple} if $\f$ is injective.

An \emph{incidence graph} is a tuple $\I = (R, B, E)$ consisting of two disjoint finite sets $R$ and $B$ of \emph{red} and \emph{blue} vertices and a relation $E \subseteq B \times R$.
We denote $R$ by $\R(\I)$, $B$ by $\B(\I)$ and $E$ by $\E(\I)$.
For every $e \in \B(\I)$, we let $\blueN(e) \isdef \set{ v \in \R(\I) \mid (e,v) \in \E(\I) }$.
Notice that $\blueN$ is equivalent in its function to the map $\f$ for a hypergraph, hence we denote them similarly.
We only consider incidence graphs where for every $v \in \R(\I)$ there is an $e \in \B(\I)$ such that $(e,v) \in \E(\I)$.

It is easy to see that we can assign an incidence graph to every hypergraph and the other way around: For every hypergraph $\HG$ we let $\I_{\HG} \isdef (\V(\HG), \E(\HG), E)$ where $E \isdef \set{ (e, v) \mid e \in \E(\HG), v \in \f(e) }$.
Conversely, for every incidence graph $\I$ we let $\HG_\I \isdef (\R(\I), \B(\I), \f)$ where $\f(e) \isdef \set{ v \in \R(\I) \mid (e, v) \in \E(\I) }$ for all $e \in \B(\I)$.

For every set $S \subseteq \E(\HG)$ we define the \emph{induced subhypergraph} $\induced{\HG}{S}$ as $(V', S, \f_\HG')$ where $V' \isdef \bigunion_{e \in S} \f(e)$ and $\f_\HG'$ is the restriction of $\f_\HG$ to $S$.
We say that a hypergraph is connected if its incidence graph is connected.
An induced subhypergraph is a connected component, if its corresponding incidence graph is a connected component.

By $\Pn_{n}$ we denote the path of $n$ hyperedges, where each hyperedge contains 2 vertices.
I.e., we let $\V(\Pn_n) = [n{+}1]$, $\E(\Pn_n) = \set{ e_i \mid i \in [n] }$ and $\f(e_i) = \set{ i, i{+}1 }$ for all $i \in [n]$.
We may use different names for the vertices if it is convenient.

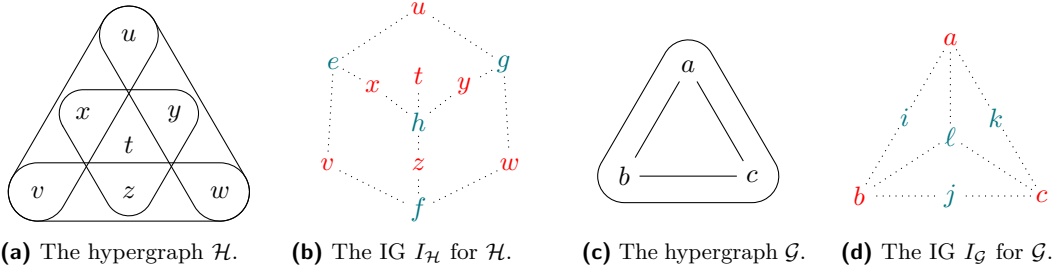
\begin{figure}
	\centering
	\begin{subfigure}[b]{0.24\textwidth}
		\centering

\begin{tikzpicture}[
	every node/.style={regular},
]
	\node (a) at (0,0) {$v$};
	\node (b) at (\exampled,0) {$w$};
	\node (f) at ($(a)!.5!(b)$) {$z$};
	\node (c) at ($(f)+(0,\exampleh)$) {$u$};
	\node (d) at ($(a)!.5!(c)$) {$x$};
	\node (e) at ($(b)!.5!(c)$) {$y$};
	\coordinate (m) at ($(d)!.5!(e)$);
	\node (g) at ($(m)!.4!(f)$) {$t$};
	\draw[draw=black] \convexpath{d,e,f}{9pt};
	\draw[draw=black] \convexpath{a,b}{11pt};
	\draw[draw=black] \convexpath{a,c}{11pt};
	\draw[draw=black] \convexpath{b,c}{11pt};
\end{tikzpicture}

 		\caption{The hypergraph $\HG$.}%
		\label{fig:example:complex-hg}
	\end{subfigure}
	\hfill
	\begin{subfigure}[b]{0.24\textwidth}
		\centering

\begin{tikzpicture}[
	every node/.style={regular},
]
	\tikzmath{
		\dist=\exampled/4;
	}

	\node[red] (a) at (0,0) {$v$};
	\node[red] (b) at (\exampled,0) {$w$};
	\node[red] (f) at ($(a)!.5!(b)$) {$z$};
	\node[red] (c) at ($(f)+(0,\exampleh)$) {$u$};
	\node[red] (d) at ($(a)!.5!(c)$) {$x$};
	\node[red] (e) at ($(b)!.5!(c)$) {$y$};
	\coordinate (m) at ($(d)!.5!(e)$);

	\node[blue] (i) at ($(d)+(150:\dist)$) {$e$};
	\node[blue] (j) at ($(f)+(-90:\dist)$) {$f$};
	\node[blue] (k) at ($(e)+(30:\dist)$) {$g$};
	\node[blue] (l) at ($(m)!.45!(f)$) {$h$};

	\node[red] (g) at ($(l)+(90:{\dist})$) {$t$};

	\graph[edges={edge}]{
		(i) -- {
			(a), (d), (c)
		},
		(k) -- {
			(c), (e), (b)
		},
		(j) -- {
			(a), (f), (b)
		},
		(l) -- {
			(d), (e), (f), (g)
		}
	};
\end{tikzpicture}
 		\caption{The IG $\I_\HG$ for $\HG$.}%
		\label{fig:example:complex-ig}
	\end{subfigure}
	\hfill
	\begin{subfigure}[b]{0.21\textwidth}
		\centering

\begin{tikzpicture}[
]
	\tikzmath{
		\mydiam=\exampled*.7;
		\myheight=sqrt(3/4*(\mydiam*\mydiam));
	}
	\node (a) at (0,0) {$b$};
	\node (b) at ($(a) + (\mydiam,0)$) {$c$};
	\node (c) at ($(a)!.5!(b) + (0,\myheight)$) {$a$};
	\draw (a) -- (b) -- (c) -- (a);
	\draw[draw=black] \convexpath{a,c,b}{10pt};
\end{tikzpicture}
 		\vspace{.3cm}
		\caption{The hypergraph $\GG$.}%
		\label{fig:example:simple-hg}
	\end{subfigure}
	\hfill
	\begin{subfigure}[b]{0.21\textwidth}
		\centering

\begin{tikzpicture}[
	every node/.style={regular},
]
	\tikzmath{
		\mydiam=\exampled*1;
		\myheight=sqrt(3/4*(\mydiam*\mydiam));
	}
	\node[red] (a) at (0,0) {$b$};
	\node[red] (b) at ($(a) + (\mydiam,0)$) {$c$};
	\node[red] (c) at ($(a)!.5!(b) + (0,\myheight)$) {$a$};

	\node[blue] (i) at ($(a)!.5!(c)$) {$i$};
	\node[blue] (j) at ($(a)!.5!(b)$) {$j$};
	\node[blue] (k) at ($(b)!.5!(c)$) {$k$};
	\node[blue] (l) at ($(a)!.5!(b) + (0,{.38*\myheight})$) {$\ell$};

	\graph[edges={edge}]{
		(i) -- {
			(a), (c)
		},
		(k) -- {
			(c), (b)
		},
		(j) -- {
			(a), (b)
		},
		(l) -- {
			(a), (c), (b)
		}
	};
\end{tikzpicture}
 		\vspace{.2cm}
		\caption{The IG $\I_\GG$ for $\GG$.}%
		\label{fig:example:simple-ig}
	\end{subfigure}
	\caption{Examples for hypergraphs and their corresponding incidence graphs.}%
	\label{fig:example:hgs-and-igs}
\end{figure}
\begin{example}\label{example:hgs-and-igs}
	The hypergraph $\HG$ illustrated in \cref{fig:example:complex-hg} is defined as 
	$\V(\HG) = \set{ u,v,w,x,\allowbreak y,z,t }$, 
	$\E(\HG) = \set{ e,f,g,h }$ and 
	$\f_\HG = \set{ e \to \set{ u,v,x }, f \to \set{ v,w,z }, g \to \set{ u,w,y },\allowbreak h \to \set{ t,x,y,z } }$.
	Its incidence graph $\I_\HG$, depicted in \cref{fig:example:complex-ig}, is defined by 
	$\R(\I_\HG) = \V(\HG)$, 
	$\B(\I_\HG) = \E(\HG)$ and 
	$\E(\I_\HG) = \set{ (e,u), (e,v), (e,x), (f,v), (f,w), (f,z), (g,u), (g,w), (g,y),\allowbreak (h,t), (h,x), (h,y), (h,z) }$.

	The hypergraph $\GG$ depicted in \cref{fig:example:simple-hg} is defined as  
	$\V(\GG) = \set{ a,b,c }$, 
	$\E(\GG) = \set{ i,j,k,\ell }$ and 
	$\f_\GG \isdef \set{ i \to \set{ a,b }, j \to \set{ b,c }, k \to \set{ a,c }, \ell \to \set{ a,b,c } }$.
	Its incidence graph $\I_\GG$, depicted in \cref{fig:example:simple-ig}, is defined as 
	$\R(\I_\GG) = \V(\GG)$, 
	$\B(\I_\GG) = \E(\GG)$ and 
	$\E(\I_\GG) = \set{ (i,a), (i,b), (j,b), (j,c),\allowbreak (k,a), (k,c), (\ell,a), (\ell,b), (\ell,c) }$.
\end{example}

\subsection{Hypertree Depth}%
\label{sec:hypertree-depth} %
The following definition of \emph{elimination forest} and \emph{hypertree depth} is due to Adler, Gaven\v{c}iak and Klimo\v{s}ová~\cite{Adler2012}, though they refer to elimination forests as \enquote{decomposition forests}.
We call them elimination forests, since this reflects their conceptual similarity to elimination forests for graphs and avoids confusion with hypertree decompositions.
Further, we define this notion in terms of incidence graphs, because we mainly work on those.
Do notice however, that this definition easily translates to hypergraphs and that it is equivalent to the one given by Adler, Gaven\v{c}iak and Klimo\v{s}ová.
\begin{definition}[Hypertree Depth and Elimination Forests,~\cite{Adler2012}]%
	\label{def:hd}
	Let $\I$ be an incidence graph.
	An \emph{elimination forest} $(\F, \Gamma)$ for $\I$ consists of a forest $\F$ and a mapping $\Gamma\colon \V(\F) \to \B(\I)$ such that \cref{def:hd:vertex-complete,def:hd:edge-containment,def:hd:shared-heritage} hold.
	We write $\hGamma(t)$ as shorthand for $\blueN(\Gamma(t))$.
	\begin{enumerate}\crefalias{enumi}{condition}
		\item\label{def:hd:vertex-complete} \emph{Completeness for vertices}:
		For every red vertex $v \in \R(\I)$, there is a $t \in \V(\F)$ such that $v \in \hGamma(t)$.
		\item\label{def:hd:edge-containment} \emph{Hyperedge-Containment}:
		For every blue vertex $e \in \B(\I)$ there are nodes $s, t \in \V(\F)$ such that $s \leq_\F t$ and $\blueN(e) \subseteq \bigunion \hGamma(\p(s,t))$.
		\item\label{def:hd:shared-heritage} \emph{Shared heritage}:
		For all $s, t \in \V(\F)$, if $\hGamma(s) \intersect \hGamma(t) \neq \emptyset$, then $\lcv(s,t)$ is defined and $\hGamma(s) \intersect \hGamma(t) \subseteq \bigunion \hGamma(\p(\lcv(s,t)))$.
	\end{enumerate}

	The intuition behind \cref{def:hd:shared-heritage} is that hyperedges can only share the vertices contained in their common ancestors in the elimination forest.
	The height of an elimination forest $(\F, \Gamma)$ is simply the height of $\F$.
	The \emph{hypertree depth of $\I$} is defined as the minimal height over all elimination forests for $\I$, and we denote it by $\hd(\I)$.
	Analogously, we let $\hd(\HG) \isdef \hd(\I_\HG)$ for all hypergraphs.
	We write $\IHDk$ to denote the class of incidence graphs of hypertree depth at most $k$ and $\HDk$ to denote the corresponding class of hypergraphs.
\end{definition}

We call an elimination forest $(\F, \Gamma)$ \emph{strict} if $\Gamma$ is bijective.
It is easy to see that the first and second condition are trivially satisfied when $\Gamma$ is bijective, thus we can redefine the notion of strict elimination forest in a more succinct manner.

\begin{definition}[Strict Elimination Forest]%
	\label{def:shd}
	Let $\I$ be an incidence graph.
	A \emph{strict elimination forest} $(\F, \Gamma)$ for $\I$ consists of a forest $\F$ and a \emph{bijective} function $\Gamma\colon \V(\F) \to \B(\I)$ satisfying \cref{def:hd:shared-heritage} of \cref{def:hd}.
	The \emph{strict hypertree depth of $\I$}, denoted by $\shd(\I)$, is defined as the minimal height over all strict elimination forests for $\I$.
	Again, we let $\shd(\HG) \isdef \shd(\I_\HG)$ for every hypergraph $\HG$.
	We write $\ISHDk$ to denote the class of incidence graphs of strict hypertree depth at most $k$ and $\SHDk$ to denote the corresponding class of hypergraphs.
\end{definition}

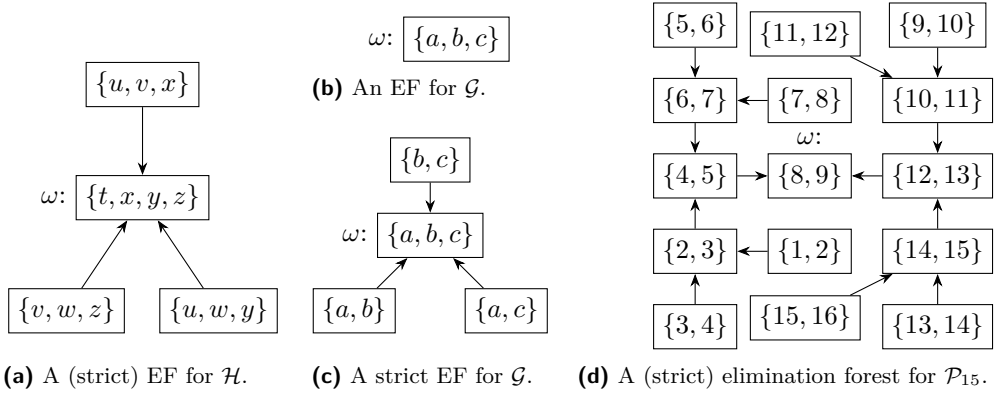
\begin{figure}
	\centering
	\begin{subfigure}{0.27\textwidth}
		\centering

\begin{tikzpicture}[
	every node/.style={draw, rectangle},
	sibling distance=2cm,
]
	\node {$\set{ u,v,x }$}
	child { node[label=left:{$\troot$:}]{ $\set{ t,x,y,z }$ } 
		edge from parent [-Stealth]
		child { node {$\set{ v,w,z }$} edge from parent [Stealth-] }
		child { node {$\set{ u,w,y }$} edge from parent [Stealth-] }
	};
\end{tikzpicture} 		\medskip
		\caption{A (strict) EF for $\HG$.}%
		\label{fig:example:complex-hd}
	\end{subfigure}
	\hfill
	\begin{subfigure}{0.23\textwidth}
		\centering

\begin{tikzpicture}[
	every node/.style={draw, rectangle}
]
	\node[label=left:{$\troot$:}]{ $\set{ a,b,c }$ };
\end{tikzpicture} %
		\caption{An EF for $\GG$.}%
		\label{fig:example:simple-hd}%
		\bigskip

\begin{tikzpicture}[
	every node/.style={draw, rectangle},
	sibling distance=2cm,
	level distance=1cm,
]
	\node {$\set{ b,c }$}
	child { node[label=left:{$\troot$:}]{ $\set{ a,b,c }$ } 
		edge from parent [-Stealth]
		child { node {$\set{ a,b }$} edge from parent [Stealth-] }
		child { node {$\set{ a,c }$} edge from parent [Stealth-] }
	};
\end{tikzpicture} %
		\medskip
		\caption{A strict EF for $\GG$.}%
		\label{fig:example:simple-shd}%
	\end{subfigure}
	\hfill
	\begin{subfigure}{.46\textwidth}
		\centering

\begin{tikzpicture}[
	every node/.style={draw, rectangle},
	every edge/.style={draw, Stealth-},
	node distance=.4cm,
]
	\node (0) [label=above:{$\troot$:}] at (0,0) {$\set{ 8,9 }$};
	\node (00) [left= of 0] {$\set{ 4,5 }$};
	\node (000) [below= of 00] {$\set{ 2,3 }$};
	\node (001) [above= of 00] {$\set{ 6,7 }$};
	\node (0000) [right= of 000] {$\set{ 1,2 }$};
	\node (0001) [below= of 000] {$\set{ 3,4 }$};
	\node (0010) [above= of 001] {$\set{ 5,6 }$};
	\node (0011) [right= of 001] {$\set{ 7,8 }$};

	\node (01) [right= of 0] {$\set{ 12,13 }$};
	\node (010) [above= of 01] {$\set{ 10,11 }$};
	\node (011) [below= of 01] {$\set{ 14,15 }$};
	\node (0100) [above= of 010] {$\set{ 9,10 }$};
	\node (0101) [above left= of 010] {$\set{ 11,12 }$};
	\node (0110) [below= of 011] {$\set{ 13,14 }$};
	\node (0111) [below left= of 011] {$\set{ 15,16 }$};

	\draw (0) edge (00);
	\draw (00) edge (000);
	\draw (00) edge (001);
	\draw (000) edge (0000);
	\draw (000) edge (0001);
	\draw (001) edge (0010);
	\draw (001) edge (0011);
	\draw (0) edge (01);
	\draw (01) edge (010);
	\draw (01) edge (011);
	\draw (010) edge (0100);
	\draw (010) edge (0101);
	\draw (011) edge (0110);
	\draw (011) edge (0111);
\end{tikzpicture} 		\caption{A (strict) elimination forest for $\Pn_{15}$.}%
		\label{fig:example:path-shd}
	\end{subfigure}
	\caption{Elimination forests for various (hyper)graphs.}%
	\label{fig:example:hds-and-shds}
\end{figure}
\begin{example}
	Consider the hypergraphs $\GG$ and $\HG$ as well as their incidence graphs $\I_\GG$, $\I_\HG$ from \cref{example:hgs-and-igs} (see \cref{fig:example:hgs-and-igs}).
	Let $(\F_1, \Gamma_1)$ be defined as follows.
	$\F_1$ is a tree defined by $\V(\F_1) = \set{ t_1, t_2, t_3, t_4 }$ and $\E(\F_2) \isdef \set{ \set{ t_1, t_2 }, \set{ t_1, t_3 }, \set{ t_1, t_4 } }$ with root $t_1$.
	$\Gamma_1$ is a map defined as $\set{ t_1 \to h, t_2 \to e, t_3 \to f, t_4 \to g }$.
	$(\F_1, \Gamma_1)$ is an elimination forest of height $2$ for $\I_\HG$.
	It is depicted in \cref{fig:example:complex-hd}, where we labeled the node $t_i$ with $\hGamma_1(t_i)$.
	Notice, that $(\F_1, \Gamma_1)$ is strict.
	
	Analogously, we depicted two elimination forests for $\I_\GG$ of height 1 and 2 in \cref{fig:example:simple-hd,fig:example:simple-shd}.
	Notice how the elimination forest depicted in \cref{fig:example:simple-hd}, witnessing $\hd(\GG) = 1$, is not strict.
	It is easy to see that $\shd(\GG) \geq 2$ since a bijective map implies that there are as many nodes in the forest as there are edges.
	Thus, \cref{fig:example:simple-shd} witnesses that $\shd(\GG) = 2$.
	
	Finally, \cref{fig:example:path-shd} depicts a (strict) elimination forest for $\Pn_{15}$ witnessing $\shd(\Pn_{15}) \leq 4$.
	Recall that $\Pn_{15}$ is defined by $\V(\Pn_{15}) = [16]$ and $\E(\Pn_{15}) = \set{ \set{ i, i{+}1 } \mid i \in [15] }$.
	\lipicsEnd{}
\end{example}

One can show that the strict hypertree depth of a hypergraph is at most its hypertree depth increased by one.
The main idea is that we can turn every elimination forest into a strict one, if we add a leaf for every hyperedge that is currently not being mapped to below the path that contains it according to \cref{def:hd:edge-containment} in \cref{def:hd}.

\begin{restatable}{theorem}{shdisalmosthd}\label{thm:shd-is-almost-hd}
	For all hypergraphs $\HG$,\;
	$
		\hd(\HG) \; \leq \; \shd(\HG) \; \leq \; \hd(\HG) {+} 1
	$.
\end{restatable}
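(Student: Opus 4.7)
The plan is to prove the two inequalities separately. The first, $\hd(\HG) \leq \shd(\HG)$, follows immediately from the observation just before \cref{def:shd}: any strict elimination forest is already an elimination forest, because bijectivity of $\Gamma$ together with the standing assumption that every red vertex of $\I_\HG$ is adjacent to a blue vertex forces \cref{def:hd:vertex-complete,def:hd:edge-containment} to hold automatically. Hence the minimum height over strict EFs is at least the minimum over all EFs.

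For the harder inequality $\shd(\HG) \leq \hd(\HG) + 1$, I start from an optimal EF $(\F, \Gamma)$ for $\I_\HG$ of height $k = \hd(\HG)$ and build a strict EF $(\F', \Gamma')$ of height at most $k+1$ in two stages. In the first stage, I argue that $\Gamma$ may be assumed injective without increasing height. Given $\Gamma(s) = \Gamma(t) = e$ with $s \neq t$, delete $t$ from $\F$ and reconnect its children to $t$'s parent $p$ (when $t$ is a root, its children become new roots instead). The crucial ingredient is \cref{def:hd:shared-heritage}, which yields $\blueN(e) = \hGamma(s) \cap \hGamma(t) \subseteq \bigunion \hGamma(\p(\lcv(s,t)))$; combined with $\lcv(s,t) \leq_\F p$ (valid whenever $\blueN(e) \neq \emptyset$), this gives $\blueN(e) \subseteq \bigunion \hGamma(\p(p))$. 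Consequently any path formerly using $\hGamma(t) = \blueN(e)$ is recovered by passing through $\p(p)$ instead, so \cref{def:hd:edge-containment,def:hd:shared-heritage} continue to hold, and \cref{def:hd:vertex-complete} is preserved because $s$ already covers every vertex that $t$ did. Iterating this deletion yields an EF $(\F^*, \Gamma^*)$ with $\Gamma^*$ injective and $\height(\F^*) \leq k$.

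In the second stage, for each hyperedge $e \in \E(\HG) \setminus \img(\Gamma^*)$, \cref{def:hd:edge-containment} supplies $t_e \in \V(\F^*)$ with $\blueN(e) \subseteq \bigunion \hGamma(\p(t_e))$. I attach a fresh leaf $n_e$ below $t_e$ and set $\Gamma'(n_e) \isdef e$, retaining $\Gamma' \isdef \Gamma^*$ on the original nodes. Then $\Gamma'$ is bijective and $\height(\F') \leq k+1$. Verifying \cref{def:hd:shared-heritage} for $(\F', \Gamma')$ reduces to a case distinction on whether each element of a pair is an original node or a new leaf: for two original nodes it carries over verbatim from $(\F^*, \Gamma^*)$, and when new leaves are involved the key is $\lcv_{\F'}(n_{e_1}, n_{e_2}) = \lcv_{\F^*}(t_{e_1}, t_{e_2})$ (with the analogous statement for mixed pairs), so any witness $v \in \blueN(e_1) \cap \blueN(e_2)$ can be traced back to nodes $u_i \in \p_{\F^*}(t_{e_i})$ with $v \in \hGamma(u_i)$, and \cref{def:hd:shared-heritage} applied to $u_1, u_2$ in $(\F^*, \Gamma^*)$ delivers the required inclusion.

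I expect the main technical obstacle to be the first stage, specifically the cross-branch case where $s, t$ are incomparable: checking that deletion preserves all three EF conditions requires careful path-tracking and relies essentially on shared heritage to reroute coverage through $\lcv(s,t)$. Once injectivity is in hand, the leaf-addition stage and its verification are routine.
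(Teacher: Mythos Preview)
Your approach matches the paper's almost exactly: first prune duplicates to make $\Gamma$ injective without increasing height, then append one leaf per missing hyperedge.  There is, however, a genuine (if easily repaired) gap in stage~1.  You claim that $\lcv(s,t) \leq_\F p$ holds ``whenever $\blueN(e) \neq \emptyset$'', where $p$ is the parent of the deleted node~$t$.  This fails precisely when $t$ is a strict ancestor of $s$: then $\lcv(s,t)=t$ and $t \not\leq_\F p$.  In that situation, the path $\p(p)$ no longer contains any node whose $\hGamma$-image covers $\blueN(e)$, and your rerouting argument for \cref{def:hd:edge-containment,def:hd:shared-heritage} breaks down.  The paper avoids this by always deleting the node of \emph{higher} level (so the deleted node can never be the ancestor), which forces $\lcv(s,t)$ to be a strict ancestor of the deleted node and hence $\leq_\F p$.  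You should make the same choice; once you do, your verification goes through.  Note also that the case you flagged as the main obstacle (the cross-branch case with $s,t$ incomparable) is actually unproblematic, since then $\lcv(s,t)$ is automatically a strict ancestor of both; the comparable case is the one requiring care.

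Your stage~2 is a mild simplification of the paper's: you attach the new leaf $n_e$ below a node $t_e$ with $\blueN(e) \subseteq \bigunion \hGamma(\p(t_e))$, whereas the paper works with a minimal segment $\p(s_i,t_i)$ and attaches below $s_i$.  Both work, and your verification via $\lcv_{\F'}(n_{e_1},n_{e_2}) = \lcv_{\F^*}(t_{e_1},t_{e_2})$ together with $\lcv(u_1,u_2) \leq_{\F^*} \lcv(t_{e_1},t_{e_2})$ (which holds because $\p(u_i) \subseteq \p(t_{e_i})$ and these intersections are chains) is correct.
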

\begin{proof}
Let $\I$ be an incidence graph.
Since every strict elimination forest is also an elimination forest, $\hd(\I) \leq \shd(\I)$ trivially holds.

To show $\shd(\I) \leq \hd(\I) + 1$ it suffices to show that any elimination forest $(\F, \Gamma)$ of $\I$ can be turned into a strict elimination forest $(\sF, \sGamma)$ of $\I$ where $\height(\sF) \leq \height(F)+1$.
In the following we make $\Gamma$ bijective by inserting new nodes into $\F$ as children of leaves.
Thus, at the end, the height of $\sF$ has increased by at most 1.

But first we show that we may assume that $\Gamma$ is already injective.
If it is not, then there are distinct $s,t \in \V(\F)$ such that $\Gamma(s) = \Gamma(t)$ and w.l.o.g.\ $\level(s) \geq \level(t)$ and $s$ is not the root.
Let $S \isdef \hGamma(s) \intersect \hGamma(t)$\; ($= \hGamma(s) = \hGamma(t)$).
Since $(\F, \Gamma)$ is an elimination forest, $\lcv(s,t)$ must be defined, $\lcv(s,t) \neq s$ and $S \subseteq \bigunion \hGamma(\p(\lcv(s,t)))$ must hold according to \cref{def:hd:shared-heritage}.
If we remove the node $s$ from $\F$ by contracting the edge between $s$ and its parent $p$ (which exists since $s$ can not be the root) and assign $\Gamma(p)$ to the resulting node, the result $(\F', \Gamma')$ is still an elimination forest.
\Cref{def:hd:vertex-complete} is still satisfied because $\hGamma(s) = \hGamma(t)$.
It can easily be verified that \cref{def:hd:edge-containment} is satisfied: Let $e \in \B(\I)$, let $u,v \in \V(\F)$ such that $u \leq_\F v$ and $\blueN(e) \subseteq \bigunion \hGamma(\p(u,v))$.
Obviously, $\blueN(e) \subseteq \bigunion \hGamma(\p(u,\troot))$ where $\troot$ is the root for $u$.
Since $\hGamma(s) \subseteq \bigunion \hGamma(\lcv(s,t))$ and $\lcv(s,t) \in \p(s,\omega) \setminus \set{ s }$, $\p(u, \troot) \setminus \set{ s }$ is still a path in $\F'$ and $\blueN(e) \subseteq \bigunion \hGamma'(\p(u, \troot) \setminus \set{ s })$.
Showing that \cref{def:hd:shared-heritage} is still satisfied is not hard either:
Let $u,v \in \V(\F)$ such that $\hGamma(u) \intersect \hGamma(v)$ is not empty.
If $\lcv(u,v) \leq_\F s$ or $\lcv(u,v)$ is not $\leq_\F$-comparable to $s$, then $\hGamma(u) \intersect \hGamma(v) \subseteq \bigunion \hGamma(\p(\lcv(u,v)))$ obviously still holds.
If $s \leq_\F \lcv(u,v)$, then $s \in \p(\lcv(u,v))$ and $\lcv(s,t) \in \p(\lcv(u,v))$.
Using the same argument as before we can show that $\hGamma'(u) \intersect \hGamma'(v) \subseteq \bigunion \hGamma'(\p(\lcv(u,v)))$ still holds.

In total, we get that $(\F', \Gamma')$ is still an elimination forest and $\height(\F') \leq \height(\F)$.
We can repeat this procedure until the resulting $\Gamma'$ is injective.

Assume that $\Gamma$ is already injective.
If $\Gamma$ is surjective, $\Gamma$ is also bijective.
Otherwise, let $\set{ e_1, e_2, \dots, e_\ell } \isdef \B(\I) \setminus \img(\Gamma)$ be the set of blue vertices that are not being mapped to.
Since $(\F, \Gamma)$ is an elimination forest, there exist $s_i, t_i$ for every $i \in [\ell]$ such that $s_i \leq_\F t_i$ and $\blueN(e_i) \subseteq \bigunion \hGamma(\p(s_i,t_i))$ and where $|\p(s_i, t_i)|$ is minimal.
We insert a fresh node $s_i'$ as a child of $s_i$ for every $e_i$ and map $s_i'$ to $e_i$.
In precise terms, we let $\sF = (\V(\sF), \E(\sF))$ with $\V(\sF) = \V(\F) \disunion \set{ s_1', s_2', \dots, s_\ell' }$ and $\E(\sF) = \E(\F) \union \set{ \set{ s_i, s_i' } \mid i \in [\ell] }$; and $\sGamma(s_i') \isdef e_i$ for all $i \in [\ell]$ and $\sGamma(t) = \Gamma(t)$ for all $t \in \V(\F)$.

Obviously, the height of the tree $\sF$ did not increase by more than 1 relative to $\F$ and the resulting $\sGamma$ is bijective.
Since $\R(\I) = \bigunion \blueN(\B(\I))$ by definition, \cref{def:hd:vertex-complete} is trivially satisfied.
Since $\sGamma$ is bijective, \cref{def:hd:edge-containment} is trivially satisfied as well.
It remains to verify \cref{def:hd:shared-heritage}: 
Let $u,v \in \V(\sF)$ such that $S \isdef \hsGamma(u) \intersect \hsGamma(v)$ is not empty.
If $u,v \in \V(\F)$ then this still holds since $(\F, \Gamma)$ is an elimination forest and all $s_i'$ are leaves.
Thus assume w.l.o.g.\ that $u = s_i'$ for some $i \in [\ell]$.
Let $n_1, \dots, n_{\hi}$ be the nodes in $\p(s_i, t_i)$ where $S \intersect \hsGamma(n_j)$ is not empty.

\emph{Case 1}: If $v \neq s_j'$ for every $j \in [\ell]$, the nodes $\lcv(n_j, v)$ must exist and $\emptyset \neq \hGamma(n_j) \intersect \hGamma(v) \subseteq \bigunion \hGamma(\p(\lcv(n_j,v)))$ holds.
Thus, also $\lcv(u,v)$ must exist.
Since $\F$ is a forest, all $\lcv(n_j, v)$ must lie on a path, since otherwise $\F$ would contain a cycle.
This means that for the largest $\lcv(n_j, v)$ according to $\leq_\F$, both $\lcv(n_j,v) \leq_{\sF} \lcv(u, v)$ and $S \subseteq \bigunion \hGamma(\p(\lcv(n_j, v)))$ hold.
Thus, $S \subseteq \bigunion \hsGamma(\p(\lcv(u,v)))$.

\emph{Case 2}: If $v = s_j'$ for a $j \in [\ell]$, let $m_1, \dots, m_{\hj}$ be the nodes in $\p(s_j, t_j)$ where $S \intersect \hsGamma(m_{j'})$ is not empty.
Consider the set of pairs $M \isdef \set{ (n_{i'}, m_{j'}) \mid \hGamma(n_{i'}) \intersect \hGamma(m_{j'}) \neq \emptyset }$.
Again, for all $(n_{i'}, m_{j'}) \in M$, $\lcv(n_{i'}, m_{j'})$ must exist and $\hGamma(n_{i'}) \intersect \hGamma(m_{j'}) \subseteq \bigunion \hGamma(\p(\lcv(n_{i'}, m_{j'})))$.
Again, since $\F$ is a forest, they must all lie on a path and $\lcv(u,v)$ must exist.
For the largest $\lcv(n_{\tilde{\imath}}, m_{\tilde{\jmath}})$ according to $\leq_\F$, it therefore holds that $\hGamma(n_{i'}) \intersect \hGamma(m_{j'}) \subseteq \bigunion \hGamma(\p(\lcv(n_{\tilde{\imath}}, m_{\tilde{\jmath}})))$ for \emph{all} $(n_{i'}, m_{j'}) \in M$.
Since $\blueN(u) \subseteq \bigunion \hGamma(\p(s_i, t_i))$ and $\blueN(v) \subseteq \bigunion \hGamma(\p(s_j, t_j))$ hold by construction, it must also hold that $
	S = \bigunion_{(n_{i'}, m_{j'}) \in M} \hGamma(n_{i'}) \intersect \hGamma(m_{j'})
$,
and since $\lcv(n_{\tilde{\imath}}, m_{\tilde{\jmath}}) \leq_{\sF} \lcv(u,v)$, we are done. \end{proof}

Before closing this section, it is easy to see that, due to \cref{def:hd:shared-heritage} (shared heritage), every elimination forest of a connected incidence graph is also an \emph{elimination tree}.
\begin{lemma}\label{lem:restrict_to_connected}
	Let $\I$ be a connected incidence graph.
	For every elimination forest $(\F, \Gamma)$ of $\I$, $\F$ is a tree.
\end{lemma}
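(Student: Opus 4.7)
The plan is to proceed by contradiction: assume $\F$ has at least two distinct trees $T_1 \neq T_2$, and show that the three conditions of \cref{def:hd} together with connectedness of $\I$ force a common tree. I will do this by defining, for each red vertex of $\I$, the unique tree of $\F$ that ``hosts'' it, and then propagating this assignment along paths in $\I$.

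First, I would observe that for each $v \in \R(\I)$ the set $N(v) \isdef \{\, t \in \V(\F) \mid v \in \hGamma(t)\,\}$ lies entirely inside a single tree of $\F$: by completeness (\cref{def:hd:vertex-complete}) it is nonempty, and for any $t_1,t_2 \in N(v)$ we have $v \in \hGamma(t_1) \intersect \hGamma(t_2)$, so shared heritage (\cref{def:hd:shared-heritage}) forces $\lcv(t_1,t_2)$ to exist, which means $t_1$ and $t_2$ are in the same tree. Call this tree $T(v)$.

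Next, I would use hyperedge-containment (\cref{def:hd:edge-containment}) to show that $T(\cdot)$ is constant on the vertices of any hyperedge: for $e \in \B(\I)$ with $\blueN(e) \neq \emptyset$, pick $s \leq_\F t$ with $\blueN(e) \subseteq \bigunion \hGamma(\p(s,t))$. The path $\p(s,t)$ lies in one tree $T'$ of $\F$, and each $v \in \blueN(e)$ has $v \in \hGamma(n)$ for some $n \in \p(s,t) \subseteq T'$; hence $T(v) = T'$ for every $v \in \blueN(e)$. Now connectedness of $\I$ kicks in: any two red vertices are linked by an alternating red-blue-red-\ldots-red path, and consecutive red vertices along such a path share a blue neighbour, so they have the same $T$-value. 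Therefore $T(\cdot)$ is globally constant, say equal to $T^*$.

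Finally, I would rule out any other tree of $\F$. If $T'' \neq T^*$ is another tree of $\F$ and $t \in V(T'')$, then $\hGamma(t) = \blueN(\Gamma(t))$; since $\I$ is connected and contains at least one red vertex, the blue vertex $\Gamma(t)$ cannot be isolated, so $\hGamma(t) \neq \emptyset$ and any $v \in \hGamma(t)$ satisfies $T(v) = T''$, contradicting $T(v) = T^*$. Thus $\F = T^*$ is a tree. The main subtlety here is the well-definedness of $T(v)$ in the first step, where shared heritage gets repurposed from a structural covering condition into a connectivity-transfer mechanism between $\I$ and $\F$; the degenerate case $\R(\I)=\emptyset$ (which forces $|\B(\I)| \le 1$ by connectedness) is handled by a direct inspection and does not affect the argument.
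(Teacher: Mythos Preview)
Your argument is correct and is precisely the expansion of what the paper leaves implicit: the paper merely remarks that the claim is ``easy to see'' from \cref{def:hd:shared-heritage} and gives no further details, so your propagation of a well-defined tree assignment $T(v)$ along the bipartite structure of $\I$ is exactly the intended route. One pedantic caveat: in the corner case $\R(\I)=\emptyset$, $|\B(\I)|=1$, all three conditions of \cref{def:hd} become vacuous (since every $\hGamma(t)=\emptyset$), so a forest with two isolated nodes both mapped to the single blue vertex is a valid elimination forest and the lemma as stated actually fails there---your ``direct inspection'' does not close this case, but neither does the paper, and the case never arises in the applications (where $m\geq 1$ is assumed in \cref{sec:characterising-shd}).
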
 

\subsection{Homomorphisms}%
\label{sec:homomorphisms} %
While hypergraphs and incidence graphs are conceptually close, their \enquote{natural} notions of homomorphisms are not the same.
Since our interest lies in hypergraphs, but we are mainly working on incidence graphs in this paper, we have to relate these notions.
Following Scheidt and Schweikardt, we use the same definitions as Böker~\cite{Boeker2019}.

A homomorphism from a hypergraph $\HG$ into another hypergraph $\GG$ is a pair of functions $(\hV\colon \V(\HG) \to \V(\GG), \hE\colon \E(\HG) \to \E(\GG))$ such that for every $e \in \E(\HG)$ the equality $\hV(\f(e)) = \f(\hE(e))$ holds.

A homomorphism from an incidence graph $\I$ into another incidence graph $\J$ is a pair of mappings $(\hV\colon \R(\I) \to \R(\J), \hE\colon \B(\I) \to \B(\J))$ such that $(\hE(e), \hV(v)) \in \E(\J)$ holds for every edge $(e, v) \in \E(\I)$.
This is equivalent to the requirement $\hV(\blueN(e)) \subseteq \blueN(\hE(e))$.
Thus, the equality that we require for a hypergraph homomorphism is relaxed to an inclusion for incidence graphs.

Let $A$, $B$ and $\classC$ be two hypergraphs and a class of hypergraphs or two incidence graphs and a class of incidence graphs.
We denote the number of homomorphisms from $A$ to $B$ by $\hom(A,B)$, and we let $\Hom(\classC, A)$ be the \enquote{vector} that has a row for every $F \in \classC$ containing $\hom(F,A)$.
We say that $A$ and $B$ are \emph{homomorphism indistinguishable} over $\classC$ ($A \equiv_{\classC} B$), if $\Hom(\classC, A) = \Hom(\classC, B)$, i.e., if $\hom(F, A) = \hom(F, B)$ for all $F \in \classC$.

The following crucial theorem relates homomorphism indistinguishability over a class of hypergraphs to homomorphism indistinguishability over the corresponding class $\classIC$ of incidence graphs.
As noted in~\cite{Scheidt2023}, this theorem is implicit in~\cite{Boeker2019}, consult Appendix A of the full version of~\cite{Scheidt2023} for details.
\begin{theorem}[\cite{Boeker2019,Scheidt2023}]\label{thm:ihom-equals-hom}
	Let $\classC$ be a class of hypergraphs and let $\classIC$ be its corresponding class of incidence graphs.
	If $\classC$ is closed under \emph{pumping} and \emph{local merging}, then $\Hom(\classC, \GG) = \Hom(\classC, \HG)$ if, and only if, 
	$\Hom(\classIC, \I_\GG) = \Hom(\classIC, \I_\HG)$ for all hypergraphs $\GG$ and $\HG$.
\end{theorem}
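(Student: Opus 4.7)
The plan is to prove both implications via inclusion--exclusion identities that relate hypergraph homomorphism counts to incidence graph homomorphism counts on closely related structures. The core observation driving everything is that an incidence graph homomorphism $(\hV,\hE)\colon \I_F\to\I_\GG$ only needs to satisfy the inclusion $\hV(\blueN(e))\subseteq\blueN(\hE(e))$ for each $e\in\B(\I_F)$, whereas a hypergraph homomorphism $F\to\GG$ demands the equality $\hV(\f(e))=\f(\hE(e))$. The two counts therefore differ by exactly two independent kinds of slack: (i) $\hV$ restricted to $\f(e)$ may fail to be surjective onto $\f(\hE(e))$, and (ii) $\hV$ may identify distinct vertices of $\f(e)$ that a hypergraph homomorphism is free to identify too, but the resulting object then ``morally'' has fewer vertices per hyperedge. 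The two closure operations in the hypothesis are designed precisely to absorb these two slacks: \emph{pumping} repairs (i) by enlarging hyperedge contents, and \emph{local merging} repairs (ii) by identifying vertices inside a single hyperedge.

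For the direction ($\Rightarrow$), fix an $F\in\classIC$, so $F=\I_{F_0}$ for some $F_0\in\classC$. Group the incidence homomorphisms from $\I_{F_0}$ into $\I_\GG$ by their ``type'': the partition of $\V(F_0)$ induced by $\hV$ together with, for each hyperedge $e$, the actual image multiset $\hV(\f(e))\subseteq\f(\hE(e))$. Each type corresponds bijectively to a hypergraph $F'$ obtained from $F_0$ by first performing local merging (to realise the vertex partition) and then pumping (to account for the vertices of $\f(\hE(e))$ that are left uncovered). The number of incidence homomorphisms of that type into $\I_\GG$ is exactly $\hom(F',\GG)$. Summing over all types gives a finite identity
\[
  \hom(\I_{F_0},\I_\GG) \;=\; \sum_{F'} \hom(F',\GG),
\]
where the sum ranges over the hypergraphs derivable from $F_0$ by pumping and local merging. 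Closure of $\classC$ under both operations puts every such $F'$ into $\classC$, so the assumption $\Hom(\classC,\GG)=\Hom(\classC,\HG)$ can be applied term by term. The same identity holds with $\HG$ in place of $\GG$, and the two sums agree termwise, yielding $\hom(\I_{F_0},\I_\GG)=\hom(\I_{F_0},\I_\HG)$.

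For the direction ($\Leftarrow$), Möbius-invert the identity above. Viewing the derived hypergraphs as a finite poset under ``$F'$ refines $F$'' (coarser partition of vertices, larger hyperedge contents), one expresses
\[
  \hom(F,\GG) \;=\; \sum_{F''} \mu(F,F'')\,\hom(\I_{F''},\I_\GG)
\]
for a suitable Möbius coefficient $\mu$, where the sum again ranges over finitely many hypergraphs obtainable from $F$ by pumping and local merging and therefore contained in $\classC$ (so their incidence graphs are in $\classIC$). The assumption now transfers termwise from incidence-graph homomorphism counts to hypergraph homomorphism counts, giving $\hom(F,\GG)=\hom(F,\HG)$ for every $F\in\classC$.

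The main obstacle is making the bookkeeping for the inclusion--exclusion fully precise: one must pin down the pumping and local merging operations so that every incidence homomorphism is classified by exactly one ``type'', verify that the type-to-hypergraph correspondence is a bijection (no overcounting from automorphisms), and check that the resulting poset is finite so the Möbius inversion is well-defined. None of these steps is deep, but each needs care; as noted in the excerpt, a detailed treatment is carried out by B\"oker~\cite{Boeker2019} and in the full version of~\cite{Scheidt2023}, and the proposal above merely sketches the conceptual skeleton.
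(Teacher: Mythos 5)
The paper does not actually prove this statement: Theorem~\ref{thm:ihom-equals-hom} is cited from B\"oker~\cite{Boeker2019} and the full version of Scheidt--Schweikardt~\cite{Scheidt2023} (specifically Appendix~A there), so there is no in-paper proof to compare yours against. I can only assess your sketch on its own terms, and it has a concrete gap.

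The central counting claim in your forward direction --- ``The number of incidence homomorphisms of that type into $\I_\GG$ is exactly $\hom(F',\GG)$'' --- is false as stated. Take $F$ to be a single hyperedge $e$ with one vertex $v_1$, and $\GG$ a single hyperedge $e'$ with three vertices $\{w_1,w_2,w_3\}$. Then $\hom(\I_F,\I_\GG)=3$, and every one of those three incidence homomorphisms has the same type (trivial partition, two uncovered target vertices in $e'$), whose associated $F'$ is $F$ pumped twice, i.e.\ a single hyperedge with three vertices. But $\hom(F',\GG)=3!=6$, not $3$. The discrepancy is exactly the $m_e!$ symmetry among the pumped vertices, and it is compounded by the fact that $\hom(F',\GG)$ also counts homomorphisms that are \emph{not} injective on the pumped vertices or on the quotient classes, which do not arise from any incidence homomorphism of the stated type. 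So the purported termwise identity $\hom(\I_{F_0},\I_\GG)=\sum_{F'}\hom(F',\GG)$ does not hold, and the M\"obius-inversion step in the backward direction is inverting an equation that is not true. A correct account has to work with injective/surjective homomorphism counts (or account explicitly for the automorphism factors on pumped vertices) rather than plain $\hom$, and then relate these back to $\hom$ by a separate inclusion--exclusion.

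A second issue: you assert that ``the resulting poset is finite so the M\"obius inversion is well-defined,'' but the poset of hypergraphs derivable from $F$ by pumping is infinite (one can pump any hyperedge arbitrarily often). For a \emph{fixed} target the sum is effectively finite because large pumps have $\hom=0$, but that makes the coefficients target-dependent and undermines the clean source-side M\"obius formula you invoke; reconciling this needs an explicit truncation argument that you do not give. You do acknowledge that precise bookkeeping is needed and delegate it to~\cite{Boeker2019} and the full version of~\cite{Scheidt2023}; that is honest, but the specific equality you wrote down is not merely imprecise --- it is wrong, and the missing ingredient (the injective/automorphism bookkeeping) is exactly where the real work of the theorem lies.
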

$\classC$ is closed under pumping, if $H' \in \classC$ for every $H \in \classC$, where $H'$ is created from $H$ by inserting a new vertex into \emph{one} arbitrary hyperedge of $H$; and closed under local merging, if $H' \in \classC$ for every $H \in \classC$, where $H'$ is created from $H$ by choosing an arbitrary hyperedge $e$ and then merging two vertices $u,v$ that are both contained in $e$.

It is easy to see that $\SHDk$ is closed under both pumping and local merging, whereas $\HDk$ is only closed under local merging.
See \cref{app:shd-hd-closures} for details.
\begin{proposition}\label{prop:shd-hd-closures}
	Let $k \in \natpos$.
	The class $\SHDk$ is closed under pumping and local merging, the class $\HDk$ is closed under local merging but \textbf{not} under pumping.
\end{proposition}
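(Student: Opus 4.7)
The plan is to verify the three claims individually, reusing the original elimination forest wherever possible for the positive results and constructing an explicit counterexample for the negative one.

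For the closure of $\SHDk$ under pumping, I would take a strict elimination forest $(\F, \Gamma)$ for $\HG$ of height at most $k$ and reuse it verbatim for $\HG'$ obtained by adding a fresh vertex $v$ to a hyperedge $e$. By bijectivity of $\Gamma$ there is a unique $r \in \V(\F)$ with $\Gamma(r) = e$, and $\hGamma'$ differs from $\hGamma$ only at $r$, where $\hGamma'(r) = \hGamma(r) \union \set{v}$. Since $v$ occurs in no other $\hGamma'(s)$, it cannot enter any intersection $\hGamma'(s) \intersect \hGamma'(t)$ for distinct $s, t$, so shared heritage reduces to the original condition in $\HG$; vertex completeness, hyperedge containment, bijectivity, and height are clearly preserved.

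For closure under local merging, which I would prove simultaneously for $\SHDk$ and $\HDk$, I reuse $(\F, \Gamma)$ with the induced map $\hGamma'$ obtained by replacing $u$ and $v$ in each $\hGamma(s)$ by $w$. Vertex completeness, hyperedge containment and (where applicable) strictness of $\Gamma$ pass through trivially, since subset relations along paths survive the substitution $\set{u,v} \to w$. The only non-trivial case for shared heritage is $w \in \hGamma'(s) \intersect \hGamma'(t)$ with $u \in \hGamma(s) \setminus \hGamma(t)$ and $v \in \hGamma(t) \setminus \hGamma(s)$. To handle it I would invoke hyperedge containment for $e$ to obtain $a \leq_\F b$ with $\f(e) \subseteq \bigunion \hGamma(\p(a, b))$ and pick $r_u, r_v \in \p(a, b)$ with $u \in \hGamma(r_u)$, $v \in \hGamma(r_v)$. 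These nodes lie on a common path and are $\leq_\F$-comparable; assume WLOG $r_u \leq_\F r_v$. The original shared heritage applied to $(s, r_u)$ and $(t, r_v)$ yields $\lcv(s, r_u)$ and $\lcv(t, r_v)$, both lying on $\p(r_v)$ and hence comparable to each other. The $\leq_\F$-smaller of the two is then a common ancestor of $s$ and $t$, so $\lcv(s, t)$ exists and is at least as deep; the corresponding original vertex ($u$ or $v$) is already in $\bigunion \hGamma(\p(\lcv(s, t)))$, so after merging we obtain $w \in \bigunion \hGamma'(\p(\lcv(s, t)))$.

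For the negative result, my plan is to exhibit a concrete counterexample. For $k = 1$, take $\HG$ with $\V(\HG) = \set{a, b, c}$, $\E(\HG) = \set{e_1, e_2}$, $\f(e_1) = \set{a, b}$, and $\f(e_2) = \set{a, b, c}$; the one-node forest mapping its node to $e_2$ witnesses $\hd(\HG) = 1$. Pumping $e_1$ with a fresh vertex $d$ gives $\HG'$ with $\f'(e_1) = \set{a, b, d}$. Any height-$1$ forest consists of isolated roots, and shared heritage then forces pairwise disjoint $\hGamma$-sets; but both hyperedges of $\HG'$ share $\set{a, b}$, so at most one root is admissible, while no single hyperedge covers all of $\set{a, b, c, d}$. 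Hence $\hd(\HG') \geq 2$, i.e.\ $\HG' \notin \HD_1$. The main obstacle will be extending this to arbitrary $k$: I would construct $\HG_k$ with $\hd(\HG_k) = k$ containing a hyperedge that is ``implicitly covered'' by others in every height-$k$ elimination forest (so it needs no dedicated node), and show that pumping this hyperedge simultaneously eliminates every such forest, forcing $\hd(\HG'_k) \geq k+1$. The delicate point is certifying that no alternative height-$k$ forest survives; the full construction and verification would be deferred to the appendix.
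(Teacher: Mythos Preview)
Your argument for the three closure claims is correct and matches the paper's proof almost line for line: you reuse $(\F,\Gamma)$ unchanged for pumping in $\SHDk$, and for local merging you run the same two-case analysis, invoking hyperedge containment for the common hyperedge $e$ to find comparable nodes $r_u,r_v$ (the paper's $p_u,p_v$) and then chaining shared heritage through $\lcv(s,r_u)$ and $\lcv(t,r_v)$.

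The only divergence is in the counterexample for non-closure of $\HDk$ under pumping. The paper does not build a new family; it simply points to the hypergraphs $\GG_k$ already constructed for \cref{thm:hd-hom-are-skewed} (cycles of length $2^{k+1}{+}2$ with two singleton hyperedges) and observes that pumping one of the singleton hyperedges forces it out of $\HDk$. Your self-contained $k{=}1$ example is cleaner for that case, but your plan for general $k$---engineering $\HG_k$ so that some hyperedge is ``implicitly covered'' in \emph{every} optimal elimination forest and then ruling out all alternatives after pumping---is more work than necessary. Reusing an existing family (such as $\GG_k$, or any $\HG$ with $\hd(\HG)=k$ augmented by a redundant hyperedge whose contents are already contained in an existing one) lets you avoid that global certification step.
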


The following theorem shows that homomorphism indistinguishability over $\HDk$ is not equal to homomorphism indistinguishability over $\SHDk$.
Because $\SHDk \subseteq \HDk$, counting homomorphisms from $\HDk$ is more powerful in the sense that it distinguishes more hypergraphs.
But we also show that it is unequal to homomorphism indistinguishability over $\IHDk$, which is unexpected.
Since this prohibits us from relating $\HDk$ to any fragment of $\GC$, it is conceivable that this could pose a problem in other scenarios too.
Thus, we argue that the notion of strict hypertree depth can be viewed as a reasonable generalisation of tree depth, especially when we recall that $\HD_{k-1} \subseteq \SHDk \subseteq \HDk$.

\begin{figure}
	\centering
	\begin{subfigure}{0.2\textwidth}
		\centering

\begin{tikzpicture}[
	every node/.style={regular, inner sep=0pt}
]
	\node[draw, circle, inner sep=5pt] (s) at (1,1) {};
	\node[draw, circle, inner sep=5pt] (s2) at (1,0) {};
	\node (a) at (0,0) {$\bullet$};
	\node (b) at (0,1) {$\bullet$};
	\node (c) at (1,0) {$\bullet$};
	\node (d) at (1,1) {$\bullet$};

	\draw (a) -- (b) -- (d) -- (c) -- (a);

\end{tikzpicture} 		\smallskip
		\caption{$\GG_1$}
	\end{subfigure}
	\hfill
	\begin{subfigure}{0.2\textwidth}
		\centering

\begin{tikzpicture}[
	every node/.style={regular, inner sep=0pt}
]
	\node[draw, circle, inner sep=5pt] (s) at (0,1) {};
	\node[draw, circle, inner sep=5pt] (s2) at (1,0) {};
	\node (a) at (0,0) {$\bullet$};
	\node (b) at (0,1) {$\bullet$};
	\node (c) at (1,0) {$\bullet$};
	\node (d) at (1,1) {$\bullet$};

	\draw (a) -- (b) -- (d) -- (c) -- (a);

\end{tikzpicture} 		\smallskip
		\caption{$\HG_1$}
	\end{subfigure}
	\hfill
	\begin{subfigure}{0.2\textwidth}
		\centering

\begin{tikzpicture}[
	every node/.style={regular, inner sep=0pt}
]
	\node (s) at (1.5,1) {$\bullet$};
	\node (s2) at (1.5,0) {$\bullet$};
	\node (a) at (0,0) {$\bullet$};
	\node (b) at (0,1) {$\bullet$};
	\node (c) at (1,0) {$\bullet$};
	\node (d) at (1,1) {$\bullet$};

	\draw (a) -- (b) -- (d) -- (c) -- (a);
	\draw (s) -- (d);
	\draw (s2) -- (c);
\end{tikzpicture} 		\medskip
		\caption{$\GG_1'$}
	\end{subfigure}
	\hfill
	\begin{subfigure}{0.2\textwidth}
		\centering

\begin{tikzpicture}[
	every node/.style={regular, inner sep=0pt}
]
	\node (s) at (-.5,1) {$\bullet$};
	\node (s2) at (1.5,0) {$\bullet$};
	\node (a) at (0,0) {$\bullet$};
	\node (b) at (0,1) {$\bullet$};
	\node (c) at (1,0) {$\bullet$};
	\node (d) at (1,1) {$\bullet$};

	\draw (a) -- (b) -- (d) -- (c) -- (a);
	\draw (s) -- (b);
	\draw (s2) -- (c);
\end{tikzpicture} 		\medskip
		\caption{$\HG_1'$}
	\end{subfigure}
	\caption{%
		$(\GG_1, \HG_1)$, $(\GG_1', \HG_1')$  witness \cref{thm:hd-hom-are-skewed} for $k=1$.
		Circles denote singleton hyperedges.%
	}\label{fig:hd-hom-are-skewed}
\end{figure}

\begin{restatable}{theorem}{hdhomareskewed}\label{thm:hd-hom-are-skewed}%
	\settowidth{\myA}{$\Hom(\SHDk, \GG_k) = \Hom(\SHDk, \HG_k)$,}%
	For every $k \in \natpos$ there exist pairs of hypergraphs $(\GG_k, \HG_k)$ and $(\GG_k', \HG_k')$, such that: 
	\begin{enumerate}\crefalias{enumi}{statement}
		\item\label{stmt:hd-hom-are-sked:towards-shd}
		\makebox[\myA]{$\Hom(\SHDk, \GG_k) = \Hom(\SHDk, \HG_k)$,}\: but\: $\Hom(\HDk, \GG_k) \neq \Hom(\HDk, \HG_k)$;
		
		\item\label{stmt:hd-hom-are-sked:towards-ihd}
		\makebox[\myA][l]{$\Hom(\HDk, \GG_k') = \Hom(\HDk, \HG_k')$,}\: but\: $\Hom(\IHDk, \I_{\GG_k'}) \neq \Hom(\IHDk, \I_{\HG_k'})$.
	\end{enumerate}
	
\end{restatable}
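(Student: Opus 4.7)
The pairs for $k=1$ are displayed in \cref{fig:hd-hom-are-skewed}; I would verify these directly by enumeration, and then lift to general $k$ by gluing a pendant hypergraph of depth $k{-}1$ at a distinguished vertex of each base hypergraph.

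For part~(1) at $k=1$: a strict height-$1$ elimination forest is a set of isolated nodes bijectively mapped to hyperedges, so shared heritage forces the hyperedges of any connected $F \in \SHD_1$ to be vertex-disjoint, leaving at most one hyperedge per connected component. Hence $\hom(F,G)$ depends only on the multiset of hyperedge sizes of $G$, and since $\GG_1,\HG_1$ share the multiset $\{2,2,2,2,1,1\}$, equality extends to all of $\SHD_1$ by multiplicativity over components. To separate over $\HD_1$, take $F$ with two vertices $u,v$, a hyperedge $\{u,v\}$, and a singleton hyperedge at each of $u,v$; this $F$ lies in $\HD_1$ via the single-node elimination forest mapped to $\{u,v\}$. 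A case analysis on whether $\hV(u),\hV(v)$ coincide, and whether they target singleton-bearing vertices joined by a cycle edge, gives $\hom(F,\GG_1)=4$ but $\hom(F,\HG_1)=2$: the singleton-bearing vertices $c,d$ in $\GG_1$ are adjacent along the cycle, whereas the singleton-bearing vertices $b,c$ in $\HG_1$ are diagonally opposite and not joined by any cycle hyperedge.

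For part~(2) at $k=1$: both primed hypergraphs have six pairwise distinct 2-element hyperedges and no singletons. For a connected $F \in \HD_1$ with dominating hyperedge $e^*$, any hypergraph homomorphism into such a $G$ forces $\f(\hE(e^*))$ to be a 2-element set $\{x,y\}$, induces a dichromatic 2-coloring of $V(F)$, and then chooses any one of $m_{xy}=1$ hyperedges per hyperedge of $F$; summing over the six available contents yields the same total on both sides, so $\Hom(\HD_1,\GG_1')=\Hom(\HD_1,\HG_1')$. To separate over $\IHD_1$ I reuse the same $F$: under the inclusion semantics of incidence-graph homomorphisms, each singleton blue vertex of $\I_F$ may map to \emph{any} hyperedge covering its chosen red-vertex image, reducing the count to $\sum_x d(x)^3 + \sum_{x \neq y} a_{xy}\, d(x)\, d(y)$, where $d(x)$ is the number of hyperedges containing $x$ and $a_{xy}$ the number containing both. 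Both incidence graphs share the degree multiset $\{1,1,2,2,3,3\}$, so the first sum matches; but in $\GG_1'$ the two degree-$3$ vertices $c,d$ are joined by a cycle hyperedge, contributing a term $3\cdot 3=9$, while in $\HG_1'$ the degree-$3$ vertices $b,c$ are not adjacent. This shifts the totals to $134$ versus $132$.

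For general $k \geq 2$, I would glue a fresh pendant hypergraph $Q_{k-1}$ of strict hypertree depth $k{-}1$ (for instance a path of $k{-}1$ hyperedges identified with each base hypergraph at vertex $a$) onto each of the four $k=1$ examples, yielding $\GG_k,\HG_k,\GG_k',\HG_k'$. To preserve indistinguishability across the gluing, the plan is to phrase $\Hom(\SHDk,G)$ and $\Hom(\HDk,G)$ in terms of the labeled-incidence-graph framework of \cref{sec:k-labeled-incidence-graphs}, so that hom counts factorise across the single gluing vertex into a base contribution times a $Q_{k-1}$ contribution; the $k=1$ equalities then lift to global equalities. The distinguishing $F_k$ for the $\HDk$ (resp.\ $\IHDk$) side is obtained by gluing $Q_{k-1}$ to the $k=1$ distinguishing $F$ at the chosen vertex. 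The main technical obstacle is to (i) choose $Q_{k-1}$ so that the glued hypergraphs have strict hypertree depth, resp.\ hypertree depth, exactly $k$ on both sides, and (ii) verify that the labeled factorisation respects the depth-$k$ cap on $F_k$ and does not introduce spurious cancellations in the product.
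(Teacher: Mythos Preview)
Your $k=1$ arguments are correct and match the paper's (which only says this case is easy and points to the figure). The genuine gap is in your lifting to $k\geq 2$.

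For part~1 your pendant-gluing construction fails outright. Take the very $F$ you used to separate $\GG_1$ from $\HG_1$ over $\HD_1$: two vertices $u,v$, the edge $\{u,v\}$, and a singleton at each of $u,v$. This $F$ has $\shd(F)\leq 2$, hence $F\in\SHDk$ for every $k\geq 2$. Now your $\GG_k=\GG_1\cup_a Q_{k-1}$ and $\HG_k=\HG_1\cup_a Q_{k-1}$ acquire no new singleton hyperedges from the path $Q_{k-1}$, so any hypergraph homomorphism $F\to\GG_k$ must still send $u,v$ into $\{c,d\}$, and any $F\to\HG_k$ must send them into $\{b,c\}$; the pendant is invisible to $F$. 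Your own case analysis then gives $\hom(F,\GG_k)=4\neq 2=\hom(F,\HG_k)$, so $\Hom(\SHDk,\GG_k)\neq\Hom(\SHDk,\HG_k)$ and the pair does not witness statement~1. The slogan ``hom counts factorise across the gluing vertex'' does not rescue this: factorisation is a statement about a fixed $F$, whereas the obstruction is that enlarging the class from $\SHD_1$ to $\SHDk$ admits new $F$'s that already see the base difference.

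The paper's construction for $k\geq 2$ is entirely different and is engineered precisely to avoid this problem. It takes $\GG_k,\HG_k$ to be long cycles (of length $2^{k+1}+2$) carrying two singleton markers placed at distance $2^k$ versus $2^k{+}1$. A separate lemma shows that no connected $F\in\SHDk$ can have a surjective homomorphism onto a path of length $\geq 2^k$; consequently any homomorphism from such an $F$ into either cycle misses at least one marker, and a local bijection gives $\SHDk$-indistinguishability. Separation over $\HDk$ then comes from a path of length $2^k$ with singletons at both ends, which has $\hd=k$ and reaches both markers in $\GG_k$ but not in $\HG_k$. Part~2 uses the same reach idea with pendant ``handles'' instead of singletons and a cycle of length $2^{k+2}+2$, together with a sharpened reach bound for $\HDk$. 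The key conceptual ingredient you are missing is this path-reach lemma; without it (or an analogue) there is no mechanism forcing $\SHDk$-homs to be blind to the feature that $\HDk$-homs can detect.
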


For $k=1$ this is easy to see: A connected hypergraph has strict hypertree depth $1$ iff it consists of a single hyperedge, whereas a connected hypergraph has hypertree depth $1$ if one hyperedge contains all vertices.
It is therefore not hard to see that the statement of the theorem holds for $k=1$ using the hypergraphs depicted in \cref{fig:hd-hom-are-skewed}.
For $k \geq 2$ a similar idea for the construction of $(\GG_k, \HG_k)$ and $(\GG_k', \HG_k')$ works, but we defer the details to \cref{app:hd-hom-are-skewed} as keeping these (rather long) proofs in the main part would distract from the paper's main result. 

\section{k-Labeled Incidence Graphs}%
\label{sec:k-labeled-incidence-graphs}

Our goal is to give an inductive characterisation of the incidence graphs of strict hypertree depth at most $k$ (and thus also of hypergraphs of strict hypertree depth at most $k$).
The concepts presented in this section were first defined in~\cite{Scheidt2023}, and we adopt their notation and phrasing for the most part.
Note that the $k$-labeled incidence graphs defined here are inspired by the concept of $k$-labeled graphs as they are used in~\cite{Courcelle1993,Dvorak2010,Fluck2024,Lovasz2009} and elsewhere.
In particular, $k$-labeled graphs are the main tool used by Dvo\v{r}ák~\cite{Dvorak2010} to prove his result.
In principle, a \emph{$k$-labeled incidence graph} is an incidence graphs that has \emph{labels} attached to some of its red and blue vertices.
We have an unbounded number of red labels that can be attached to red vertices (though the number of labels actually used must always be finite), but we only have $k$ labels that we can attach to blue vertices.
We are allowed to attach multiple labels to the same vertex, but we are not required to use all of them.
Every red label has an assigned \enquote{guard}, which is a blue vertex with a label on it.
In practice, we will require every red labeled vertex to be a neighbour of its guard (i.e., we want it to have a \emph{real} guard, as defined in the next paragraphs), though it makes the proofs easier if we do not enforce this in the definition itself.
This idea is formalised as follows.

A $k$-labeled incidence graph is a tuple $\kLI = (\I, r, b, g)$, where $\I$ is an incidence graph, $r\colon \natpos \pto \R(\I)$, $b\colon [k] \pto \B(\I)$ and $g\colon \natpos \pto [k]$ are partial mappings such that $\dom(r)$ is finite and $\dom(g) = \dom(r)$.
We use $\I_\kLI$, $r_\kLI$, \dots to denote the components of $\kLI$.
But to keep the indices from getting overly complicated, we may write $\I', r', \dots$ and $\I_i, r_i, \dots$ instead of $\I_{\kLI'}, r_{\kLI'}$ and $\I_{\kLI_i}, r_{\kLI_i}, \dots$, respectively.
If it is clear from the context, we may omit the index altogether and simply write $\I, r, b, g$.

We say that $\kLI$ \emph{has real guards (w.r.t.\ $g$)}, if for every $i \in \dom(r)$ we have $g(i) \in \dom(b)$ and $(b(g(i)), r(i)) \in \E(\I)$.
A $k$-labeled incidence graph $\kLI$ is \emph{label-free} if $r_\kLI = b_{\kLI} = g_{\kLI} = \emptyset$.
We call $\I$ the \emph{skeleton} of $\kLI$.
Next, we define some operations on $k$-labeled incidence graphs.
For completeness, mathematically precise definitions can be found in \cref{app:k-labeled}.

For any set $\XR \subseteq \natpos$ of finite size $\ell$ and any tuple $\tupel{v} = (v_1, v_2, \dots, v_\ell) \in {\R(\I_{\kLI})}^n$ we write $\changeR{\kLI}{\XR}{\tupel{v}}$ to denote a copy of $\kLI$ where we modified $r$ such that $r(i_j) = v_j$ for all $j$ along the enumeration $\enum{ i_1, \dots, i_\ell }$ of $\XR$, i.e., we introduce, and change the placement of, some red labels.
Similarly, for any $\XB = \set{ i_1, \dots, i_\ell } \subseteq [k]$ and any $\tupel{e} = (e_1, e_2, \dots, e_\ell) \in {\B(\I_{\kLI})}^{\ell}$ we write $\changeB{\kLI}{\XB}{\tupel{e}}$ to denote a copy of $\kLI$ where we modified $b$ accordingly.
We write $\removeR{\kLI}{\XR}$ ($\removeB{\kLI}{\XB}$) to denote a copy of $\kLI$ where we \emph{removed} the red (blue) labels in $\XR$ ($\XB$).
Note, that we remove \emph{just} the labels and \emph{not} the vertices that carry them.
\Cref{def:precise-def:changing-labels} in \cref{app:k-labeled} gives mathematically precise definitions.

Intuitively, the \enquote{product} $(\kLI_1 \cdot \kLI_2)$ or \emph{glueing} of two $k$-labeled incidence graphs $\kLI_1$, $\kLI_2$ is the $k$-labeled incidence graph $\kLI$ that is created by first taking the disjoint union of $\kLI_1$ and $\kLI_2$, followed by repeatedly merging pairs of red (blue) vertices, that carry a shared red (blue) label.
By merging we mean that we replace these vertices by a single fresh vertex, which inherits their neighbourhoods and labels.
We apply this procedure until there are no more such pairs.
The guard function of $(\kLI_1 \cdot \kLI_2)$ is simply $g_{\kLI_1} \union g_{\kLI_2}$, i.e., in theory, $g_{\kLI_1}$ has precedence over $g_{\kLI_2}$.
In practice, we will require that $g_{\kLI_1}$ and $g_{\kLI_2}$ are compatible, which means the precedence of $g_{\kLI_1}$ will be irrelevant.
Note that the order in which we merge vertices does not matter, and that if a vertex carries two or more labels, all vertices carrying any one of these labels will be replaced by a single fresh vertex that carries \emph{all} those labels and inherits \emph{all} neighbourhoods.
Finally, for $i \in [2]$ we define mappings $\RvertexMap{\kLI_i}\colon \R(\I_{\kLI_i}) \to \R(\I_{\kLI})$ and $\BvertexMap{\kLI_i}\colon \B(\I_{\kLI_i}) \to \B(\I_{\kLI})$ such that $\RvertexMap{\kLI_i}(v)$ is the red vertex of	$\I_{\kLI}$ that corresponds to $v \in \R(\I_i)$, and $\BvertexMap{\kLI_i}(e)$ is the blue vertex of $\I_{\kLI}$ that corresponds to $e \in \B(\I_i)$.
\Cref{def:precise-def:glueop} in \cref{app:k-labeled} gives mathematically precise definitions.

\begin{example}\label{example:glueing}
	Consider the $k$-labeled incidence graphs $\kLI_1$, $\kLI_2$ according to \cref{fig:example:kli1,fig:example:kli2}.
	In particular, we have
	\begin{align*}&
		\begin{array}[pos]{r|cccc}
			i				& 1 & 2 & 3 & 5 \\\hline
			r_1(i)	& u & w & v & w \\
			r_2(i)	& u & \text{--} & v & w
		\end{array}&&
		\begin{array}[pos]{r|ccc}
			i				& 1 & 2 & 3 \\\hline
			b_1(i)	& f & g & h \\
			b_2(i)	& f & g & h 
		\end{array} &&\text{and} &&
		\begin{array}[pos]{r|cccc}
			i				& 1 & 2 & 3 & 5 \\\hline
			g_1(i)	& 2 & 1 & 1 & 1 \\
			g_2(i)	& 2 & \text{--} & 1 & 1
		\end{array}\quad.
	\end{align*}
	The product $(\kLI_1 \cdot \kLI_2)$ is depicted in \cref{fig:example:glueing}.
\end{example}

So far, we should not be allowed to remove a blue label from a vertex, if it serves as the guard of a red label.
But sometimes we want to \emph{transition} from one (real) guard assignment to another (real) guard assignment.
I.e., we want to remove blue labels even if they still guard some red labels, because we guarantee that we introduce new guards for these labels right away.
We formalise this operation as a special partial function, that assigns new guards to existing red labels: We call $f\colon \natpos \pto [k]$ a \emph{transition for $\kLI$ (for $g$)}, if $\emptyset \neq \dom(f) \subseteq \dom(g)$ and for all $i \in \dom(g)$ we have that if $g(i) \in \img(f)$, then $i \in \dom(f)$.
This means that if $f$ reassigns the blue label guarding the red label $i$, then $f$ provides a new guard for $i$.
Applying a transition, denoted by $\transition{\kLI}{f}$, means modifying a copy of $\kLI$ as follows: we want to insert fresh vertices with these blue labels, thus we must first remove all blue labels, that are currently in use, i.e., we must first remove the labels in the set $\XB \isdef \img(f) \intersect \dom(b_{\kLI}) \intersect \img(g_{\kLI})$ from $b$.
Notice that we have to intersect with $\dom(b_{\kLI})$ since we do not require $\kLI$ to have real guards.
After removing the labels in $\XB$, we insert $|\XB|$ new blue vertices into $\I_{\kLI}$, each carrying one of the blue labels in $\XB$, and introduce an edge between $b(f(i))$ and $r(i)$ for all $i \in \dom(f)$.
Finally, we redefine the guard function as $f \union g_{\kLI}$.

Note that this procedure can be easily expressed as the product $(\Mf \cdot \removeB{\kLI}{\XB})$ for a suitably defined $k$-labeled incidence graph $\Mf$.
\Cref{def:precise-def:transition} in \cref{app:k-labeled} gives a mathematically precise definition.

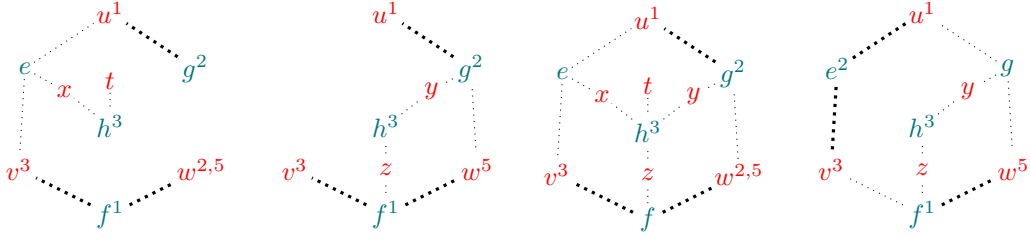
\begin{figure}[t]
	\centering
	\begin{subfigure}{0.24\textwidth}
		\centering

\begin{tikzpicture}[
	every node/.style={regular},
]
	\tikzmath{
		\dist=\exampled/4;
	}

	\node[red] (a) at (0,0) {$v^3$};
	\node[red] (b) at (\exampled,0) {$w^{2,5}$};
	\node[red] (c) at ($(f)+(0,\exampleh)$) {$u^1$};
	\node[red] (d) at ($(a)!.5!(c)$) {$x$};
	\coordinate (m) at ($(d)!.5!(e)$);

	\node[blue] (i) at ($(d)+(150:\dist)$) {$e$};
	\node[blue] (j) at ($(f)+(-90:\dist)$) {$f^1$};
	\node[blue] (k) at ($(e)+(30:\dist)$) {$g^2$};
	\node[blue] (l) at ($(m)!.45!(f)$) {$h^3$};

	\node[red] (g) at ($(l)+(90:{\dist})$) {$t$};

	\graph[edges={edge}]{
		(i) -- {
			(a), (d), (c)
		},
		(k) --[guard] (c),
		(j) --[guard] {
			(a), (b)
		},
		(l) -- {
			(d), (g)
		}
	};
\end{tikzpicture}
 		\caption{The $3$-labeled incidence graph $\kLI_1$.}%
		\label{fig:example:kli1}
	\end{subfigure}
	\hfill
	\begin{subfigure}{0.24\textwidth}
		\centering

\begin{tikzpicture}[
	every node/.style={regular},
]
	\tikzmath{
		\dist=\exampled/4;
	}

	\node[red] (a) at (0,0) {$v^3$};
	\node[red] (b) at (\exampled,0) {$w^5$};
	\node[red] (f) at ($(a)!.5!(b)$) {$z$};
	\node[red] (c) at ($(f)+(0,\exampleh)$) {$u^1$};
	\node[red] (e) at ($(b)!.5!(c)$) {$y$};
	\coordinate (m) at ($(d)!.5!(e)$);

	\node[blue] (j) at ($(f)+(-90:\dist)$) {$f^1$};
	\node[blue] (k) at ($(e)+(30:\dist)$) {$g^2$};
	\node[blue] (l) at ($(m)!.45!(f)$) {$h^3$};

	\graph[edges={edge}]{
		(k) -- {
			(e), (b)
		},
		(k) --[guard] (c),
		(j) --[guard] {
			(a), (b)
		},
		(j) -- (f),
		(l) -- {
			(e), (f)
		}
	};
\end{tikzpicture}
 		\caption{The $3$-labeled incidence graph $\kLI_2$.}%
		\label{fig:example:kli2}
	\end{subfigure}
	\begin{subfigure}{0.24\textwidth}
		\centering

\begin{tikzpicture}[
	every node/.style={regular},
]
	\tikzmath{
		\dist=\exampled/4;
	}

	\node[red] (a) at (0,0) {$v^3$};
	\node[red,rectangle] (b) at (\exampled,0) {$w^{2,5}$};
	\node[red] (f) at ($(a)!.5!(b)$) {$z$};
	\node[red] (c) at ($(f)+(0,\exampleh)$) {$u^1$};
	\node[red] (d) at ($(a)!.5!(c)$) {$x$};
	\node[red] (e) at ($(b)!.5!(c)$) {$y$};
	\coordinate (m) at ($(d)!.5!(e)$);

	\node[blue] (i) at ($(d)+(150:\dist)$) {$e$};
	\node[blue] (j) at ($(f)+(-90:\dist)$) {$f$};
	\node[blue] (k) at ($(e)+(30:\dist)$) {$g^2$};
	\node[blue] (l) at ($(m)!.45!(f)$) {$h^3$};

	\node[red] (g) at ($(l)+(90:{\dist})$) {$t$};

	\graph[edges={edge}]{
		(i) -- {
			(a), (d), (c)
		},
		(k) -- {
			(e), (b)
		},
		(k) --[guard] (c),
		(j) --[guard] {
			(a), (b)
		},
		(j) -- (f);
		(l) -- {
			(d), (e), (f), (g)
		}
	};
\end{tikzpicture}
 		\caption{An example of glueing: $(\kLI_1 \cdot \kLI_2)$.}%
		\label{fig:example:glueing}
	\end{subfigure}
	\hfill
	\begin{subfigure}{0.24\textwidth}
		\centering

\begin{tikzpicture}[
	every node/.style={regular},
]
	\tikzmath{
		\dist=\exampled/4;
	}

	\node[red] (a) at (0,0) {$v^3$};
	\node[red] (b) at (\exampled,0) {$w^5$};
	\node[red] (f) at ($(a)!.5!(b)$) {$z$};
	\node[red] (c) at ($(f)+(0,\exampleh)$) {$u^1$};
	\node[red] (e) at ($(b)!.5!(c)$) {$y$};
	\coordinate (m) at ($(d)!.5!(e)$);

	\node[blue] (i) at ($(d)+(150:\dist)$) {$e^2$};
	\node[blue] (j) at ($(f)+(-90:\dist)$) {$f^1$};
	\node[blue] (k) at ($(e)+(30:\dist)$) {$g$};
	\node[blue] (l) at ($(m)!.45!(f)$) {$h^3$};

	\graph[edges={edge}]{
		(i) --[guard] {
			(a), (c)
		},
		(k) -- {
			(e), (b), (c)
		},
		(j) -- {
			(a), (f)
		},
		(j) --[guard] (b),
		(l) -- {
			(e), (f)
		}
	};
\end{tikzpicture}
 		\caption{An example of applying a transition: $\transition{\kLI_1}{f}$.}%
		\label{fig:example:transition}
	\end{subfigure}
	\caption{$3$-labeled incidence graphs and operations on them.
	Labels are encoded as exponents and the guard function is encoded using thicker edges between the red vertex and its guard.}%
	\label{fig:example:k-labeled-ig-ops}
\end{figure}

\begin{example}
	Consider the $k$-labeled incidence graph from \cref{example:glueing} and \cref{fig:example:kli1}.
	The partial function $f = \set{ 1 \to 2, 3 \to 2 }$ is a transition for $\kLI_1$.
	The result $\transition{\kLI_1}{f}$ of the application of $f$ on $\kLI_1$ is depicted in \cref{fig:example:transition}.
\end{example}

We define the class $\GLI_k^i$ of $k$-labeled incidence graphs that can be constructed in a way that at most $i$ blue labels are removed \enquote{in series}.
\begin{definition}\label{def:k-label-class}
	For $k \in \natpos$ and $i \in \nat$ we define the set $\GLI_k^i$ inductively as follows.
	\begin{description}
		\item[Base case:]\label{def:k-label-class:base}
		$\kLI \in \GLI_k^0$ for all $k$-labeled incidence graphs $\kLI$ with $\dom(r) = \R(\I)$, $\dom(b) = \B(\I)$ and real guards.

		For all $i \in \nat$, if $\kLI \in \GLI_k^i$, then\, $\kLI \in \GLI_k^{i+1}$.
		\smallskip
		
		\item[Glueing]\label{def:k-label-class:glue}
		Let $\kLI_1 \in \GLI_k^{i_1},\ \kLI_2 \in \GLI_k^{i_2}$ have compatible guard functions and $\kLI = (\kLI_1 \cdot \kLI_2)$.\\
		Then,\; $\kLI \in \GLI_k^i$\; where\; $i \isdef \max \set{ i_1, i_2 }$.
		\smallskip
		
		\item[Transitioning]\label{def:k-label-class:transition}
		Let $\kLI \in \GLI_k^i$, let $f$ be a transition for $\kLI$ and $\kLI' = \transition{\kLI_1}{f}$.\\
		Then,\; $\kLI' \in \GLI_k^{i'}$\; where\; $i' \isdef i + |\img(f) \intersect \img(b_{\kLI}) \intersect \img(g_{\kLI})|$.
		\smallskip

		\item[Label-Removal]\label{def:k-label-class:label-remove}
		Let $\kLI \in \GLI_k^{i}$.
		\begin{enumerate}[(a)]
			\item\label{def:k-label-class:red-remove}
			For $\XR \subseteq \dom(r)$,\; $\removeR{\kLI}{\XR} \in \GLI_k^{i}$.
			\smallskip
	
			\item\label{def:k-label-class:blue-remove}
			For $\XB \subseteq \dom(b) \setminus \img(g)$,\; $\removeB{\kLI}{\XB} \in \GLI_k^{i'}$\; where\; $i' \isdef i + |\XB|$.
		\end{enumerate}
	\end{description}
	\smallskip
	Finally, we let $\GLIDk \isdef \GLI_k^k$ for every $k \in \nat$.
\end{definition} 

\section{Characterising Hypergraphs of Strict Hypertree Depth at most k}%
\label{sec:characterising-shd}

In this section we prove that the inductively defined class $\GLIDk$ corresponds precisely to the class $\ISHDk$.
\begin{restatable}{theorem}{inductivedefhd}\label{thm:inductive_def_hypertree_depth}
	An incidence graph $\J$ has strict hypertree depth at most $k$ if, and only if, there exists a label-free $\kLI \in \GLIDk$ such that $\I_\kLI \isomorphic \J$.
\end{restatable}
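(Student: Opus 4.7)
The plan is to establish both directions by induction: the forward direction ($\Rightarrow$) by induction on the height of a strict elimination forest, and the backward direction ($\Leftarrow$) by structural induction on the construction of $\kLI$ via \cref{def:k-label-class}.

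For the forward direction, I would take a strict elimination forest $(\F, \Gamma)$ of $\J$ of height $\leq k$. By \cref{lem:restrict_to_connected}, it suffices to handle a single tree of $\F$, since disjoint trees can then be assembled by label-free glueing. Let $\troot$ be the root, $e \isdef \Gamma(\troot)$, and $c_1, \dots, c_m$ its children. Each subtree $\F_{c_j}$ is a strict elimination forest of height $\leq k{-}1$ for the induced subhypergraph on its blue vertices, and shared heritage forces any red vertex shared across these subtrees or with $e$ itself to lie in $\blueN(e)$. Inductively, I obtain for each $c_j$ a $k$-labeled incidence graph $\kLI_j \in \GLI_k^{k-1}$ carrying a distinguished blue label for $e$ (so that glueing can identify it) and red labels on the vertices of $\blueN(e)$ that the subtree uses. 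I then glue the $\kLI_j$ together with a ``seed'' $k$-labeled incidence graph from $\GLI_k^0$ consisting of $e$ and its red neighbours (fully labeled), remove the red labels (no level change), and finally remove the blue label on $e$, which raises the level to $k$. The result is a label-free member of $\GLIDk$ whose skeleton is isomorphic to the processed tree.

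For the backward direction, I would prove by structural induction the following invariant: \emph{for every $\kLI \in \GLI_k^i$ with real guards, $\I_\kLI$ admits a strict elimination forest $(\F, \Gamma)$ whose labeled blue vertices $b_\kLI(\dom(b_\kLI))$ form a disjoint union of root-paths at the top of $\F$, while the portion occupied by unlabeled blue vertices has height at most $i$.} Specialising to $\kLI$ label-free and $i = k$ yields $\height(\F) \leq k$, hence $\shd(\I_\kLI) \leq k$. The base case ($i = 0$, fully labeled, $|\B(\I_\kLI)| \leq k$) is satisfied by any chain ordering of the blue vertices; shared heritage is automatic because $\blueN(e_s) \intersect \blueN(e_t) \subseteq \blueN(e_s)$ whenever $e_s \leq_\F e_t$ in a chain. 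Transitioning extends the root-paths upward with the newly introduced labeled vertices while the $|\XB|$ old ones descend to the top of the unlabeled subforest, matching the prescribed increase of $i$. Red-label removal is trivial, and blue-label removal transfers the bottoms of the labeled paths into the unlabeled portion, again consistent with the $|\XB|$ increase.

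The main obstacle will be the glueing case. When merging $\kLI_1$ and $\kLI_2$, the shared labels must correspond to a common set of vertices lying on the root-paths of both inductively given forests, and the remaining labeled vertices have to be interleaved into a single root-path of the combined forest. Compatibility of the guard functions handles the red-label side, but I must verify that shared heritage is preserved across all pairs of blue vertices from the two sides---in particular, that any red vertex appearing below unlabeled vertices of both $\kLI_1$ and $\kLI_2$ has its intersection routed through a common labeled ancestor. The flexibility of arbitrary chain orderings granted by the base case should make this reconciliation possible, but justifying it carefully---and showing that the interleaved root-path does not spuriously inflate the height of the unlabeled subforest---constitutes the core technical work.
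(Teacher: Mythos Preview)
Your forward direction contains a genuine gap. The assertion that ``each subtree $\F_{c_j}$ is a strict elimination forest of height $\leq k{-}1$ for the induced subhypergraph on its blue vertices'' is false: shared heritage for two nodes $s,t$ inside $\F_{c_j}$ may be witnessed only by a node \emph{above} $c_j$ in the original forest, and that witness disappears once you restrict. Concretely, take a root $\troot$ with child $c$ and grandchildren $s,t$, and set $\hGamma(\troot)=\{v,w\}$, $\hGamma(c)=\{w,x\}$, $\hGamma(s)=\{v,y\}$, $\hGamma(t)=\{v,z\}$; the full tree is a valid strict elimination forest, but in the subtree rooted at $c$ we have $v\in\hGamma(s)\cap\hGamma(t)$ while $v\notin\hGamma(c)$, so \cref{def:hd:shared-heritage} fails there. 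Consequently the plain inductive hypothesis (the theorem statement itself) does not apply to the children, and even if it did it would hand you \emph{label-free} objects---so the $\kLI_j$ you want, ``carrying a distinguished blue label for $e$'', cannot come out of it. The paper repairs this by proving a strictly stronger statement (\cref{lem:bounded-shd-in-glik}): for every node $n$ of level $d$ it constructs $\kLI_n\in\GLI_k^{k-d}$ whose skeleton is the subgraph induced by the \emph{subtree with stem} $\Ts_n$, i.e.\ the subtree below $n$ together with the entire path from $n$ to the root, with blue labels on all stem hyperedges and red labels on all their vertices. The problematic shared vertices then always live on the labeled stem, and each child's object already contains the parent's hyperedge as a labeled blue vertex ready for glueing.

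Your backward direction is plausible but takes a heavier route than the paper. You propose to maintain a full strict elimination forest in which the labeled blue vertices occupy root-paths; as you correctly anticipate, the crux is glueing, where two such root-paths sharing a merged vertex must be linearised into one while preserving shared heritage across both unlabeled subforests. The paper sidesteps this entirely: its invariant (\cref{lem:glik-in-bounded-shd}) builds a forest \emph{only on the unlabeled} blue vertices, paired with a relaxed shared-heritage condition that allows a common red vertex to ``escape'' into the neighbourhood of any still-labeled blue vertex. Under that invariant glueing is merely the disjoint union of the two forests---no interleaving is needed---and removing blue labels (or transitioning) just prepends the freshly unlabeled vertices as a chain of new roots above everything. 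Your approach can likely be pushed through, but the paper's formulation makes each inductive case essentially trivial.
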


In the following, we first show how to construct an incidence graph of strict hypertree depth at most $k$ as the skeleton of a label-free $k$-labeled incidence graph in $\GLIDk$ (\cref{lem:bounded-shd-in-glik}).
Then we show that every label-free $\kLI \in \GLIDk$ has strict hypertree depth at most $k$ (\cref{lem:glik-in-bounded-shd}).
\cref{thm:inductive_def_hypertree_depth} follows directly from the combination of these two Lemmata.

For the rest of this section, let $\J \in \ISHDk$ and let $(\T, \Gamma)$ be a strict elimination forest of height $\leq k$ for $\J$.
We can w.l.o.g.\ assume that $\J$ is connected and that $\T$ is a tree (\cref{lem:restrict_to_connected}).
Let $\R(\I) = \set{ v_1, v_2, \dots, v_m }$ where $m \geq 1$.

$(\T, \Gamma)$ will help us decide when to remove (i.e., eliminate) which label from which blue vertex in the following sense.
The core idea is to start with a trivial $k$-labeled incidence graph for every path from a leaf to the root in the elimination tree.
Then we walk bottom-up along these paths and whenever several paths join in a node, we apply red and blue vertex removals in a suitable way on their $k$-labeled incidence graphs, such that afterwards we can glue them together and receive a $k$-labeled incidence graph that is isomorphic to the incidence graph induced by the union of said paths.
This idea is illustrated in \cref{app:example-construction} in \cref{app:characterising-hd}.

For this we need the following notions: 
for a $k$-labeled incidence graph $\kLI$, we say that a red vertex $v \in \R(\I_{\kLI})$ is \emph{unlabeled}, if $v \not\in \img(r_{\kLI})$.
Analogously, a blue vertex $e \in \B(\I_{\kLI})$ is \emph{unlabeled}, if $e \not\in \img(b_{\kLI})$.
For a node $n$ in a tree $\T$, the \emph{subtree with stem induced by $n$} is the tree $\Ts_n$ induced on $\T$ by the set $\p(n) \union \set{ t \in \V(\T) \mid n \leq_\T t }$.
Recall that $\p(n)$ is the set of nodes on the path from $n$ to the root $\troot$ (including $n$ and $\troot$), and notice that the subtree with stem induced by the root is $\T$, i.e., $\T_\troot = \T$, and for every leaf $n$ it is the path from the root $\troot$ to $n$.
For every node $n \in \V(\T)$ we define the set $\labels(n) \isdef \set{ i \in [m] \mid v_i \in \hGamma(n) }$.
To avoid an overload of notation, we will write $\labels(N)$ to denote the set $\bigunion_{n \in N} \labels(n)$ and write $\induced{\J}{\Ts_n}$ to abbreviate $\induced{\J}{\hGamma(\V(\Ts_n))}$.
\begin{restatable}{lemma}{boundedhdinglik}\label{lem:bounded-shd-in-glik}
	For every $n \in \V(\T)$ of level $d$ where $\enum{ t_1, \dots, t_d }$ is a $\leq_\T$-enumeration of $\p(n)$ (i.e., in particular $t_1 = \troot$, $t_d = n$), there exists an $\kLI_n \in \GLI_k^{k-d}$ of the form $(\I, r, b, g)$ such that
	\begin{enumerate}[(A)]
		\crefalias{enumi}{condition}
		\item\label{cond:bounded-shd-in-glik:labels}
		$\dom(b) = [d]$\, and \
		$\dom(r) = \labels(\p(n))$;
		\item\label{cond:bounded-shd-in-glik:largest-guard}
		$g(i) \isdef \min \set{ j \in [d] \mid v_i \in \hGamma(t_j) }$ for every $i \in \dom(g)$;
		\item\label{cond:bounded-shd-in-glik:isomorphism}
		There exists an isomorphism $(\isoR, \isoB)$ between $\I$ and $\induced{\J}{\Ts_n}$ such that
		\begin{enumerate}[(i)]
			\item $\isoR(r(i)) = v_i$ for all $i \in \dom(r)$,\; and
			\item $\isoB(b(j)) = \Gamma(t_j)$ for all $j \in \dom(b)$.
		\end{enumerate}
	\end{enumerate}
	
\end{restatable}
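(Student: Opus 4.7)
The proof proceeds by induction on the height of the subtree $\T_n$ rooted at $n$ in $\T$, working bottom-up. For the base case, $n$ is a leaf, so $\Ts_n = \p(n)$ and $\induced{\J}{\Ts_n}$ is the incidence graph on the blue vertices $\Gamma(t_1), \dots, \Gamma(t_d)$ together with their adjacent red vertices. I set $\I \isdef \induced{\J}{\Ts_n}$, $b(j) \isdef \Gamma(t_j)$ for $j \in [d]$, $r(i) \isdef v_i$ for $i \in \labels(\p(n))$, and $g(i) \isdef \min \set{ j \in [d] \mid v_i \in \hGamma(t_j) }$. Then $\img(r) = \R(\I)$, $\img(b) = \B(\I)$, and every guard is real, so $\kLI_n \isdef (\I, r, b, g) \in \GLI_k^0 \subseteq \GLI_k^{k-d}$; conditions~(A)--(C) are immediate.

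For the inductive step, let $n$ be internal with children $c_1, \dots, c_s$, all at level $d{+}1$. By induction, each $\kLI_{c_j} \in \GLI_k^{k-d-1}$ satisfies~(A)--(C) relative to $c_j$. The plan is to strip from each $\kLI_{c_j}$ the labels that must \emph{not} be shared across siblings, and then glue. Let $R_j \isdef \labels(\p(c_j)) \setminus \labels(\p(n))$; by~(B) these are precisely the red labels in $\kLI_{c_j}$ with guard $d{+}1$, i.e.\ the fresh blue label sitting on $\Gamma(c_j)$. Form $\kLI_{c_j}' \isdef \removeR{\kLI_{c_j}}{R_j} \in \GLI_k^{k-d-1}$. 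After this removal, every remaining red label has guard $\leq d$, so $d{+}1 \notin \img(g_{\kLI_{c_j}'})$, and the blue-label-removal rule yields $\kLI_{c_j}'' \isdef \removeB{\kLI_{c_j}'}{\set{d{+}1}} \in \GLI_k^{k-d}$. All $\kLI_{c_j}''$ now share the same guard function (the one prescribed in~(B) relative to $n$) and are pairwise compatible, so they may be glued in any order to obtain $\kLI_n \isdef \kLI_{c_1}'' \cdot \kLI_{c_2}'' \cdot \ldots \cdot \kLI_{c_s}''$, which remains in $\GLI_k^{k-d}$ by the max rule for glueing.

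Conditions~(A) and~(B) for $\kLI_n$ follow by bookkeeping of labels through these operations. The main obstacle is~(C): the isomorphism with $\induced{\J}{\Ts_n}$. The blue side is painless, since the labels $[d]$ sit exactly on the shared hyperedges $\Gamma(t_1), \dots, \Gamma(t_d)$ common to all $\Ts_{c_j}$, and $\Gamma$ is injective in a strict elimination forest. The red side is where shared heritage (\cref{def:hd:shared-heritage}) is essential: if $v_i$ occurs in both $\induced{\J}{\Ts_{c_j}}$ and $\induced{\J}{\Ts_{c_{j'}}}$ for distinct children $c_j, c_{j'}$, witnessed by $s \in \V(\Ts_{c_j})$, $s' \in \V(\Ts_{c_{j'}})$ with $v_i \in \hGamma(s) \intersect \hGamma(s')$, then $\lcv(s,s') \leq_\T n$ forces $v_i \in \bigunion \hGamma(\p(n))$, so $i \in \labels(\p(n))$. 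Hence every red vertex shared across two subtrees is labeled in every $\kLI_{c_j}''$ and is correctly merged by the glueing, while any unlabeled red vertex of $\kLI_{c_j}''$ sits in a single subtree and is correctly left unmerged. Together with the inductive isomorphisms this yields~(C).
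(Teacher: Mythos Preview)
Your proof is correct and follows essentially the same approach as the paper's: a bottom-up induction (equivalently, inverse induction on the level), with the base case at leaves directly in $\GLI_k^0$, and the inductive step removing from each child's object the red labels newly introduced at that child, then the blue label $d{+}1$, then glueing. Your identification of $R_j$ with the set of red labels having guard $d{+}1$ matches the paper's computation (note $\labels(\p(c_j))\setminus\labels(\p(n))=\labels(c_j)\setminus\labels(\p(n))$ since $\p(c_j)=\p(n)\cup\{c_j\}$), and your use of shared heritage to justify well-definedness of the red part of the isomorphism is exactly the paper's Case~2 argument for~(C).
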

Notice that, in particular, this lemma states $\J \isomorphic \I_{\kLI_{\troot}}$ for the root $\troot$ of $\T$.
But $\kLI_{\troot}$ is not label-free, since $\dom(b_{\kLI_\troot}) = \set{ 1 }$ and $\dom(r_{\kLI_{\troot}}) = \labels(\p(\troot)) = \labels(\troot)$.
But, the lemma also states that $\kLI_{\troot} \in \GLI_k^{d-1}$, since $\level(\troot) = 1$.
Thus, $\J \isomorphic \I_{\kLI'}$ for $\kLI' = \removeB{\removeR{\kLI_{\troot}}{\labels(\troot)}}{\set{ 1 }}$, and in particular, $\kLI' \in \GLI_k^k$ is label-free.
Thus, this lemma shows the forward direction of \cref{thm:inductive_def_hypertree_depth}.

\begin{proof}[Proof of \cref{lem:bounded-shd-in-glik}]
We prove this claim by (inverse) induction over the level of the node $n \in \V(\T)$.
\vspace*{-\abovedisplayskip}
\proofsubparagraph*{Base Case.}
Let  $n$ be a leaf, then $\Ts_n$ is simply the path from $n$ to the root $\troot$.

We let $\kLI_n = (\induced{\J}{\Ts_n}, r, b, g)$ where $r(i) \isdef v_i$ and $g(i) \isdef \min \set{j \in [k] \mid v_i \in \hGamma(t_j)}$ for all $i \in [m]$ and $b(j) \isdef t_j$ for all $j \in [d]$.
$\kLI_n \in \GLI_k^0$ according to the base case of \cref{def:k-label-class}, and thus also $\kLI_n \in \GLI_k^{k-d}$ and \cref{cond:bounded-shd-in-glik:isomorphism,cond:bounded-shd-in-glik:largest-guard,cond:bounded-shd-in-glik:labels} are easily verified.
\vspace*{-\abovedisplayskip}
\proofsubparagraph{Inductive Step.}
Let $n \in \V(\T)$ be a node of level $d$ that is not a leaf.
Let $n_1, \dots, n_\ell$ be the children of $n$ with $\ell \geq 1$ and assume that 
for every $i \in [\ell]$ there exist $\kLI_i \in \GLI_k^{k-(d{+}1)}$ and $(\isoR^i, \isoB^i)$ such that \cref{cond:bounded-shd-in-glik:labels,cond:bounded-shd-in-glik:largest-guard,cond:bounded-shd-in-glik:isomorphism} are satisfied.

For every child $n_i$ let $\XR^{i} \isdef \labels(n_i) \setminus \labels(\p(n))$ and let $\kLI_i' \isdef \removeB{\removeR{\kLI_i}{\XR^i}}{\XB}$ with $\XB \isdef \set{ d{+}1 }$.
We let $\kLI_n \isdef \kLI'_1 \cdot \kLI'_2 \cdot \cdots \cdot \kLI'_\ell$ if $\ell \geq 2$, otherwise $\kLI_n \isdef \kLI_1$.
We claim that $\kLI_n \in \GLI_k^d$ and that it satisfies \cref{cond:bounded-shd-in-glik:labels,cond:bounded-shd-in-glik:largest-guard,cond:bounded-shd-in-glik:isomorphism}.

Let $n_i$ be any child of $n$.
Obviously, $\removeR{\kLI_i}{\XR^i} \in \GLI_k^{k-(d+1)}$.
Let $j \in \dom(r_i)$ such that $g_i(j) = d{+}1$.
By induction hypothesis, this means that $n_i$ is the minimal node according to $\leq_\T$ where $v_j$ appears in $\hGamma$.
Thus, $j \in \labels(n_i)$ but $j \not\in \labels(\p(n))$.
Therefore, $j \in \XR^i$.
Thus, $d{+}1 \not\in \img(g_{\removeR{\kLI_i}{\XR^i}})$.
By \cref{def:k-label-class} this means that 
\begin{equation}
	\removeB{\removeR{\kLI_i}{\XR^i}}{\XB} \in \GLI_k^{k-d}.
\end{equation}
By induction hypothesis, we have $\dom(b_i) = [d{+}1]$.
Thus, $\dom(b_{\kLI_i'}) = [d]$.\\
Since $\p(n)$ is a subset of $\p(n_i)$, $\labels(\p(n))$ is also a subset of $\labels(\p(n_i))$, which means $\labels(\p(n)) \subseteq \dom(r_i)$ by induction hypothesis.
By definition of $\XR^i$, $\labels(\p(n))$ is also a subset of $\dom(r_{\kLI'_i})$.
Let $j \in \dom(r_{\kLI'_i})$ be any red label in $\kLI_i'$.
Since $j \not\in \XR^i$, $j \not\in \labels(n_i)$ or $j \in \labels(\p(n))$.
In both cases, this implies $j \in \labels(\p(n))$.

Since all this holds for every child and their guard functions are compatible according to the induction hypothesis, it is easy to see that $\kLI_n \in \GLI_k^{k-d}$ and that it satisfies \cref{cond:bounded-shd-in-glik:labels}, whereas \cref{cond:bounded-shd-in-glik:largest-guard} follows directly from the induction hypothesis.

It remains to verify \cref{cond:bounded-shd-in-glik:isomorphism} for $n$.
For the following, keep in mind that $\I_{\kLI_i'} = \I_i$ for all $i \in [\ell]$, since the operations we applied to create $\kLI_i'$ only changed the functions $r_i$, $b_i$, $g_i$.

For every unlabeled vertex $e \in \B(\I)$ or $v \in \R(\I)$ we know that there is a unique child $n_{i_e}$ or $n_{i_v}$ of $n$ and a unique $e' \in \B(\I_{i_e})$ or $v' \in \R(\I_{i_v})$ such that $\BvertexMap{\kLI_{i_e}'}(e') = e$ or $\RvertexMap{\kLI_{i_v}'}(v') = v$ and the other way around.

Let $\isoR$ and $\isoB$ be defined by 
\begin{alignat*}{4}
	\isoB(b(j)) &= \Gamma(t_j)\enspace &&\all \ j \in \dom(b),\quad &\isoB(e) &= \isoB^{i_e}(e') \enspace &&\all \ e \not\in \img(b) \quad\text{and} \\
	\isoR(r(j)) &= v_j \ &&\all \ j \in \dom(r),\; &\isoR(v) &= \isoR^{i_v}(v') \ &&\all \ v \not\in \img(r).
\end{alignat*}
\Cref{cond:bounded-shd-in-glik:isomorphism} is satisfied if $(\isoR, \isoB)$ is indeed an isomorphism.
We first show that $\isoR$ and $\isoB$ are bijective.
They clearly are total mappings, thus we have to show that they are surjective and injective.
Let $v_i \in \R(\induced{\J}{\Ts_n})$ be a vertex.
Then $v_i \in \hGamma(\V(\Ts_n))$.
\smallskip

\noindent\emph{Case 1}: If $v_i \in \R(\induced{\J}{\Gamma(\p(n))})$, then by definition, $i \in \labels(\p(n))$.
I.e., $\isoR(r(i)) = v_i$.
Assume there exists another $v' \in \R(\I)$ such that $r(i) \neq v'$ but $\isoR(v') = v_i$.
Then there must be a unique child $n_{j}$ of $n$ such that $\isoR(v') = \isoR^{j}(v') = v_i$.
Since $v_i \in \labels(\p(n))$, $v_i$ is also in $\labels(\p(n_j))$.
Thus, by induction hypothesis, $v' \in \img(r_j)$ and $r_j(i) = v'$.
Since $v' \not\in \img(r)$, the label must have been removed, i.e., $i \in \XR^j$.
This means $i \in \labels(n_j) \setminus \labels(\p(n))$, which contradicts the assumption that $i \in \labels(\p(n))$.
Thus, $v'$ can not exist.
\smallskip

\noindent\emph{Case 2}: Otherwise, $v_i \in \R(\induced{\J}{\Ts_{n_j}})$ for a unique child $n_j$ of $n$.
If there were more than one child containing $v_i$, this would imply the first case, i.e., $v_i \in \R(\induced{\J}{\Gamma(\p(n))})$,  because of \cref{def:hd:shared-heritage} of \cref{def:hd}.
By induction hypothesis, there exists a $v' \in \R(\I_j)$ such that $\isoR^{j}(v') = v_i$.
Since $i \not\in \labels(\p(n))$, either $v' \not\in \img(r_j)$ or $i \in \XR^j$.
In both cases, $v' \not\in \img(r_{\kLI_j'})$.
Thus, for $v \in \R(\I)$ such that $\RvertexMap{\kLI_j'}(v') = v$ we get $\isoR(v) = v_i$.
Since $i \not\in \labels(\p(n))$, this $v$ is unique.
Hence, in total, $\isoR$ is bijective.

Analogously, it is easy to see that $\isoB$ is bijective since every $e \in \B(\induced{\J}{\Ts_n})$ is either in the stem or in $\B(\induced{\J}{\Ts_{n_i}})$ for precisely one child $n_i$ of $n$.
It remains to show that $(e,v) \in \E(\I)$ iff $(\isoB(e), \isoR(v)) \in \E(\induced{\J}{\Ts_n})$.

$(e,v) \in \E(\I)$ holds, iff there is a (not necessarily unique) $j \in [\ell]$ and $e' \in \B(\I_j)$, $v' \in \R(\I_j)$ such that $\BvertexMap{\kLI_j'}(e') = e$, $\RvertexMap{\kLI_j'}(v') = v$ and $(e', v') \in \E(\I_j)$.
By induction hypothesis, this is the case iff $(\isoB^j(e'), \isoR^j(v')) \in \E(\induced{\J}{\Ts_{n_j}})$.
It is easy to verify that, $(\isoB^j(e'), \isoR^j(v')) \in \E(\induced{\J}{\Ts_{n_j}})$ if, and only if, $(\isoB(\BvertexMap{\kLI_j}(e')), \isoR(\RvertexMap{\kLI_j}(v'))) \in \E(\induced{\J}{\Ts_n})$.
Thus, $(e,v) \in \E(\I) \iff (\isoB(e), \isoR(v)) \in \E(\induced{\J}{\Ts_n})$.
 \end{proof}

The following lemma can be shown by induction.
On a high level, the idea of the proof is to only modify the elimination forest, if blue labels are removed.
At that point, we prepend a chain of new nodes to the root(s) of the elimination forest.
If we take the product of two $k$-labeled incidence graphs, we take the union of the forests, and if we remove red labels, we do not alter the forest at all.%
\begin{restatable}{lemma}{glikinboundedhd}\label{lem:glik-in-bounded-shd}
	For every $\kLI \in \GLI_k^d$ of the form $(\I, r, b, g)$ there is a tuple $(\F, \Gamma)$, where $\F$ is a forest of height $\leq d$ and $\Gamma$ is a bijective function from $\V(\F)$ to $\B(\I) \setminus \img(b)$ satisfying \cref{cond:glik-in-bounded-shd:unlabeled-shared-heritage}.
	We write $\hGamma(t)$ as a shorthand for $\blueN(\Gamma(t))$ and $\ulGamma(t)$ as a shorthand for $\hGamma(t) \setminus \img(r)$.
	\begin{enumerate}[(A)]
		\crefalias{enumi}{condition}
		\item\label{cond:glik-in-bounded-shd:unlabeled-shared-heritage}
		For all $s,t \in \V(\F)$, and all $v \in \ulGamma(s) \intersect \ulGamma(t) \neq \emptyset$ it holds that:\\
		$v \in \blueN(b(j))$ for a $j \in \dom(b)$ or $\lcv(s,t)$ is defined and $v \in \bigunion \ulGamma(\p(\lcv(s,t)))$.
	\end{enumerate}
	
\end{restatable}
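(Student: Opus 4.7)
The plan is to prove the lemma by induction on the construction of $\kLI \in \GLI_k^d$ per \cref{def:k-label-class}, building $(\F, \Gamma)$ case by case. Along the way I would also maintain the auxiliary invariant that every $\kLI \in \GLI_k^d$ has real guards, which is needed in the red-removal step; checking this invariant for each construction case is routine. For the base case of \cref{def:k-label-class} and the inclusion $\GLI_k^i \subseteq \GLI_k^{i+1}$, the forest $(\F, \Gamma)$ is either empty (since then $\B(\I) \setminus \img(b) = \emptyset$) or reused unchanged from the induction hypothesis.

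For the glueing case, I would take the disjoint union $\F \isdef \F_1 \disunion \F_2$ of the IH forests and define $\Gamma(t) \isdef \BvertexMap{\kLI_m}(\Gamma_m(t))$ for $t \in \V(\F_m)$. The height bound is immediate. The crucial observation is that glueing only identifies vertices that share a label, so an unlabeled vertex, red or blue, cannot be merged away. This yields bijectivity of $\Gamma$ onto $\B(\I) \setminus \img(b)$. For \cref{cond:glik-in-bounded-shd:unlabeled-shared-heritage} the same-side cases follow from the IH after translating along $\RvertexMap{\kLI_m}$ and $\BvertexMap{\kLI_m}$. For the cross-side case ($s \in \F_1$, $t \in \F_2$), any red vertex $v \in \hGamma(s) \cap \hGamma(t)$ must arise from merged preimages in $\I_1$ and $\I_2$, which requires a shared red label, making $v$ labeled in $\kLI$; hence the cross-side $\ulGamma$-intersection is empty and the condition holds vacuously.

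Red-label removal $\removeR{\kLI}{\XR}$ keeps the IH forest unchanged. A newly unlabeled red vertex $v = r_\kLI(j)$ with $j \in \XR$ appearing in $\ulGamma'(s) \cap \ulGamma'(t)$ satisfies the first disjunct of condition (A) directly: by the real-guards invariant, $v \in \blueN(b(g(j)))$ with $g(j) \in \dom(b)$, and red removal leaves $b$ unchanged. For blue-label removal $\removeB{\kLI}{\XB}$, let $p$ be the number of blue vertices that lose their last label; clearly $p \leq |\XB|$. I would prepend a chain of $p$ fresh nodes above the roots of the IH forest, mapping each bijectively to one of the newly unlabeled blue vertices. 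The resulting height is at most $i + p \leq i + |\XB|$, matching the parameter increase. Condition (A) for pairs involving chain nodes is trivial since chain nodes are ancestors of every old node. For old-old pairs, if the IH delivers the first disjunct via some $j \in \XB$ for which $b_\kLI(j)$ now becomes unlabeled, then the corresponding chain node sits on $\p(\lcv(s,t))$ and carries $v$ in its $\ulGamma'$-image, yielding the second disjunct. The transition case reduces to blue removal via its $(\Mf \cdot \removeB{\kLI}{\XB})$ presentation, combined with the already-handled glueing case.

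The main obstacle I anticipate is the glueing case: verifying that $\Gamma$ really bijects onto the unlabeled blue vertices \emph{and} that cross-component $\ulGamma$-intersections vanish both hinge on the observation that merging is driven exclusively by shared labels, and translating \cref{cond:glik-in-bounded-shd:unlabeled-shared-heritage} across the maps $\RvertexMap{\kLI_m}, \BvertexMap{\kLI_m}$ requires careful bookkeeping -- particularly when the "escape hatch" witness $b(j)$ supplied by the IH on one factor must be re-interpreted in the product. The blue-removal step, while conceptually straightforward, similarly demands care in showing that a single chain node correctly serves every old pair whose IH witness was a label in $\XB$ attached to the same old blue vertex.
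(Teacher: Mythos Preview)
Your plan is essentially the paper's proof: same induction on \cref{def:k-label-class}, same empty forest in the base case, same disjoint-union forest for glueing, same ``keep the forest'' for red removal, and same ``prepend a chain above all roots'' for blue removal. Your explicit real-guards invariant is exactly what the paper uses (without isolating it as a separate claim) in the red-removal step, and your cross-side argument for glueing---merged red vertices must carry a label, hence $\ulGamma$-intersections across components vanish---is the contrapositive of the paper's ``both $s,t$ land in the same $\F_m$'' argument.

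One caveat on your transition step: the reduction to ``blue removal plus glueing with $\Mf$'' is morally right but not literally an appeal to already-handled cases. In the decomposition $\transition{\kLI}{f} = (\Mf \cdot \removeB{\kLI}{\XB})$ the set $\XB = \img(f)\cap\dom(b)\cap\img(g)$ lies \emph{inside} $\img(g)$, so $\removeB{\kLI}{\XB}$ is not produced by the blue-removal rule of \cref{def:k-label-class} (which requires $\XB \subseteq \dom(b)\setminus\img(g)$) and hence is not itself an element of any $\GLI_k^i$ to which you can apply the induction hypothesis. What does go through is reusing the \emph{construction} (prepend a chain for the newly unlabeled blue vertices), observing that $\Mf$ contributes only labeled blue vertices so the subsequent glueing adds nothing to the forest. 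The paper sidesteps this by treating transitioning as its own inductive case with a proof that is verbatim the blue-removal argument; your route is a legitimate shortcut once you make clear you are recycling the construction rather than invoking a previously established instance.
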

\noindent Notice that, if $\kLI \in \GLI_k^d$ is label-free, this guarantees a strict elimination forest $(\F, \Gamma)$ of height $d$ for $\I_\kLI$.
This shows the backward direction of \cref{thm:inductive_def_hypertree_depth}.

\begin{proof}[Proof of \cref{lem:glik-in-bounded-shd}]
We prove this by induction over the definition of $\GLI_k^d$.
\begin{description}
	\item[$\vartriangleright$\ Base Case]
	Let $\kLI \in \GLI_k^0$.
	It is easy to see that the empty forest together with the empty map proves the claim for $\kLI$.
	\medskip
	\item[$\vartriangleright$\ Inductive Step]
	Let $\kLI_1 \in \GLI_k^{d_1}$, $\kLI_2 \in \GLI_k^{d_2}$.
	For convenience, let $\kLI_i = (\I_i, r_i, b_i, g_i)$.
	\smallskip
	\item[Induction Hypothesis] 
	There exist $(\F_i, \Gamma_i)$ proving the lemma for $\kLI_i$ with $i \in [2]$.
	We assume for convenience, that $\F_1$ and $\F_2$ are disjoint.	
	\smallskip
	\item[Induction Claim] 
	There exists a tuple $(\F, \Gamma)$ proving the lemma for $\kLI = (\I, r, b, g)$ constructed according to the following cases.
\end{description}
\vspace*{-\belowdisplayskip}
\smallskip

\vspace*{-\abovedisplayskip}
\proofsubparagraph*{Case Glueing}
If $\kLI = (\kLI_1 \cdot \kLI_2)$, we let $\F$ be the disjoint union of $\F_1$, $\F_2$.
For all $t \in \V(\F_1)$, let $\Gamma(t) \isdef \BvertexMap{\kLI_1}(\Gamma_1(t))$ and for all $t \in \V(\F_2)$, let $\Gamma(t) \isdef \BvertexMap{\kLI_2}(\Gamma_2(t))$.
It is easy to see that $\F$ has height at most $d = \max\set{ d_1, d_2 }$ and that $\Gamma$ is a bijective map from $\V(\F)$ to $\B(\I) \setminus \img(b)$.

It remains to \emph{verify \cref{cond:glik-in-bounded-shd:unlabeled-shared-heritage}}:
Let $s, t \in \V(\F)$, let $v \in \ulGamma(s) \intersect \ulGamma(t)$ and let $e_s = \Gamma(s)$ and $e_t = \Gamma(t)$.
If $v \in \blueN(b(j))$ for a $j \in \dom(b)$, there is nothing to show.
Thus, assume that all blue neighbours of $v$ are unlabeled.
Since $v \not\in \img(r)$, either $\RvertexMap{\kLI_1}(v') = v$ holds for a $v' \in \R(\I_1)$, or $\RvertexMap{\kLI_2}(v') = v$ holds for a $v' \in \R(\I_2)$.
We assume that the first is true, i.e., $\RvertexMap{\kLI_1}(v') = v$.
The second case can be handled analogously.

All unlabeled blue vertices $e \in \B(\I) \setminus \img(b)$ that are neighbours of $v$ must have come from $\kLI_1$, because since they are unlabeled, the only way that $(e, v) \in \E(\I)$ can hold is that $(e', v') \in \E(\I_1)$ holds for $e'$ such that $\BvertexMap{\kLI_1}(e') = e$.
Thus, there exist $e_s', e_t' \in \B(\I_1)$ such that $\BvertexMap{\kLI_1}(e_s') = e_s$ and $\BvertexMap{\kLI_1}(e_t') = e_t$.
By definition of $\Gamma$ we can infer that $\Gamma_1(s) = e_s'$ and $\Gamma_1(t) = e_t'$.
I.e., $s, t \in \V(\F_1)$ and because of that, according to the induction hypothesis, $\lcv(s,t)$ is defined and $v' \in \bigunion \ulGamma_1(\p(\lcv(s,t)))$.
Let $p \in \p(\lcv(s,t))$ such that $v' \in \ulGamma(p)$.
This means $(e', v') \in \E(\I_1)$ for $e' = \Gamma_1(p)$ and thus also $(\BvertexMap{\kLI_1}(e'), \RvertexMap{\kLI_1}(v')) \in \E(\I)$.
Since by definition, $\Gamma(p) = \BvertexMap{\kLI_1}(e')$, we get that $v \in \bigunion \ulGamma(\p(\lcv(s,t)))$.

\vspace*{-.5\abovedisplayskip}
\proofsubparagraph*{Case Transitioning}
If $\kLI = \transition{\kLI_1}{f}$ where $f$ is a transition for $\kLI_1$, consider the enumeration $\enum{ j_1, j_2, \dots, j_\ell }$ of $\img(f) \intersect \img(b_1) \intersect \img(g_1)$.
Let $\F$ be the tree constructed from $\F_1$, when we insert a chain $t_1, \dots, t_\ell$ into $\F_1$, connect all roots in $\troots_{\F_1}$ with $t_{\ell}$ and let $\troot_\F = t_1$.
I.e., let $\F = (\V(\F), \E(\F))$ with $\V(\F) \isdef \V(\F_1) \disunion \set{ t_1, t_2, \dots, t_\ell }$, $\troot_\F \isdef t_1$ and
\[
	\E(\F) \isdef \E(\F_1) \;\union\; \set{ \set{ t_i, t_{i{+}1} } \mid i \in [\ell{-}1] } \;\union\; \set{ \set{ \troot, t_\ell } \mid \troot \in \troots_{\F_1} }.
\]
Let $\Gamma(t_i) \isdef \BvertexMap{\kLI_1}(b_1(j_i))$ for all $i \in [\ell]$ and let $\Gamma(t) \isdef \BvertexMap{\kLI_1}(\Gamma_1(t))$ for all other $t \in \V(\F)$ (i.e., for $t \in \V(\F_1)$).
It is easy to see that $\F$ is a forest of height $d = d_1 + |\img(f) \intersect \img(b_1) \intersect \img(g_1)|$.

\smallskip
\noindent\emph{Verifying \cref{cond:glik-in-bounded-shd:unlabeled-shared-heritage}}:
Let $s,t \in \V(\F)$, let $v \in \ulGamma(s) \intersect \ulGamma(t)$.
Let $v' \in \R(\I_1)$ such that $\RvertexMap{\kLI_1}(v') = v$.
If $s = t_i$ for $i \in [\ell]$, then w.l.o.g.\ $\lcv(s,t) = s$ and \cref{cond:glik-in-bounded-shd:unlabeled-shared-heritage} is trivially true.
The same holds if $t = t_i$ for $i \in [\ell]$.
Thus, it remains to consider that $s,t \in \V(\F_1)$.
Let $e_s'$, $e_t'$ be the blue vertices such that $\Gamma_1(s) = e_s'$ and $\Gamma_1(t) = e_t'$.
By construction, they are both unlabeled.
If $v \in \blueN(b(j))$ for a $j \in \dom(b)$ there is nothing left to show.
Thus assume that this is not the case.
Because we assumed that $v \in \ulGamma(s) \intersect \ulGamma(t)$, we know that $(\BvertexMap{\kLI_1}(e_s'), v), (\BvertexMap{\kLI_1}(e_t'), v) \in \E(\I)$.
Hence, $(e_s', v'), (e_t', v') \in \E(\I_1)$ and according to the induction hypothesis, $\lcv(s,t)$ is defined and $v' \in \bigunion \ulGamma(\p(\lcv(s,t)))$ unless $v' \in \blueN(b_1(j))$ for a $j \in \dom(b_1)$.
In the first case we get that $v \in \bigunion \ulGamma(\p(\lcv(s,t)))$ by the same reasoning we used for glueing.
If $v' \in b_1(j)$ but $v \not\in b(j')$ for all $j' \in \dom(b)$, then $j \in \img(f) \intersect \img(b_1) \intersect \img(g_1)$.
I.e., $v' \in \blueN(b_1(j_{i}))$ for some $i \in [\ell]$.
But this also means that $v \in \blueN(\Gamma(t_i))$ and since by construction both $s \leq_\F t_i$ and $t \leq_\F t_i$, $\lcv(s,t)$ is defined and $t_i \in \p(\lcv(s,t))$ and thus $v \in \bigunion \ulGamma(\p(\lcv(s,t)))$.

\vspace*{-.5\abovedisplayskip}
\proofsubparagraph*{Case Label-Removal}~\\
\textcolor{lipicsGray}{\textbf{\textsf{(a)}}}\;
If $\kLI = \removeR{\kLI_1}{\XR}$ for a set $\XR \subseteq \dom(r_1)$, let $\F = \F_1$ and let $\Gamma(t) = \BvertexMap{\kLI_1}(\Gamma_1(t))$ for all $t \in \V(\F)$.

\smallskip
\noindent\emph{Verifying \cref{cond:glik-in-bounded-shd:unlabeled-shared-heritage}}:
Let $s,t \in \V(\F)$, let $v \in \ulGamma(s) \intersect \ulGamma(t)$.
Remember that $\RvertexMap{\kLI_1}$ is the identity in this case.
If $v \not\in \img(r_1)$, then it trivially follows from the induction hypothesis, that \cref{cond:glik-in-bounded-shd:unlabeled-shared-heritage} still holds for $v$.
If $v \in \img(r_1)$, then there exists an $i \in \XR$ such that $r_1(i) = v$.
Thus, $v \in \blueN(b_1(g_1(i)))$, since $\kLI_1$ has real guards.
Thus, $v \in \blueN(b(j))$ for $j = g_1(i)$ still holds.

\medskip
\noindent\textcolor{lipicsGray}{\textbf{\textsf{(b)}}}\;
If $\kLI = \removeB{\kLI_1}{\XB}$ for $\XB \subseteq \dom(b_1) \setminus \img(g_1)$, consider the enumeration $\enum{ j_1, j_2, \dots, j_\ell }$ of $\XB$.
Let $\F$ be the tree constructed from $\F_1$, when we insert a chain $t_1, \dots, t_\ell$ into $\F_1$, connect all roots in $\troots_{\F_1}$ with $t_{\ell}$ and let $\troot_\F = t_1$.
I.e., let $\F = (\V(\F), \E(\F))$ with $\V(\F) \isdef \V(\F_1) \disunion \set{ t_1, t_2, \dots, t_\ell }$, $\troot_\F \isdef t_1$ and
\[
	\E(\F) \isdef \E(\F_1) \;\union\; \set{ \set{ t_i, t_{i{+}1} } \mid i \in [\ell{-}1] } \;\union\; \set{ \set{ \troot, t_\ell } \mid \troot \in \troots_{\F_1} }.
\]
Let $\Gamma(t_i) \isdef \BvertexMap{\kLI_1}(b_1(j_i))$ for all $i \in [\ell]$ and let $\Gamma(t) \isdef \BvertexMap{\kLI_1}(\Gamma_1(t))$ for all other $t \in \V(\F)$ (i.e., for $t \in \V(\F_1)$).

\smallskip
\noindent\emph{Verifying \cref{cond:glik-in-bounded-shd:unlabeled-shared-heritage}}:
If $s = t_i$ for $i \in [\ell]$, then w.l.o.g.\ $\lcv(s,t) = s$ and \cref{cond:glik-in-bounded-shd:unlabeled-shared-heritage} is trivially true.
The same holds if $t = t_i$ for $i \in [\ell]$.
Thus, it remains to consider that $s,t \in \V(\F_1)$.
Let $e_s'$, $e_t'$ be the blue vertices such that $\Gamma_1(s) = e_s'$ and $\Gamma_1(t) = e_t'$.
By construction, they are both unlabeled.
If $v \in \blueN(b(j))$ for a $j \in \dom(b)$ there is nothing left to show.
Thus assume that this is not the case.
Because we assumed that $v \in \ulGamma(s) \intersect \ulGamma(t)$, we know that $(\BvertexMap{\kLI_1}(e_s'), v), (\BvertexMap{\kLI_1}(e_t'), v) \in \E(\I)$.
Hence, $(e_s', v'), (e_t', v') \in \E(\I_1)$ and according to the induction hypothesis, $\lcv(s,t)$ is defined and $v' \in \bigunion \ulGamma(\p(\lcv(s,t)))$ unless $v' \in \blueN(b_1(j))$ for a $j \in \dom(b_1)$.
In the first case we get that $v \in \bigunion \ulGamma(\p(\lcv(s,t)))$ by the same reasoning we used for glueing.
If $v' \in b_1(j)$ but $v \not\in b(j')$ for all $j' \in \dom(b)$, then $j \in \XB$.
I.e., $v' \in \blueN(b_1(j_{i}))$ for some $i \in [\ell]$.
But this also means that $v \in \blueN(\Gamma(t_i))$ and since by construction both $s \leq_\F t_i$ and $t \leq_\F t_i$, $\lcv(s,t)$ is defined and $t_i \in \p(\lcv(s,t))$ and thus $v \in \bigunion \ulGamma(\p(\lcv(s,t)))$.
\qedhere \end{proof} 

\section{The Logic \texorpdfstring{GC\textsuperscript{k}}{GCk}}%
\label{sec:gc}

This section introduces the logic $\GCk$ as defined in~\cite{Scheidt2023} and its restricted fragment $\GCDk$, consisting of all formulas of guard depth at most $k$.
Let $k$ be a positive natural number, that is fixed for this section.
\vspace{-\abovedisplayskip}
\subparagraph*{Variables} 
$\GCk$ uses two different kinds of variables: $\VARV \isdef \set{ \varv_1, \varv_2, \varv_3 \dots }$ to address vertices and $\VARE \isdef \set{ \vare_1, \vare_2, \dots, \vare_k }$ to address hyperedges.
Notice that the number of variables for hyperedges is bounded by $k$, but unbounded for vertices.
We say that a tuple of the form $\tupel{\varv} = (\varv_{i_1}, \dots, \varv_{i_\ell}) \in \VARV^{\ell}$ or $\tupel{\vare} = (\vare_{i_1}, \dots, \vare_{i_\ell}) \in \VARE^{\ell}$ is a \emph{$\varv$- or $\vare$-tuple}, if $i_1 < i_2 < \cdots < i_\ell$.
We let $\vset(\tupel{\varv}) \isdef \set{ \varv_{i_1}, \dots, \varv_{i_\ell} }$ and $\vset(\tupel{\vare}) \isdef \set{ \vare_{i_1}, \dots, \vare_{i_\ell} }$ respectively.
We call $\set{ i_1, \dots, i_\ell }$ the \emph{index set} of $\tupel{v}$ and $\tupel{e}$, respectively.
\vspace{-\abovedisplayskip}
\subparagraph*{Logical Guards} 

The key idea behind $\GCk$ is that on quantification, vertex variables must be \emph{guarded} by hyperedge variables.
This is formalised by a partial function $g\colon \natpos \pto [k]$ with finite domain (similar to the guard function of a $k$-labeled incidence graph, cf.~\cref{sec:k-labeled-incidence-graphs}) and its corresponding \emph{logical guard} $\LogGuard{g} \isdef \bigland_{i \in \dom(g)} E(\vare_{g(i)}, \varv_i)$.
For the special partial function $g$ with empty domain, we let $\LogGuard{g} \isdef \top$, which is a special formula that always evaluates to true.

\begin{definition}\label{def:gck}
	The logic $\GCk$ is inductively defined along with the \emph{free vertex variables}, the \emph{free hyperedge variables} and the \emph{guard depth}, as formalised by the functions 
	\begin{equation*}
		\freeV\colon \GCk \to \pot(\VARV)\, 
		,\quad
		\freeE\colon \GCk \to \pot(\VARE)\,
		,\quad\text{and}\quad
		\gdepth\colon \GCk \to \nat.
	\end{equation*}

	\begin{description}
		\item[Atomic Formulas] 
		For all $i, i' \in \natpos$ and all $j, j' \in [k]$ the following formulas are in $\GCk$: %
		\settowidth{\myA}{$\phi \, = \, E(\vare_j, \varv_i)$\: }%
		\settowidth{\myB}{$\freeV(\phi) \isdef \set{ \varv_i, \varv_{i'} }$}%
		\begin{itemize}
			\item 
			\makebox[\myA][l]{ $\phi \, = \, \varv_{i} \logeq \varv_{i'}$	} 
			with
			\makebox[\myB][l]{ $\freeV(\phi) \isdef \set{ \varv_i, \varv_{i'} }$ } 
			\; and \;
			$\freeE(\phi) \isdef \emptyset$;

			\item 
			\makebox[\myA][l]{ $\phi \, = \, \vare_{j} \logeq \vare_{j'}$ } with
			\makebox[\myB][l]{ $\freeV(\phi) \isdef \emptyset$ } 
			\; and \;
			$\freeE(\phi) \isdef \set{ \vare_{j}, \vare_{j'} }$;
			
			\item 
			\makebox[\myA][l]{ $\phi \, = \, E(\vare_j, \varv_i)$ } with
			\makebox[\myB][l]{ $\freeV(\phi) \isdef \set{ \varv_i }$ } 
			\; and \;
			$\freeE(\phi) \isdef \set{ \vare_j }$.
		\end{itemize}
		In all the above cases, $\gdepth(\phi) \isdef 0$.
		\smallskip

		\item[Inductive Rules] Let $\chi, \psi$ be formulas of $\GCk$.
		The following formulas are in $\GCk$.
		\begin{itemize}
			\item 
			\makebox[\myA][l]{ $\phi \, = \, \lnot \chi$ } 
			with
			$\freeV(\phi) \isdef \freeV(\chi)$
			\; and \;
			$\freeE(\phi) \isdef \freeE(\chi)$, \\ \makebox[\myA][l]{}
			and\; $\gdepth(\phi) \isdef \gdepth(\chi)$;

			\item 
			\makebox[\myA][l]{ $\phi \, = \, (\chi \land \psi)$ } 
			with
			$\freeV(\phi) \isdef \freeV(\chi) \union \freeV(\psi)$
			\ and \
			$\freeE(\phi) \isdef \freeE(\chi) \union \freeE(\psi)$, \\ \makebox[\myA][l]{}
			and\; $\gdepth(\phi) \isdef \max\set{\, \gdepth(\chi),\, \gdepth(\psi)\, }$.
		\end{itemize}
		Note that by the rules defined so far, $\gdepth(\LogGuard{g}) = 0$ for all logical guards $\LogGuard{g}$.
		\smallskip

		We say that \emph{$g\colon \natpos \pto [k]$ is a guard function for $\phi$} if $\dom(g) = \set{ i \mid \varv_i \in \freeV(\phi) }$.

		Let $n \in \natpos$, let $g$ be a guard function for $\psi$ and $\chi = (\LogGuard{g} \land \psi)$.
		The following formulas are in $\GCk$ for every $\varv$-tuple $\tupel{\varv}$ with $\vset(\tupel{\varv}) \subseteq \freeV(\chi)$ and every $\vare$-tuple $\tupel{\vare}$ with $\vset(\tupel{\vare}) \subseteq \freeE(\chi)$:
		\begin{itemize}
			\item 
			\makebox[\myA][l]{ $\phi \, = \, \existsgeq{n} \tupel{\varv} \qsep \chi$ } 
			with
			$\freeV(\phi) \isdef \freeV(\chi) \setminus \vset(\tupel{\varv})$
			\; and \;
			$\freeE(\phi) \isdef \freeE(\chi)$, \\ \makebox[\myA][l]{}
			and\; $\gdepth(\phi) \isdef \gdepth(\chi)$;
			\item 
			\makebox[\myA][l]{ $\phi \, = \, \existsgeq{n} \tupel{\vare} \qsep \chi$ } 
			with
			$\freeV(\phi) \isdef \freeV(\chi)$
			\; and \;
			$\freeE(\phi) \isdef \freeE(\chi) \setminus \vset(\tupel{\vare})$, \\ \makebox[\myA][l]{}
			and\; $\gdepth(\phi) \isdef \gdepth(\chi) + |\vset(\tupel{\vare})|$.
			\endhere{}
		\end{itemize}
	\end{description}
\end{definition}

For convenience, we let $\free(\phi) \isdef \freeV(\phi) \union \freeE(\phi)$ for all $\phi \in \GCk$.
Formulas of $\GCk$ are evaluated over a hypergraph $\HG$ via interpretations $\IInt = (\I_\HG, \assignmentV, \assignmentE)$ that consist of $\HG$'s incidence graph $\I_\HG$ and assignments $\assignmentV\colon \VARV \to \R(\I_\HG)$ and $\assignmentE\colon \VARE \to \B(\I_\HG)$.
The semantics of $\GCk$ are as expected and a definition can be found in Section~6 of the full version of~\cite{Scheidt2023}, thus we do not give one here.
A \emph{sentence} is a formula $\phi \in \GCk$ that has neither free vertex, nor free hyperedge variables, i.e., $\free(\phi) = \emptyset$.
By $\GCk_d$ we denote the fragment $\set{ \phi \in \GCk \mid \gdepth(\phi) \leq d }$, and we let $\GCDk \isdef \GCk_k$.
We write $\GG \equiv_{\Logic{L}} \HG$ to denote that $\GG$ and $\HG$ satisfy the same sentences in the fragment $\Logic{L} \subseteq \GCk$.

For simplicity, we omit logical guards if they are empty or equal to the formula they are guarding.
I.e., we may abbreviate subformulas of the form $(\top \land \phi)$ or $(\phi \land \phi)$ as $\phi$.
We may also omit parentheses in the usual way.
We write $\existseq{n}(\tupel{\varx})\qsep(\LogGuard{} \land \phi)$ as shorthand for $\existsgeq{n}(\tupel{\varx})\qsep(\LogGuard{} \land \phi) \land \lnot \existsgeq{n{+}1}(\tupel{\varx})\qsep(\LogGuard{} \land \phi)$.
Clearly, these shorthands change neither the semantics, nor the free variables, nor the guard depth of a formula.

\begin{example} The sentence $\phi_\GG \isdef \psi_1 \land \psi_2 \land \psi_3$ describes $\GG$ from \cref{example:hgs-and-igs} up to isomorphism, where
	\begin{align*}
		\chi_{n}\; &\isdef 
			\textstyle\bigland_{1 \leq i < j \leq n} \lnot \varv_i \logeq \varv_j
			\; \land \;
			\lnot \existsgeq{1} (
				\varv_{n{+}1}) \qsep \Bigl(
					E(\vare, \varv_{n{+}1}) \;\land\; \bigland_{i \in [n]} \lnot \varv_{n{+}1} = \varv_i
			\Bigr)\,,\\
		\psi_1\, &\isdef\, \existseq{4}(\vare) \qsep \vare \logeq \vare \,, \\
		\psi_2\, &\isdef\, \existseq{1} (\vare) \qsep 
		\existsgeq{1} (\varv_1, \varv_2, \varv_3) \qsep \Bigl(
			\textstyle\bigland_{i \in [3]} E(\vare, \varv_i)
			\; \land \; 
			\chi_3
			\;\land\;
			\textstyle\bigland_{i \in [3]} \existseq{3} (\vare) \qsep E(\vare, v_i)
		\Bigr)\,\vphantom{\Bigl(\Bigr)},\\
		\psi_3\, &\isdef\, \existseq{3} (\vare) \qsep 
		\existsgeq{1} (\varv_1, \varv_2) \qsep \Bigl(
			\textstyle\bigland_{i \in [2]} E(\vare, \varv_i)
			\; \land \; 
			\chi_2
			\;\land\;
			\textstyle\bigland_{i \in [3]} \existseq{3} (\vare) \qsep E(\vare, v_i)
		\Bigr)\,.
	\end{align*}
	It is easily verified that $\phi_\GG \in \GC^1_2$.
	$\chi_n$ is a helper formula, describing that there are precisely $n$ vertices $\varv_1, \dots, \varv_n$ in the hyperedge $\vare$.
	$\psi_1$ describes that there are precisely four hyperedges, $\psi_2$ describes that precisely one hyperedge contains precisely three vertices, each being contained in precisely 3 hyperedges.
	Finally, $\psi_3$ describes that there are exactly 3 hyperedges containing precisely 2 vertices, each being contained in precisely 3 hyperedges.
	It is not hard to see that, in total, this describes $\GG$ up to isomorphism.
\end{example}

Scheidt and Schweikardt~\cite{Scheidt2023} prove their result only for the following restricted variant of $\GCk$, called $\RGCk$.
They mention in the conclusion, that $\RGCk$ and $\GCk$ are equivalent and show in the full version of the paper (Theorem~7.2) how a formula in $\GCk$ can be translated into one in $\RGCk$.
We still need $\RGCk$ since it is used in the formulation and the proof of the two core lemmata of~\cite{Scheidt2023} that we want to borrow.

\begin{definition}[\cite{Scheidt2023}]\label{def:rgck}
	The restriction $\RGCk$ is inductively defined as follows:
	\begin{description}
		\item[Atomic Formulas] 
		$(\LogGuard{g} \land \phi)$ is in $\RGCk$ 
		for all atomic formulas $\phi \in \GCk$ and all guard functions for $\phi$, i.e., all $g\colon \natpos \pto [k]$ with $\dom(g) = \set{ i \mid \varv_i \in \freeV(\phi) }$.
		
		\smallskip
		\item[Inductive Rules]~%
		\begin{itemize}
			\item
			For every formula $(\LogGuard{g} \land \phi) \in \RGCk$, the formula $(\LogGuard{g} \land \lnot \phi)$ is also in $\RGCk$.

			\item
			For $i \in [2]$ and formulas $(\LogGuard{g_i} \land \psi_i) \in \RGCk$, the formula $(\LogGuard{(g_1 \union g_2)} \land (\psi_1 \land \psi_2))$ is in $\RGCk$, if $g_1$ and $g_2$ are compatible.
		\end{itemize}
		\smallskip

		Let $n \in \natpos$, $(\LogGuard{g} \land \phi) \in \RGCk$.
		\begin{itemize}
			\item
			For every $\varv$-tuple $\tupel{\varv}$ with $\vset(\tupel{\varv}) \subseteq \freeV(\phi)$ and index set $S$, the formula $(\LogGuard{\tilde{g}} \land \chi)$ is in $\RGCk$, where
			\[
				\chi \isdef \existsgeq{n} \tupel{\varv} \qsep (\LogGuard{g} \land \phi) 
				\quad\text{and $\tilde{g}$ is the restriction of $g$ to } \dom(g) \setminus S.
			\]
			
			\item
			For every $\vare$-tuple $\tupel{\vare}$ with $\vset(\tupel{\vare}) \subseteq \freeE(\LogGuard{g} \land \phi)$ and index set $S$, the formula 
			\[
				(\LogGuard{\tilde{g}} \land \existsgeq{n} \tupel{\vare} \qsep (\LogGuard{g} \land \phi))
			\]
			is in $\RGCk$, if $\dom(\tilde{g}) = \dom(g)$ and all $i \in \dom(g)$ satisfy
			\begin{equation}\label{eq:syntax:guard}
				\tilde{g}(i) = g(i)
				\quad \text{or} \quad
				\tilde{g}(i) \in S
				\quad \text{or} \quad
				\tilde{g}(i) \not\in \img(g).
			\end{equation}
		\end{itemize}
	\end{description}
	Intuitively, formulas in $\RGCk$ always carry the information, which hyperedge variable currently guards which vertex variable and the logical guards are in a certain sense \enquote{consistent}~\eqref{eq:syntax:guard} along the syntax tree.
\end{definition}
A simple inspection of the inductive proof for Theorem 7.2 in the full version of~\cite{Scheidt2023} shows that the guard depth is unaffected by the translation, giving us the following refined result.
\begin{restatable}{lemma}{rgckisgck}\label{lem:rgck-is-gck}
	For every formula $\phi \in \GCk$ and every guard function $g$ for $\phi$, there exists a formula $(\LogGuard{g} \land \phi_g) \in \RGCk$ such that
	\begin{enumerate}
		\item $(\LogGuard{g} \land \phi) \equiv (\LogGuard{g} \land \phi_g)$,
		\item $\free(\phi) = \free(\phi_g)$,\, and \ $\gdepth(\phi) = \gdepth(\phi_g)$.
	\end{enumerate}
	
\end{restatable}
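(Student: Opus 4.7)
The plan is to prove \cref{lem:rgck-is-gck} by structural induction on $\phi \in \GCk$, carefully tracking both the syntactic form (so that it lies in $\RGCk$) and the guard depth at every step. This mirrors the (omitted) proof of Theorem~7.2 in the full version of~\cite{Scheidt2023}; my contribution is to observe that each inductive case either preserves the guard depth exactly or increases it by the same amount in both the $\GCk$-formula and the produced $\RGCk$-formula.

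First I would handle the atomic cases: setting $\phi_g \isdef \phi$ places $(\LogGuard{g} \land \phi)$ in $\RGCk$ by the base clause of \cref{def:rgck}, and both sides have guard depth $0$ because logical guards carry no hyperedge quantifier. Negation and conjunction are routine: apply the induction hypothesis to the immediate subformulas with appropriately restricted guard functions $g_1, g_2$ (compatible restrictions of $g$ to the free vertex variables of each subformula), then invoke the corresponding $\RGCk$-rules; guard depth is preserved because it is defined as a maximum in both syntaxes. For vertex quantification $\phi = \existsgeq{n} \tupel{\varv} \qsep \chi$, extend $g$ to a guard function $g'$ for $\chi$ by picking, for every newly bound $\varv_i \in \vset(\tupel{\varv})$, a hyperedge variable from $\freeE(\chi)$ as witnessed by the syntactic requirement on quantification in \cref{def:gck}; apply the induction hypothesis to obtain $(\LogGuard{g'} \land \chi_{g'}) \in \RGCk$, and then apply the vertex-quantification rule of \cref{def:rgck} (whose outer guard is the restriction of $g'$ back to the variables still free, which is precisely $g$). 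This rule does not modify guard depth.

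The main obstacle, as expected, is hyperedge quantification $\phi = \existsgeq{n}\tupel{\vare}\qsep\chi$ with index set $S \isdef \{j : \vare_j \in \vset(\tupel{\vare})\}$. Here the outer guard function $g$ has image disjoint from $S$ (since $\vset(\tupel{\vare}) \cap \freeE(\phi) = \emptyset$), but a guard function $g_{\text{in}}$ for $\chi$ must guard every $\varv_i \in \freeV(\chi) = \freeV(\phi)$, and for some of these it may be natural to use a variable from $S$ (since the original formula $\chi$ might only be meaningfully guarded that way). I would choose $g_{\text{in}}$ to coincide with $g$ on all $i$ where $g(i) \notin S$ is forced, while freely using variables from $S$ otherwise; then invoke the induction hypothesis to get $(\LogGuard{g_{\text{in}}} \land \chi_{g_{\text{in}}}) \in \RGCk$. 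The hyperedge-quantification rule of \cref{def:rgck} then produces $(\LogGuard{g} \land \existsgeq{n} \tupel{\vare} \qsep (\LogGuard{g_{\text{in}}} \land \chi_{g_{\text{in}}}))$, which is in $\RGCk$ provided the consistency condition~\eqref{eq:syntax:guard} holds --- and it does, because for each $i \in \dom(g)$ we have either $g(i) = g_{\text{in}}(i)$ or $g_{\text{in}}(i) \in S$ by construction. Crucially, this step adds exactly $|\vset(\tupel{\vare})|$ to the guard depth on both sides, yielding $\gdepth(\phi_g) = \gdepth(\chi_{g_{\text{in}}}) + |\vset(\tupel{\vare})| = \gdepth(\chi) + |\vset(\tupel{\vare})| = \gdepth(\phi)$.

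Equivalence in part~(1) follows in each case from a routine unfolding of semantics: logical guards evaluate to $\top$ whenever the interpretation satisfies the outer formula (since the outer guard is forced by $\dom(g) = \{i : \varv_i \in \freeV(\phi)\}$), so inserting or removing $\LogGuard{}$-conjuncts at intermediate levels does not alter satisfaction under an interpretation that already satisfies the surrounding guard. The equality of free variables in part~(2) is immediate from the construction, since at no point do we introduce or remove free variables --- each rule in $\RGCk$ mirrors the corresponding rule in $\GCk$ on the free-variable sets.
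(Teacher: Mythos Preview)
Your approach coincides with the paper's: the paper's entire proof is the single sentence pointing at the inductive translation in Theorem~7.2 of the full version of~\cite{Scheidt2023} and observing that each step leaves the guard depth unchanged, and you spell that induction out while tracking guard depth. The guard-depth bookkeeping is exactly right --- every constructor of $\RGCk$ contributes the same amount to $\gdepth$ as its counterpart in $\GCk$.

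One point needs tightening. In the hyperedge case you assert that $\img(g)\cap S=\emptyset$, justifying it by $\vset(\tupel{\vare})\cap\freeE(\phi)=\emptyset$. That inference is invalid: a guard function for $\phi$ is by definition any map with domain $\{i:\varv_i\in\freeV(\phi)\}$, with no constraint on its image, so nothing in the lemma's hypotheses forbids $g(i)\in S$. If some $g(i)\in S$, then the inner occurrence of $\LogGuard{g_{\text{in}}}$ (with the bound $\vare$-variables reassigned) is no longer implied by the outer $\LogGuard{g}$, and your equivalence argument (``logical guards evaluate to $\top$ whenever the interpretation satisfies the outer formula'') breaks. The fix is to carry the extra invariant $\img(g)\subseteq\{j:\vare_j\in\freeE(\phi)\}$ through the induction: it holds for the empty guard of a sentence (the only case used downstream), and each of your inductive constructions of $g'$ and $g_{\text{in}}$ can be made to preserve it by taking the guards for newly handled indices from the inner logical guard $h$ already present in $\chi=(\LogGuard{h}\land\psi)$. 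With that invariant your sketch goes through, and it is precisely what makes the consistency condition~\eqref{eq:syntax:guard} and the semantic equivalence fall out simultaneously.
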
 

\section{Main Result}%
\label{sec:main-result}

We are now ready to plug everything together, which yields our main result.
\begin{restatable}{theorem}{mainthm}\label{thm:main}
	Let $\GG$ and $\HG$ be hypergraphs and let $k \in \natpos$.
	\begin{alignat*}{2}
		\GG \equiv_{\GCDk} \HG 
		&\iff \Hom(\ISHDk, \I_{\GG})\,	&&=\; \Hom(\ISHDk, \I_{\HG}) \\
		&\iff \Hom(\SHDk, \GG)	&&=\; \Hom(\SHDk, \HG).
	\end{alignat*}
\end{restatable}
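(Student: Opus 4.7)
The plan is to split the biconditional into the two natural equivalences and tackle each with a different tool.

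\textbf{Equivalence between $\Hom(\ISHDk,\cdot)$ and $\Hom(\SHDk,\cdot)$.}
First I would dispatch the second $\iff$ by a direct application of \Cref{thm:ihom-equals-hom} to the class $\classC = \SHDk$, whose corresponding class of incidence graphs is precisely $\ISHDk$. The only hypothesis to verify is that $\SHDk$ is closed under pumping and local merging, which is exactly \Cref{prop:shd-hd-closures}. This step is essentially a look-up.

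\textbf{Equivalence between $\GCDk$-equivalence and $\Hom(\ISHDk,\cdot)$.}
This is the main content. My strategy is to show that the class $\ISHDk$ is, up to isomorphism, precisely the class of \enquote{objects describable by $\GCDk$-sentences} in a suitable sense, and then invoke the Scheidt–Schweikardt machinery to transfer this into a homomorphism-count statement. Concretely, I would proceed as follows.
First, using \Cref{lem:rgck-is-gck}, reduce the question from $\GCDk$ to $\RGCk_k$, since the latter is the logic in which the technical lemmas of~\cite{Scheidt2023} are phrased and, crucially, the guard depth is preserved by the translation. Second, invoke the two central technical lemmas of~\cite{Scheidt2023}: one direction associates to each $k$-labeled incidence graph $\kLI \in \GLIDk$ a formula of $\RGCk_k$ whose \enquote{homomorphism count} into any hypergraph is recoverable from its $\RGCk_k$-type, and the other direction associates to each formula of $\RGCk_k$ a finite family of $k$-labeled incidence graphs. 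The inductive construction of these formulas mirrors the inductive definition of $\GLI_k^d$ step for step: glueing corresponds to $\land$, transitioning to a change of guards, and each removal of a blue label corresponds to existentially quantifying out a hyperedge variable, contributing exactly $1$ to the guard depth. Third, combine this correspondence with \Cref{thm:inductive_def_hypertree_depth}, which identifies label-free members of $\GLIDk$ with $\ISHDk$ up to isomorphism of their skeletons, to conclude that $\GG \equiv_{\GCDk} \HG$ if and only if $\Hom(\ISHDk, \I_\GG) = \Hom(\ISHDk, \I_\HG)$.

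\textbf{Main obstacle.}
The delicate point is ensuring that the parameter $d$ in $\GLI_k^d$ (number of blue labels removed \enquote{in series}) and the guard depth in $\RGCk$ line up correctly along the induction. The Scheidt–Schweikardt correspondence is originally stated in terms of the global parameter $k$, so to get the \emph{depth-sensitive} version required here, I would need to inspect their inductive construction and verify that each hyperedge-quantifier in the formula is produced by, and produces, exactly one blue-label removal in the construction of the associated $k$-labeled incidence graph (the other constructors leaving both quantities unchanged). Once this bookkeeping is checked, \Cref{thm:inductive_def_hypertree_depth} and \Cref{lem:rgck-is-gck} slot in cleanly, and the theorem follows by chaining the two established equivalences.
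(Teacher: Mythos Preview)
Your proposal is correct and follows essentially the same route as the paper: the second equivalence is dispatched via \Cref{thm:ihom-equals-hom} and \Cref{prop:shd-hd-closures}, and the first is proved by contraposition using the two depth-refined Scheidt--Schweikardt lemmas (\Cref{lem:main-formula-for-kli} and \Cref{lem:main-kli-for-formula}) together with \Cref{thm:inductive_def_hypertree_depth} and \Cref{lem:rgck-is-gck}. The paper also explicitly records the refined depth bounds as \enquote{implicit in~\cite{Scheidt2023}} (precisely the bookkeeping you flag as the main obstacle) and treats the trivial case $|\B(\I_\GG)| \neq |\B(\I_\HG)|$ separately, but otherwise your outline matches the paper's argument.
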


We use the fact that the proofs for the core Lemmata 8.1 and 8.2 in the work by Scheidt and Schweikardt~\cite{Scheidt2023} actually give us the following refined results.
This is easy to see on inspection of the original proofs (consult Appendix~E in the full version of~\cite{Scheidt2023}), since there is a one-to-one correspondence between the blue label $i$ and the hyperedge variable $\vare_i$ in the proofs for both lemmas: whenever a blue label $i$ is removed, the corresponding variable $\vare_i$ is quantified and vice-versa.

For a $k$-labeled incidence graph $\kLI$ of the form $(\I, r, b, g)$, we let $\IInt_\kLI \isdef (\I, \assignmentV, \assignmentE)$ be defined by $\assignmentV(\varv_i) \isdef r(i)$ for all $i \in \dom(r)$ and $\assignmentE(\vare_j) \isdef b(j)$ for all $j \in \dom(b)$.
\begin{lemma}[implicit in~\cite{Scheidt2023}]\label{lem:main-formula-for-kli}
	Let $\kLI = (\I, r, b, g) \in \GLI_k^i$.
	For every $m \in \nat$ there is a formula $\phi_{\kLI, m}$ with $(\LogGuard{g} \land \phi_{\kLI, m}) \in \RGCk$,
	$\freeV( \LogGuard{g} \land \phi_{\kLI, m} ) = \set{\varv_i \mid i \in \dom(r)}$,\, 
	$\freeE( \LogGuard{g} \land \phi_{\kLI, m} ) = \set{\vare_j \mid j \in \dom(b)}$, and  $\gdepth(\phi_{\kLI,m}) \leq i$,
	such that
	for every $k$-labeled incidence graph $\kLI'$
	with
	$\dom(b_{\kLI'}) \supseteq \dom(b)$,
	$\dom(r_{\kLI'}) \supseteq \dom(r)$,
	and
	with real guards w.r.t.\ $g$ we have:
	$\IInt_{\kLI'} \models \LogGuard{g}$, \;and\; 
	$\hom(\kLI, \kLI') = m \ \iff \ \IInt_{\kLI'} \models \phi_{\kLI, m}$.
\end{lemma}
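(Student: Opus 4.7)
The plan is to prove the lemma by induction on the definition of $\GLI_k^i$, mirroring the proofs of Lemmata~8.1 and~8.2 in~\cite{Scheidt2023} while tracking the guard depth of the formulas constructed along the way. The central observation that makes the refined bound $\gdepth(\phi_{\kLI,m}) \leq i$ work is that the parameter $i$ in $\GLI_k^i$ counts exactly the number of blue-label removals performed (including those inside transitions), and on the formula side every blue-label removal is mirrored by the quantification of a single hyperedge variable, which is the \emph{only} construct in \cref{def:gck} that increases $\gdepth$.

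First I would set up the base case. If $\kLI \in \GLI_k^0$ is label-complete with real guards, then $\kLI$ fully describes the colours, edges, and guard assignment, so $\phi_{\kLI,0}$ can be taken to be the \emph{negation} of a conjunction of the appropriate atomic formulas (encoding non-isomorphism), and $\phi_{\kLI,1}$ encodes the isomorphism type explicitly, while $\phi_{\kLI,m} \isdef \bot$ for $m \geq 2$; these all have guard depth $0$. Then I would treat the inductive cases, reusing the constructions from~\cite{Scheidt2023}:
\begin{itemize}
\item \textbf{Glueing.} Given formulas $\phi_{\kLI_1, m_1}$, $\phi_{\kLI_2, m_2}$ one builds $\phi_{\kLI,m}$ as a finite disjunction over all decompositions $m = \sum m_1 \cdot m_2$ of suitable homomorphism counts; this requires only propositional combinations and does not change the guard depth, so $\gdepth(\phi_{\kLI,m}) \leq \max\{i_1, i_2\}$.
\item \textbf{Transitioning.} A transition $f$ is realised as a product with the auxiliary $\Mf$ plus removal of the blue labels in $\XB \isdef \img(f) \intersect \dom(b) \intersect \img(g)$; the formula side mirrors this by replacing the guard function and quantifying exactly $|\XB|$ hyperedge variables, contributing exactly $|\XB|$ to the guard depth.
\item \textbf{Red-label removal.} Quantifying existentially over the vertex variable $\varv_i$ (with appropriate $\existseq{}$ to preserve the count $m$) does not increase guard depth.
\item \textbf{Blue-label removal.} Quantifying over each hyperedge variable $\vare_j$ for $j \in \XB$ via $\existseq{}$ to enforce the correct homomorphism count adds exactly $|\XB|$ to $\gdepth$, matching the increment $i' = i + |\XB|$ in \cref{def:k-label-class}.
\end{itemize}

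At each step I would verify that the resulting formula $(\LogGuard{g'} \land \phi_{\kLI',m})$ is syntactically in $\RGCk$ (in particular, that condition~\eqref{eq:syntax:guard} holds after each hyperedge quantification — this is precisely the reason transitions require the blue labels being removed to be simultaneously reintroduced), that its free variables match $\dom(r')$ and $\dom(b')$, and that the counting equivalence $\hom(\kLI',\kLI'') = m \iff \IInt_{\kLI''} \models \phi_{\kLI',m}$ is preserved. The latter is the combinatorial heart of the argument and is already carried out in~\cite{Scheidt2023}; the present refinement only requires remarking that each quantifier they introduce contributes the exact amount of guard depth that is added to the $\GLI_k^i$ counter.

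The main obstacle — already resolved in~\cite{Scheidt2023} — lies in the glueing and transition cases: when two labeled incidence graphs are glued, one must count homomorphisms by partitioning over the possible images of the merged vertices and hyperedges, using counting quantifiers to enforce each count; under a transition, the guard function changes while one must maintain that the interpretation still satisfies $\LogGuard{g'}$. For our refined claim, however, no new combinatorics is needed: a straightforward inspection of the formula construction in Appendix~E of~\cite{Scheidt2023} shows that only hyperedge quantifiers $\existsgeq{n} \tupel{\vare}\qsep$ contribute to $\gdepth$, and the induction on $\GLI_k^i$ tracks exactly these. Hence the guard depth of $\phi_{\kLI,m}$ is bounded by $i$, as claimed.
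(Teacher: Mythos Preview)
Your proposal is correct and takes essentially the same approach as the paper: the paper does not give a standalone proof but simply remarks that the refined bound $\gdepth(\phi_{\kLI,m}) \leq i$ follows by inspecting the inductive construction in Appendix~E of~\cite{Scheidt2023}, because ``whenever a blue label $i$ is removed, the corresponding variable $\vare_i$ is quantified and vice-versa.'' Your case analysis spells out this observation in more detail, but the underlying argument is identical.
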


\begin{lemma}[implicit in~\cite{Scheidt2023}]\label{lem:main-kli-for-formula}
	Let $\chi \isdef (\LogGuard{g} \land \psi) \in \RGCk$ with $\gdepth(\chi) = \ell$ and let
	$m,d \in \nat$ with $m \geq 1$.
	There exists a linear combination 
	$Q \isdef \sum_{i \in [q]} \alpha_i \kLI_i$, and sets 
	$\domrQ = \set{ i \mid \varv_i \in \freeV(\chi) }$ and  
	$\dombQ = \set{ i \mid \vare_i \in \freeE(\chi) }$,
	where for all $i \in [q]$:
	\begin{align*}
		&\alpha_i \in \reell,\quad
		\kLI_i \in \GLI_k^{\ell},\quad
		g_i \, = \, g,\quad
		\dom(b_i) \, = \, \dombQ, 
		\quad\text{and}\quad
		\dom(r_i) \, = \, \domrQ;
	\end{align*}
	such that for all $k$-labeled incidence graphs
	$\kLI'$ with
	$|\B(\I')| = m$,
	$\max \set{|\blueN(e)| \mid e \in \B(\I')} \leq d$
	and
	$\dom(b') \supseteq \dombQ$,
	$\dom(r') \supseteq \domrQ$,
	$g' \supseteq g$, and with real guards w.r.t.\ $g$ 
	we have: \ $\IInt_{\kLI'} \models \LogGuard{g}$, \ and \
	\begin{align*}
		\sum_{i \in [q]} \alpha_i \cdot \hom(\kLI_i, \kLI') = \begin{cases}
			1, &\text{if }\; \IInt_{\kLI'} \models \chi \\
			0, &\text{if }\; \IInt_{\kLI'} \not\models \chi.
		\end{cases}
	\end{align*}
\end{lemma}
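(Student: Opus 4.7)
The plan is a structural induction on $\chi \in \RGCk$ following Definition~\ref{def:rgck}, building the required linear combination via the operations of Definition~\ref{def:k-label-class} in direct correspondence to the logical connectives: glueing for conjunction, red-label removal for vertex quantifiers, and transitions followed by blue-label removal for edge quantifiers. Throughout the induction, the invariants $g_i = g$, $\dom(b_i) = \dombQ$, $\dom(r_i) = \domrQ$, and $\kLI_i \in \class{GLI}_k^{\gdepth(\chi)}$ are to be maintained.

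For the base case, each atomic formula $(\LogGuard{g} \land \phi)$ admits a single $\kLI_1 \in \class{GLI}_k^0$ whose skeleton consists of the vertices and incidences forced by $\LogGuard{g}$ together with those forced by $\phi$ (merging two red labels to encode $\varv_i \logeq \varv_{i'}$; merging two blue labels to encode $\vare_j \logeq \vare_{j'}$; adding an incidence to encode $E(\vare_j,\varv_i)$). Then $\hom(\kLI_1, \kLI') = 1$ iff $\IInt_{\kLI'} \models \chi$, so $Q = \kLI_1$ suffices. For conjunction, the induction hypothesis yields $Q_i = \sum_j \alpha^i_j \kLI^i_j$ and $Q = \sum_{i,j} \alpha^1_i \alpha^2_j \,(\kLI^1_i \cdot \kLI^2_j)$ works by the compatibility of $g_1, g_2$ forced by $\RGCk$ and the multiplicativity $\hom(\kLI_1 \cdot \kLI_2, \kLI') = \hom(\kLI_1, \kLI') \cdot \hom(\kLI_2, \kLI')$, with guard-depth $\max\{\ell_1, \ell_2\}$. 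For negation $(\LogGuard{g} \land \lnot \psi)$, a \emph{neutral} graph $\kLI_0 \in \class{GLI}_k^0$ encoding only $\LogGuard{g}$ satisfies $\hom(\kLI_0, \kLI') = 1$ whenever $\IInt_{\kLI'} \models \LogGuard{g}$, hence $Q = \kLI_0 - Q_\psi$ takes the right value.

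For the vertex counting quantifier $\existsgeq{n}\tupel{\varv} \qsep (\LogGuard{g} \land \phi)$ with index set $S$, the induction delivers $Q_\phi$ with $S \subseteq \dom(r_i)$ throughout. Apply Möbius inversion over the partition lattice of $S$, realised concretely by glueing each $\kLI_i$ with itself to identify red labels according to each partition block, in order to convert the labelled count $\sum_i \alpha_i \hom(\kLI_i, \kLI')$ into the number of distinct tuples $\tupel{v}$ over $\R(\I')$ satisfying the formula. The predicate ``$\geq n$'' is a polynomial in this distinct count, which is bounded above by $m \cdot d$ and therefore takes only finitely many admissible values; so this polynomial rewrites as a linear combination of powers of the count, and each power becomes a linear combination of homomorphism counts via multiplicativity of $\hom$ under glueing. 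Finally, remove the labels in $S$ via rule~\ref{def:k-label-class:red-remove}, keeping everything in $\class{GLI}_k^\ell$.

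The main obstacle is the edge counting quantifier $(\LogGuard{\tilde g} \land \existsgeq{n}\tupel{\vare} \qsep (\LogGuard{g} \land \phi))$, where the inner guard $g$ and the outer guard $\tilde g$ need not agree. The induction delivers $Q_\phi$ with $g_i = g$, but the conclusion demands $g_i = \tilde g$. The plan is to apply an appropriate transition $f$ derived from $\tilde g$ to each $\kLI_i$ before removing the blue labels in the index set $S$ of $\tupel{\vare}$; condition~\eqref{eq:syntax:guard} of Definition~\ref{def:rgck} is precisely the syntactic guarantee that such an $f$ is a valid transition, ensuring every red label whose current guard lies in $S$ receives a new guard outside $S$. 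The transition contributes $|\img(f) \cap \img(b_i) \cap \img(g_i)|$ to the class index, and the subsequent removal via rule~\ref{def:k-label-class:blue-remove} of the $|S|$ blue labels adds $|S|$ more; summing matches the $|\vset(\tupel{\vare})|$ increase in $\gdepth$ exactly. Möbius inversion over the partition lattice of $S$ on blue labels (via glueing) together with the polynomial-in-count argument then closes the case, now with the bound $m$ on $|\B(\I')|$ providing the finite range of admissible distinct-blue counts.
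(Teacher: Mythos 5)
The paper does not actually supply a self-contained proof of this lemma: it states that the result is implicit in Scheidt \& Schweikardt (2023), and the ``proof'' is the one-sentence observation that in the original proofs of their Lemmata 8.1 and 8.2 every blue-label removal corresponds to a quantification of the matching hyperedge variable and vice versa, so the guard-depth bound falls out on inspection. Your proposal, by contrast, attempts a full structural induction; that is a more ambitious and genuinely different route, so it is worth examining whether it actually closes.

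The base case, conjunction, negation and vertex-quantifier steps match what one would expect from this induction and look sound in outline. The problem is in the hyperedge-quantifier step, precisely the case you flag as the main obstacle. Your argument hinges on the claim that condition~\eqref{eq:syntax:guard} ``ensures every red label whose current guard lies in $S$ receives a new guard outside $S$.'' That is not what the condition says: it explicitly permits $\tilde g(i) \in S$, i.e.\ the new guard may itself be one of the quantified indices (which remains a free variable of the outer formula because it reappears in $\LogGuard{\tilde g}$). This breaks your class-index accounting. As you wrote it, the transition contributes $c = |\img(f) \cap \img(b_i) \cap \img(g_i)|$ and the removal of the labels in $S$ contributes $|S|$, for a total of $c + |S|$, whereas $\gdepth$ increases by exactly $|S|$; these agree only when $c = 0$, which your appeal to~\eqref{eq:syntax:guard} does not establish. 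Moreover, removing \emph{all} of $S$ after the transition also produces a wrong $\dom(b_i)$ whenever $S \cap \img(\tilde g) \neq \emptyset$, since those labels must remain present in $\dombQ$ for the outer formula. The correct bookkeeping needs to split $S$ into the part $S \setminus \img(\tilde g)$ whose labels are genuinely removed, the part $S \cap \img(\tilde g) \cap \img(g)$ whose labels are refreshed onto fresh blue vertices by the transition, and the part $S \cap \img(\tilde g) \setminus \img(g)$ whose labels must be removed and then reintroduced by glueing with a small auxiliary graph (the transition's $\XB$ is intersected with $\img(g)$ and therefore will not move them); adding the resulting contributions gives exactly $|S|$, but only after this case split. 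You would also need to handle new labels in $\img(\tilde g) \setminus \dombQ$ that first appear in the outer guard. As written, the proposal does not carry this out, and the statement ``summing matches the $|\vset(\tupel{\vare})|$ increase in $\gdepth$ exactly'' is not justified.
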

\medskip

The proof of \cref{thm:main} works the same way as the one for Theorem~6.1 in~\cite[Section 8]{Scheidt2023}: the second biimplication is provided by \cref{thm:ihom-equals-hom} and \cref{prop:shd-hd-closures}.
The first biimplication is shown via contraposition, where the contraposition of the forward direction uses \cref{lem:main-formula-for-kli} and the one for the backward direction uses \cref{lem:main-kli-for-formula}.

\begin{proof}[Proof of \cref{thm:main}]
	Let $\I = \I_\GG$ and $\J = \I_\HG$.
	If $|\B(\I)| \neq |\B(\J)|$ then $\hom(\I', \I) \neq \hom(\I', \J)$ for the incidence graph $\I' \in \ISHD_1$ that consists of a single blue vertex and no red vertices.
	Similarly, $\I$ and $\J$ are distinguished by a suitable $\GC_1$-sentence of the form $\existsgeq{n}\vare \qsep (\vare \logeq \vare)$.
	If $|\B(\I)| = |\B(\J)|$, consider their corresponding label-free $k$-labeled incidence graphs $\kLI_\I = (\I, \emptyset, \emptyset, \emptyset)$ and $\kLI_\J = (\J, \emptyset, \emptyset, \emptyset)$.
	
	Assume there is an $\I' \in \ISHDk$ such that $\hom(\I', \I) = m_1 \neq m_2 = \hom(\I', \J)$.
	According to \cref{thm:inductive_def_hypertree_depth}, there is a label-free $\kLI \in \GLIDk$ such that $\I' \isomorphic \I_\kLI$, which means $\hom(\kLI, \kLI_\I) = m_1 \neq m_2 = \hom(\kLI, \kLI_\J)$.
	By \cref{lem:main-formula-for-kli} there exists a formula $(\top \land \phi_{\kLI, m_1}) \in \RGCk$ with $\gdepth(\phi_{\kLI, m_1}) \leq k$ such that $\IInt_{\kLI_\I} \models (\top \land \phi_{\kLI, m_1})$ and $\IInt_{\kLI_\J} \not\models (\top \land \phi_{\kLI, m_1})$.
	Hence, $\IInt_{\kLI_\I} \models \phi_{\kLI, m_1}$ and $\IInt_{\kLI_\J} \not\models \phi_{\kLI, m_1}$, and since $\phi_{\kLI, m_1} \in \GCDk$,\; $\GG \not\equiv_{\GCDk} \HG$.
	
	Assume there is a sentence $\phi \in \GLI_k$ with $\gdepth(\phi) = k$ such that $\IInt_{\kLI_\I} \models \phi$ and $\IInt_{\kLI_\J} \not\models \phi$.
	By \cref{lem:rgck-is-gck} there exists a formula $(\top \land \psi) \in \RGCk$ with $\gdepth(\psi) = k$ such that $\IInt_{\kLI_\I} \models (\top \land \psi)$ and $\IInt_{\kLI_\J} \not\models (\top \land \psi)$.
	Let $m \isdef |\B(\I)| = |\B(\J)|$ be the number of hyperedges and let $n \in \nat$ such that $|\blueN(e)| \leq n$ for all $e \in \B(\I)$ and all $e \in \B(\J)$.
	According to \cref{lem:main-kli-for-formula} there exists a linear combination $Q = \sum_{i \in [q]} \alpha_i \kLI_i$ such that $\sum_{i \in [q]} \alpha_i \cdot \hom(\kLI_i, \kLI_\I) = 1$ and $\sum_{i \in [q]} \alpha_i \cdot \hom(\kLI_i, \kLI_\J) = 0$ and $\kLI_i \in \GLI_k^k$ for all $i \in [q]$.
	This means there must be an $i \in [q]$ such that $\alpha_i \cdot \hom(\kLI_i, \kLI_\I) \neq \alpha_i \cdot \hom(\kLI_i, \kLI_\J)$, which means $\hom(\kLI_i, \kLI_\I) \neq \hom(\kLI_i, \kLI_\J)$.
	Since $\domrQ = \dombQ = \emptyset$, $\kLI_i$ is label-free.
	According to \cref{thm:inductive_def_hypertree_depth}, there exists an $\I' \in \ISHD_k$ such that $\I' \isomorphic \I_{\kLI_i}$.
	Thus, $\hom(\I', \I) \neq \hom(\I', \J)$, i.e., $\Hom(\ISHDk, \I) \neq \Hom(\ISHDk, \J)$.
	
	This finishes the proof for the first \enquote{iff}.
	The second is provided by the combination of \cref{thm:ihom-equals-hom} and \cref{prop:shd-hd-closures}.
\end{proof}  

\section{Final Remarks}%
\label{sec:conclusion}

This paper solves one of the open questions of Scheidt and Schweikardt~\cite{Scheidt2023}, who lift a result by Dvo\v{r}ák~\cite{Dvorak2010} from graphs to hypergraphs.
Dvo\v{r}ák shows that homomorphism indistinguishability over the graphs of tree width at most $k$ is equivalent to indistinguishability over first-order logic with counting quantifiers ($\CL$) and $k{+}1$ variables ($\CL^{k{+}1}$).
Scheidt and Schweikardt show that homomorphism indistinguishability over the class $\GHWk$ of hypergraphs of generalised hypertree width at most $k$ is equivalent to indistinguishability over the logic $\GC$ with $k$ guards ($\GCk$).
Grohe~\cite{Grohe2020} gave a result complementing Dvo\v{r}ák's: $\CL$ with quantifier depth at most $m$ ($\CL_m$) matches homomorphism indistinguishability over graphs of tree depth at most $m$.
An obvious expectation was that the distinguishing power of $\GCD{m}$ would match homomorphism indistinguishability over the class $\HD_m$ of hypergraphs of hypertree depth at most $m$ as it is defined by Adler et al.~\cite{Adler2012}.
However, this expectation did not manifest in this exact way.
Instead, we proved that the distinguishing power of $\GCD{m}$ matches homomorphism indistinguishability over hypergraphs of \emph{strict} hypertree depth at most $m$, which is a (mild) restriction of hypertree depth.
Combining \cref{thm:main} with the main result of~\cite{Scheidt2023} yields the following combined result.
\begin{theorem}
	For all hypergraphs $\GG$ and $\HG$, the following equivalences hold:
	\begin{alignat*}{2}
		\GG \equiv_{\GCDk} \HG &\;\iff\; \GG \equiv_{\SHDk} \HG  &&\;\iff\; \I_{\GG} \equiv_{\ISHDk} \I_{\HG} \quad\text{and}\\
		\GG \equiv_{\GCk} \HG &\;\iff\; \GG \equiv_{\GHWk} \HG &&\;\iff\; \I_{\GG} \equiv_{\IGHWk} \I_{\HG}.
	\end{alignat*}
\end{theorem}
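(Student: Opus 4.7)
The plan is essentially to assemble two results that are already in hand, so the statement requires no new technical work beyond observing that it is a straightforward combination. The first row of equivalences is \emph{verbatim} the content of \cref{thm:main}, which was proved in the previous section by combining the inductive characterisation \cref{thm:inductive_def_hypertree_depth} of $\ISHDk$ with the refined core lemmas \cref{lem:main-formula-for-kli,lem:main-kli-for-formula} adapted from~\cite{Scheidt2023}, together with \cref{thm:ihom-equals-hom} and \cref{prop:shd-hd-closures} (which supply the bridge between homomorphism indistinguishability over $\SHDk$ and over $\ISHDk$). Hence the first row needs only a citation to \cref{thm:main}.

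The second row is the main theorem (Theorem~6.1) of Scheidt and Schweikardt~\cite{Scheidt2023}, which asserts exactly the three-way equivalence between $\GG \equiv_{\GCk} \HG$, $\GG \equiv_{\GHWk} \HG$, and $\I_\GG \equiv_{\IGHWk} \I_\HG$. Their proof relies on the same machinery of $k$-labeled incidence graphs used in \cref{sec:k-labeled-incidence-graphs}, an incidence-graph analogue of \cref{thm:ihom-equals-hom} for $\GHWk$ (for which one verifies that $\GHWk$ is closed under pumping and local merging), and the unrestricted versions of \cref{lem:main-formula-for-kli,lem:main-kli-for-formula}, which drop the guard-depth bound. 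No further reasoning is needed here: we simply invoke their theorem.

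The only obstacle, if any, is purely presentational: one should briefly mention that the second row is quoted from~\cite{Scheidt2023} rather than re-derived, so that the reader does not expect a proof in this paper. There is no genuine difficulty, as both rows are already established.
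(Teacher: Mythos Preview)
Your proposal is correct and matches the paper's own treatment exactly: the paper presents this theorem in the concluding section without a separate proof, introducing it simply as the combination of \cref{thm:main} (the first row) with the main result of~\cite{Scheidt2023} (the second row). There is nothing to add; your identification of the two ingredients and the observation that no new technical work is needed is precisely the intended justification.
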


We took this unexpected mismatch between $\GCDk$ and $\HDk$ as an opportunity to investigate the relationship between $\HDk$ and $\SHDk$.
In \cref{thm:shd-is-almost-hd} we showed that the strict hypertree depth of a hypergraph is at most 1 larger than its hypertree depth.
\shdisalmosthd*

To show that homomorphism counts from the class $\SHDk$ are just as expressive as homomorphism counts from the class $\ISHDk$, which was necessary to prove \cref{thm:main}, we used an implicit result by Böker~\cite{Boeker2019}, who gives a sufficient set of properties for a class $\classC$ of hypergraphs, such that homomorphism indistinguishability over $\classC$ is the same as homomorphism indistinguishability over the corresponding class $\classIC$ of incidence graphs.
Since $\HDk$ does not have these properties, Böker's result cannot be applied with respect to $\HDk$ and $\IHDk$.
In fact, we showed in \cref{thm:hd-hom-are-skewed} that homomorphism indistinguishability over $\HDk$ is \emph{not} the same as homomorphism indistinguishability over $\IHDk$ and furthermore, that it is also not the same as homomorphism indistinguishability over $\SHDk$.
\hdhomareskewed*

\subparagraph*{Further Research}
It would be very interesting to see if the result by Böker (\cref{thm:ihom-equals-hom}) is tight in the sense that closure under pumping and local merging are sufficient \emph{and required} properties.
I.e., whether for every class $\classC$ that misses one of these properties, homomorphism counts over $\classC$ differ from homomorphism counts over the corresponding class $\classIC$ of incidence graphs in their distinguishing power.

As mentioned in the introduction, this work can be seen as one more step in the search of a \enquote{proper} lifting of the $k$-dimensional Weisfeiler-Leman algorithm to hypergraphs.
Given the relationship between Weisfeiler-Leman, $\CL$ and homomorphism indistinguishability on graphs~\cite{Cai1992,Dawar2021,Dell2018,Dvorak2010,Fluck2024,Grohe2020}, we believe that the proper lifting should admit a similar relationship to the corresponding hypergraph parameters.
Hence, we believe that the distinguishing power of such an algorithm should match homomorphism indistinguishability over the class $\GHWk$ of hypergraphs of generalised hypertree width at most $k$ and thus also indistinguishability by the logic $\GCk$.
Since we believe that $\GCk$ is the natural lifting of $\CL^k$ in this setting, this paper adds to this picture: The $k$-dimensional Weisfeiler-Leman algorithm restricted to $m$ iterations should have the same distinguishing power as the intersection of the classes $\GHWk \intersect \SHD_m$.
Hence, the mismatch we uncovered in this work might propagate to the Weisfeiler-Leman algorithm. 

\bibliography{literature}

\begin{thebibliography}{10}

\bibitem{Adler2012}
Isolde Adler, {Tomá\v{s}} {Gaven\v{c}iak}, and Tereza {Klimo\v{s}ová}.
\newblock Hypertree-depth and minors in hypergraphs.
\newblock {\em Theoretical Computer Science}, 463:84--95, 2012.
\newblock \href {https://doi.org/10.1016/j.tcs.2012.09.007} {\path{doi:10.1016/j.tcs.2012.09.007}}.

\bibitem{Boeker2019a}
Jan B\"{o}ker, Yijia Chen, Martin Grohe, and Gaurav Rattan.
\newblock {The Complexity of Homomorphism Indistinguishability}.
\newblock In Peter Rossmanith, Pinar Heggernes, and Joost-Pieter Katoen, editors, {\em 44th International Symposium on Mathematical Foundations of Computer Science (MFCS 2019)}, volume 138 of {\em Leibniz International Proceedings in Informatics (LIPIcs)}, pages 54:1--54:13, Dagstuhl, Germany, 2019. Schloss Dagstuhl -- Leibniz-Zentrum f{\"u}r Informatik.
\newblock \href {https://doi.org/10.4230/LIPIcs.MFCS.2019.54} {\path{doi:10.4230/LIPIcs.MFCS.2019.54}}.

\bibitem{Butti2021}
Silvia Butti and V{\'\i}ctor Dalmau.
\newblock {Fractional Homomorphism, Weisfeiler-Leman Invariance, and the Sherali-Adams Hierarchy for the Constraint Satisfaction Problem}.
\newblock In Filippo Bonchi and Simon~J. Puglisi, editors, {\em 46th International Symposium on Mathematical Foundations of Computer Science (MFCS 2021)}, volume 202 of {\em Leibniz International Proceedings in Informatics (LIPIcs)}, pages 27:1--27:19, Dagstuhl, Germany, 2021. Schloss Dagstuhl -- Leibniz-Zentrum f{\"u}r Informatik.
\newblock \href {https://doi.org/10.4230/LIPIcs.MFCS.2021.27} {\path{doi:10.4230/LIPIcs.MFCS.2021.27}}.

\bibitem{Boeker2019}
Jan Böker.
\newblock {C}olor {R}efinement, {H}omomorphisms, and {H}ypergraphs.
\newblock In Ignas Sau and Dimitrios~M. Thilikos, editors, {\em Graph-Theoretic Concepts in Computer Science}, volume 11789 of {\em Lecture Notes in Computer Science}, pages 338--350. Springer, Cham, 2019.

\bibitem{Cai1992}
Jin-Yi Cai, Martin Fürer, and Neil Immerman.
\newblock An optimal lower bound on the number of variables for graph identification.
\newblock {\em Combinatorica}, 12(4):389--410, December 1992.
\newblock \href {https://doi.org/10.1007/BF01305232} {\path{doi:10.1007/BF01305232}}.

\bibitem{Courcelle1993}
Bruno Courcelle.
\newblock Graph {{Grammars}}, {{Monadic Second-Order Logic And The Theory Of Graph Minors}}.
\newblock In Neil Robertson and Paul Seymour, editors, {\em Graph {{Structure Theory}}}, number 147 in Contemporary {{Mathematics}}. {American Mathematical Society}, 1993.

\bibitem{Dawar2021}
Anuj Dawar, {Tomá\v{s}} Jakl, and Luca Reggio.
\newblock Lovász-{{Type Theorems}} and {{Game Comonads}}.
\newblock In {\em 2021 36th {{Annual ACM}}/{{IEEE Symposium}} on {{Logic}} in {{Computer Science}} ({{LICS}})}, pages 1--13, June 2021.
\newblock \href {https://doi.org/10.1109/LICS52264.2021.9470609} {\path{doi:10.1109/LICS52264.2021.9470609}}.

\bibitem{Dell2018}
Holger Dell, Martin Grohe, and Gaurav Rattan.
\newblock Lov{\'{a}}sz meets {W}eisfeiler and {L}eman.
\newblock In {\em 45th International Colloquium on Automata, Languages, and Programming, {ICALP} 2018, July 9-13, 2018, Prague, Czech Republic}, volume 107 of {\em LIPIcs}, pages 40:1--40:14. Schloss Dagstuhl - Leibniz-Zentrum f{\"{u}}r Informatik, 2018.
\newblock \href {https://doi.org/10.4230/LIPIcs.ICALP.2018.40} {\path{doi:10.4230/LIPIcs.ICALP.2018.40}}.

\bibitem{Dvorak2010}
Zden{\v{e}}k Dvo{{\v{r}}}{\'{a}}k.
\newblock On recognizing graphs by numbers of homomorphisms.
\newblock {\em Journal of Graph Theory}, 64(4):330--342, 2010.
\newblock \href {https://doi.org/10.1002/jgt.20461} {\path{doi:10.1002/jgt.20461}}.

\bibitem{Fluck2024}
Eva Fluck, Tim Seppelt, and Gian~Luca Spitzer.
\newblock {Going Deep and Going Wide: Counting Logic and Homomorphism Indistinguishability over Graphs of Bounded Treedepth and Treewidth}.
\newblock In Aniello Murano and Alexandra Silva, editors, {\em 32nd EACSL Annual Conference on Computer Science Logic (CSL 2024)}, volume 288 of {\em Leibniz International Proceedings in Informatics (LIPIcs)}, pages 27:1--27:17, Dagstuhl, Germany, 2024. Schloss Dagstuhl -- Leibniz-Zentrum f{\"u}r Informatik.
\newblock \href {https://doi.org/10.4230/LIPIcs.CSL.2024.27} {\path{doi:10.4230/LIPIcs.CSL.2024.27}}.

\bibitem{Grohe2020}
Martin Grohe.
\newblock Counting {Bounded} {Tree} {Depth} {Homomorphisms}.
\newblock In {\em Proceedings of the 35th {Annual} {ACM}/{IEEE} {Symposium} on {Logic} in {Computer} {Science}}, {LICS} '20, pages 507--520, New York, NY, USA, July 2020. Association for Computing Machinery.
\newblock \href {https://doi.org/10.1145/3373718.3394739} {\path{doi:10.1145/3373718.3394739}}.

\bibitem{Grohe2020a}
Martin Grohe.
\newblock {W}ord2vec, {N}ode2vec, {G}raph2vec, {X}2vec: {T}owards a {T}heory of {V}ector {E}mbeddings of {S}tructured {D}ata.
\newblock In {\em Proceedings of the 39th ACM SIGMOD-SIGACT-SIGAI Symposium on Principles of Database Systems}, PODS'20, pages 1--16. ACM, 2020.
\newblock \href {https://doi.org/10.1145/3375395.3387641} {\path{doi:10.1145/3375395.3387641}}.

\bibitem{Grohe2021a}
Martin Grohe.
\newblock The {{Logic}} of {{Graph Neural Networks}}.
\newblock In {\em 2021 36th {{Annual ACM}}/{{IEEE Symposium}} on {{Logic}} in {{Computer Science}} ({{LICS}})}, pages 1--17, June 2021.
\newblock \href {https://doi.org/10.1109/LICS52264.2021.9470677} {\path{doi:10.1109/LICS52264.2021.9470677}}.

\bibitem{Grohe2014}
Martin Grohe, Kristian Kersting, Martin Mladenov, and Erkal Selman.
\newblock Dimension {Reduction} via {Colour} {Refinement}.
\newblock In Andreas~S. Schulz and Dorothea Wagner, editors, {\em Algorithms - {ESA} 2014}, Lecture {Notes} in {Computer} {Science}, pages 505--516, Berlin, Heidelberg, 2014. Springer.
\newblock \href {https://doi.org/10.1007/978-3-662-44777-2_42} {\path{doi:10.1007/978-3-662-44777-2_42}}.

\bibitem{Grohe2022}
Martin Grohe, Gaurav Rattan, and Tim Seppelt.
\newblock {Homomorphism Tensors and Linear Equations}.
\newblock In Miko{\l}aj Boja\'{n}czyk, Emanuela Merelli, and David~P. Woodruff, editors, {\em 49th International Colloquium on Automata, Languages, and Programming (ICALP 2022)}, volume 229 of {\em Leibniz International Proceedings in Informatics (LIPIcs)}, pages 70:1--70:20, Dagstuhl, Germany, 2022. Schloss Dagstuhl -- Leibniz-Zentrum f{\"u}r Informatik.
\newblock \href {https://doi.org/10.4230/LIPIcs.ICALP.2022.70} {\path{doi:10.4230/LIPIcs.ICALP.2022.70}}.

\bibitem{Immerman1990}
Neil Immerman and Eric Lander.
\newblock Describing {{Graphs}}: {{A First-Order Approach}} to {{Graph Canonization}}.
\newblock In Alan~L. Selman, editor, {\em Complexity {{Theory Retrospective}}: {{In Honor}} of {{Juris Hartmanis}} on the {{Occasion}} of {{His Sixtieth Birthday}}, {{July}} 5, 1988}, pages 59--81. {Springer}, {New York, NY}, 1990.
\newblock \href {https://doi.org/10.1007/978-1-4612-4478-3_5} {\path{doi:10.1007/978-1-4612-4478-3_5}}.

\bibitem{Kiefer2020}
Sandra Kiefer.
\newblock The {{Weisfeiler-Leman Algorithm}}: {{An Exploration}} of {{Its Power}}.
\newblock {\em ACM SIGLOG News}, 7(3):5--27, November 2020.
\newblock \href {https://doi.org/10.1145/3436980.3436982} {\path{doi:10.1145/3436980.3436982}}.

\bibitem{Lovasz1967}
László Lovász.
\newblock {O}perations with structures.
\newblock {\em Acta Mathematica Academiae Scientiarum Hungaricae}, 18(3):321--328, 1967.

\bibitem{Lovasz2009}
László Lovász and Balázs Szegedy.
\newblock Contractors and connectors of graph algebras.
\newblock {\em Journal of Graph Theory}, 60(1):11--30, 2009.
\newblock \href {https://doi.org/10.1002/jgt.20343} {\path{doi:10.1002/jgt.20343}}.

\bibitem{Mancinska2020}
Laura {Man\v{c}inska} and David~E. Roberson.
\newblock Quantum isomorphism is equivalent to equality of homomorphism counts from planar graphs.
\newblock In {\em 2020 {{IEEE}} 61st {{Annual Symposium}} on {{Foundations}} of {{Computer Science}} ({{FOCS}})}, pages 661--672, November 2020.
\newblock \href {https://doi.org/10.1109/FOCS46700.2020.00067} {\path{doi:10.1109/FOCS46700.2020.00067}}.

\bibitem{Montacute2022}
Yoàv Montacute and Nihil Shah.
\newblock The {{Pebble-Relation Comonad}} in {{Finite Model Theory}}.
\newblock In {\em Proceedings of the 37th {{Annual ACM}}/{{IEEE Symposium}} on {{Logic}} in {{Computer Science}}}, {{LICS}} '22, pages 1--11, {New York, NY, USA}, August 2022. {Association for Computing Machinery}.
\newblock \href {https://doi.org/10.1145/3531130.3533335} {\path{doi:10.1145/3531130.3533335}}.

\bibitem{Morris2019}
Christopher Morris, Martin Ritzert, Matthias Fey, William~L. Hamilton, Jan~Eric Lenssen, Gaurav Rattan, and Martin Grohe.
\newblock Weisfeiler and {{Leman Go Neural}}: {{Higher-Order Graph Neural Networks}}.
\newblock {\em Proceedings of the AAAI Conference on Artificial Intelligence}, 33(01):4602--4609, July 2019.
\newblock \href {https://doi.org/10.1609/aaai.v33i01.33014602} {\path{doi:10.1609/aaai.v33i01.33014602}}.

\bibitem{Neuen2023}
Daniel Neuen.
\newblock Homomorphism-{{Distinguishing Closedness}} for {{Graphs}} of {{Bounded Tree-Width}}, July 2023.
\newblock \href {https://arxiv.org/abs/2304.07011} {\path{arXiv:2304.07011}}, \href {https://doi.org/10.48550/arXiv.2304.07011} {\path{doi:10.48550/arXiv.2304.07011}}.

\bibitem{Rattan2023}
Gaurav Rattan and Tim Seppelt.
\newblock Weisfeiler-{{Leman}} and {{Graph Spectra}}.
\newblock In {\em Proceedings of the 2023 {{Annual ACM-SIAM Symposium}} on {{Discrete Algorithms}} ({{SODA}})}, Proceedings, pages 2268--2285. {Society for Industrial and Applied Mathematics}, January 2023.
\newblock \href {https://doi.org/10.1137/1.9781611977554.ch87} {\path{doi:10.1137/1.9781611977554.ch87}}.

\bibitem{Roberson2022}
David~E. Roberson.
\newblock Oddomorphisms and homomorphism indistinguishability over graphs of bounded degree, June 2022.
\newblock \href {https://arxiv.org/abs/2206.10321} {\path{arXiv:2206.10321}}, \href {https://doi.org/10.48550/arXiv.2206.10321} {\path{doi:10.48550/arXiv.2206.10321}}.

\bibitem{Scheidt2023}
Benjamin Scheidt and Nicole Schweikardt.
\newblock Counting {Homomorphisms} from {Hypergraphs} of {Bounded} {Generalised} {Hypertree} {Width}: {A} {Logical} {Characterisation}.
\newblock In Jérôme Leroux, Sylvain Lombardy, and David Peleg, editors, {\em 48th {International} {Symposium} on {Mathematical} {Foundations} of {Computer} {Science} ({MFCS} 2023)}, volume 272 of {\em Leibniz {International} {Proceedings} in {Informatics} ({LIPIcs})}, pages 79:1--79:15, Dagstuhl, Germany, 2023. Schloss Dagstuhl -- Leibniz-Zentrum für Informatik.
\newblock Full version available at arXiv: \href{https://arxiv.org/abs/2303.10980}{arXiv:2303.10980 [cs.LO]}.
\newblock \href {https://doi.org/10.4230/LIPIcs.MFCS.2023.79} {\path{doi:10.4230/LIPIcs.MFCS.2023.79}}.

\bibitem{Seppelt2023}
Tim Seppelt.
\newblock {Logical Equivalences, Homomorphism Indistinguishability, and Forbidden Minors}.
\newblock In J\'{e}r\^{o}me Leroux, Sylvain Lombardy, and David Peleg, editors, {\em 48th International Symposium on Mathematical Foundations of Computer Science (MFCS 2023)}, volume 272 of {\em Leibniz International Proceedings in Informatics (LIPIcs)}, pages 82:1--82:15, Dagstuhl, Germany, 2023. Schloss Dagstuhl -- Leibniz-Zentrum f{\"u}r Informatik.
\newblock \href {https://doi.org/10.4230/LIPIcs.MFCS.2023.82} {\path{doi:10.4230/LIPIcs.MFCS.2023.82}}.

\bibitem{Shervashidze}
Nino Shervashidze.
\newblock Weisfeiler-{{Lehman Graph Kernels}}.
\newblock {\em Journal of Machine Learning Research}, 12(77):2539--2561, 2011.
\newblock URL: \url{http://jmlr.org/papers/v12/shervashidze11a.html}.

\bibitem{Xu2018}
Keyulu Xu, Weihua Hu, Jure Leskovec, and Stefanie Jegelka.
\newblock How {{Powerful}} are {{Graph Neural Networks}}?
\newblock In {\em International {{Conference}} on {{Learning Representations}}}, September 2018.
\newblock URL: \url{https://openreview.net/forum?id=ryGs6iA5Km}.

\end{thebibliography}
\clearpage

\appendix

\section{Proof of Proposition~\ref{prop:shd-hd-closures}}\label{app:shd-hd-closures}

\subparagraph*{Closure under Pumping} %
It is easy to see that a \emph{bijective} elimination forest stays a bijective elimination forest, when we add a fresh vertex to the neighbourhood of a single blue vertex, since there already exists a node in the forest that maps to this blue vertex.
Obviously \cref{def:hd:vertex-complete,def:hd:edge-containment} remain satisfied when the vertex is added.
And since it is only contained in this single edge, \cref{def:hd:shared-heritage} remains satisfied as well.

On the other hand, if the elimination forest $(\F, \Gamma)$ is not bijective, problems arise when a fresh vertex is added to a hyperedge $e \not\in \img(\Gamma)$, i.e., one that $\Gamma$ does not map to.
In this case, \cref{def:hd:vertex-complete} and \cref{def:hd:edge-containment} are no longer satisfied.
To fix this, we have to map to $e$ and this may mean inserting a new node into $\F$, which might increase its height.
A simple example where this happens is when we add a vertex to one of the singleton hyperedges of $\GG_k$ from \cref{thm:hd-hom-are-skewed}.

\subparagraph*{Closure under Local Merging} %
Let $\I$ be an incidence graph and let $(\F, \Gamma)$ be an elimination forest for $\I$.
Let $u,v \in \R(\I)$ be distinct vertices and let $e \in \B(\I)$ such that $u,v \in \blueN(e)$.
Let $\pi\colon \R(\I) \to \R(\I)$ be defined by $\pi(u) = v$ and $\pi(w) = w$ for all $w \in \R(\I) \setminus \set{ u }$.
Let $\I'$ be the projection of $\I$ that is defined as follows:
\[
	\R(\I') = \pi(\R(\I)),\; \B(\I') = \B(\I), \quad\text{and}\quad \E(\I') = \set{ (e, \pi(v)) \mid (e, v) \in \E(\I) }.
\]

Now we simply have to verify that $(\F, \Gamma)$ is still an elimination forest for $\I'$.
In particular, if $(\F, \Gamma)$ is a strict elimination forest, it will remain strict, since $\Gamma$ is still bijective.
To avoid confusion, we write $\blueN'(e)$ to denote the set $\set{ v \in \R(\I') \mid (e, v) \in \E(\I') }$ and $\hGamma'(t)$ to denote the set $\blueN'(\Gamma(t))$.
We have to show that \cref{def:hd:vertex-complete,def:hd:edge-containment,def:hd:shared-heritage} still hold with respect to $\blueN'$ and $\hGamma'$.

\smallskip
\noindent\emph{Verifying \cref{def:hd:vertex-complete}.}
Since $\R(\I') \subseteq \R(\I)$ we know that for every $w \in \R(\I')$ there is a $t \in \V(\F)$ such that $w \in \blueN(\Gamma(t))$.
Since by construction $\pi(w) = w$ for all $w \in \R(\I')$, we get that $(e, w) \in \E(\I)$ implies $(e, w) \in \E(\I')$.
Therefore, $w \in \blueN'(\Gamma(t))$, i.e., $w \in \hGamma'(t)$.

\smallskip
\noindent\emph{Verifying \cref{def:hd:edge-containment}.}
Notice that $\B(\I') = \B(\I)$.
Let $e \in \B(\I')$, let $s,t \in \V(\F)$ be the nodes in $\V(\F)$ for $e$ according to \cref{def:hd:edge-containment} with respect to $\I$.
Let $w' \in \blueN'(e)$.
Then there exists a $w \in \R(\I)$ such that $(e, \pi(w)) \in \E(\I)$ and $\pi(w) = w'$.
Since $\blueN(e) \subseteq \bigunion \hGamma(\p(s,t))$, there exists a $p \in \p(s,t)$ such that $w \in \hGamma(p)$, i.e., $(\Gamma(p), w) \in \E(\I)$.
This implies by construction, that $(\Gamma(p), \pi(w)) \in \E(\I')$, i.e., $(\Gamma(p), w') \in \E(\I')$.
Thus, $w' \in \hGamma'(p)$, i.e., $w' \in \hGamma'(\p(s,t))$.
Since we chose an arbitrary $w' \in \blueN'(e)$, this yields $\blueN'(e) \subseteq \bigunion \hGamma'(\p(s,t))$.

\smallskip
\noindent\emph{Verifying \cref{def:hd:shared-heritage}.}
Again, notice that $\B(\I') = \B(\I)$.
Let $s,t \in \V(\F)$ and let $w' \in \hGamma'(s) \intersect \hGamma'(t)$.

\emph{Case 1}:\; Assume there exists a $w \in \R(\I)$ such that $\pi(w) = w'$, and $(\Gamma(s), w), (\Gamma(t), w) \in \E(\I)$.
Thus, $w \in \hGamma(s) \intersect \hGamma(t)$.
Since $(\F, \Gamma)$ is an elimination forest for $\I$, according to \cref{def:hd:shared-heritage} the node $\lcv(s,t)$ exists and $w \in \bigunion \hGamma(\p(\lcv(s,t)))$.
I.e., there is a $p \in \p(\lcv(s,t))$ such that $w \in \hGamma(p)$, i.e., $(\Gamma(p), w) \in \E(\I)$.
Hence, by construction, $(\Gamma(p), w') \in \E(\I')$, i.e., $w' \in \hGamma'(p)$, which means $w' \in \bigunion \hGamma'(\p(\lcv(s,t)))$.

\emph{Case 2}: Otherwise, $\pi(u) = w$ and $\pi(v) = w$, and $w = v$ must hold.
W.l.o.g.\ let $(\Gamma(s), u) \in \E(\I)$ and $(\Gamma(t), v) \in \E(\I)$.
Since $u$ and $v$ appear in a common hyperedge $e \in \B(\I)$, there are $s',t' \in \V(\F)$ such that $u,v \in \bigunion \hGamma(\p(s',t'))$.
I.e., there are $p_u, p_v \in \p(s',t')$ such that $u \in \hGamma(p_u)$ and $v \in \hGamma(p_v)$.
Thus, $u \in \hGamma(p_u) \intersect \hGamma(s)$ and $v \in \hGamma(t) \intersect \hGamma(p_v)$.
This means that $u \in \bigunion \hGamma(\p(\lcv(s,p_u)))$ and $v \in \bigunion \hGamma(\p(\lcv(t, p_v)))$.
Since $p_u, p_v \in \p(s', t')$, either $p_u \leq_\F p_v$ or $p_v \leq_\F p_u$, and because of that also $\lcv(s,p_u) \leq_\F \lcv(t,p_v)$ or $\lcv(t,p_v) \leq_\F \lcv(s,p_u)$.
Since $\F$ is a forest, this means that $\lcv(s,t)$ exists and 
$\lcv(s,t) \geq_\F \lcv(\lcv(s,p_u))$ or $\lcv(s,t) \geq_{\F} \lcv(t,p_v)$.
Therefore, $u \in \bigunion \hGamma(\p(\lcv(s,t)))$ or $v \in \bigunion \hGamma(\p(\lcv(s,t)))$, i.e., $w' \in \bigunion \hGamma'(\p(\lcv(s,t)))$. %

\section{Proof of Theorem~\ref{thm:hd-hom-are-skewed}}\label{app:hd-hom-are-skewed}

\begin{observation}\label{obs:hom-of-conn-hg-is-conn}
	Let $\HG$, $\GG$ be graphs and let $(\hV, \hE)$ be a homomorphism from $\HG$ into $\GG$.
	If $\HG$ is connected, then the homomorphic image of $\HG$ in $\GG$ is also connected.
\end{observation}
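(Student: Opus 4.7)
The plan is to reduce the statement to the familiar slogan that homomorphic images of connected objects are connected, carried out here in the combinatorial setting. First I would pin down the meaning of \emph{homomorphic image}: the sub(hyper)graph of $\GG$ on vertex set $\hV(\V(\HG))$ and edge set $\hE(\E(\HG))$ (respectively, in the incidence-graph view, the subgraph on $\hV(\R(\HG))$ and $\hE(\B(\HG))$ containing at least all incidences of the form $(\hE(e), \hV(v))$ for $(e,v) \in \E(\I_\HG)$). Since connectedness of a hypergraph is defined via its incidence graph, it suffices to prove the claim for incidence graphs, which I would do uniformly.

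Next, pick two arbitrary vertices $x, y$ of the image. Choose preimages $x', y' \in \V(\I_\HG)$ under $(\hV, \hE)$. Since $\I_\HG$ is connected, there is a walk $x' = z_0, z_1, \ldots, z_n = y'$ in $\I_\HG$, alternating between red and blue vertices. Applying the homomorphism entrywise gives a sequence of vertices in the image; by the defining property of an incidence-graph homomorphism, namely $(\hE(e), \hV(v)) \in \E(\I_\GG)$ whenever $(e,v) \in \E(\I_\HG)$, consecutive entries remain adjacent in $\I_\GG$, and every vertex and edge used lies in the image. Hence the image contains a walk, and therefore a path, from $x$ to $y$.

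There is really no obstacle here; the only care needed is in bookkeeping between the hypergraph and incidence-graph pictures, and in fixing what ``homomorphic image'' means (since in principle one could include more edges of $\GG$ incident to the image, but including fewer would only make connectivity harder, so the argument works for any reasonable convention). This also lines up with \cref{lem:restrict_to_connected}: on a connected incidence graph one can always restrict attention to trees, and the observation will typically be invoked in exactly that setting.
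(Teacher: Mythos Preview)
Your argument is correct and is the standard way to establish this elementary fact: pull back to preimages, use a walk in the connected domain, and push it forward via the homomorphism condition to obtain a walk in the image. The paper itself does not prove this statement; it records it as an observation and uses it without further justification. So there is nothing to compare against, and your write-up is more than sufficient.
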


For every hypergraph $\HG$ and $e \in \E(\HG)$, let $\HG \setminus e$ be the hypergraph $\HG'$ with $\V(\HG') = \V(\HG) \setminus \f_{\HG}(e)$, $\E(\HG') = \E(\HG) \setminus \set{ e }$ and let $\f_{\HG'}$ be the corresponding restriction of $\f_{\HG}$.

\begin{lemma}\label{lem:homs-from-shd-are-short}
	A connected hypergraph $\HG \in \SHDk$ cannot have a homomorphism $(\hV, \hE)$ into $\Pn_n$ for any $n \geq 2^k$, such that $\hV$ and $\hE$ are surjective.
\end{lemma}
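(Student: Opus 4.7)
I would argue by induction on $k$. The base case $k=1$ is immediate: a connected hypergraph in $\SHD_1$ has exactly one hyperedge (by bijectivity of $\Gamma$ and height $\leq 1$), which under $\hE$ can only map to a single hyperedge of $\Pn_n$, so surjectivity of $\hE$ forces $n = 1 < 2 = 2^1$.

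For the inductive step, take $\HG \in \SHDk$ connected and suppose a surjective homomorphism $(\hV, \hE)\colon \HG \to \Pn_n$ with $n \geq 2^k$. Choose a strict elimination tree $(\T, \Gamma)$ of height at most $k$; by \cref{lem:restrict_to_connected}, $\T$ is a tree. Let $r$ be its root, $e_0 := \Gamma(r)$ and $\hE(e_0) = e_a$ for some $a \in [n]$. Removing $e_a$ from $\Pn_n$ leaves two sub-paths $P_L, P_R$ whose total number of hyperedges is $n - 1 \geq 2^k - 1$; pigeonholing gives, say, $|P_L| \geq \lceil (2^k - 1)/2 \rceil = 2^{k-1}$. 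Every preimage of any hyperedge of $P_L$ lies in $\E(\HG) \setminus \{e_0\}$ and hence in one of the sub-hypergraphs $\HG_j$ with hyperedge set $\Gamma(\V(\T_{c_j}))$, where $c_j$ ranges over the children of $r$.

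The technical core is then to show: \emph{each connected component $C$ of each $\HG_j$ belongs to $\SHD_{k-1}$}. The key tool is that, by iterating shared heritage in $\T$, for every vertex $v$ of $\HG$ the set $A_v := \{t \in \V(\T) \mid v \in \hGamma(t)\}$ has a unique $\leq_\T$-maximum $t^*(v) \in A_v$ which is an ancestor of every node in $A_v$. For any $s,t \in \Gamma^{-1}(\E(C))$ sharing a vertex $v$, whenever $t^*(v) \neq r$ shared heritage transfers to the sub-forest $\T_{c_j}[\Gamma^{-1}(\E(C))]$ directly. The only obstruction is $t^*(v) = r$, meaning $v \in \beta(e_0)$; one then uses connectedness of $C$ (plus the observation that whenever a path in $\I_\HG$ crosses between two subtrees it must do so through $\beta(e_0)$) to reorganise the induced sub-forest into a strict elimination tree for $C$ of height still $\leq k-1$. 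Once this claim is in hand, \cref{obs:hom-of-conn-hg-is-conn} implies that the image of each component $C$ under $(\hV,\hE)$ is a connected sub-hypergraph of $\Pn_n$, hence a sub-path $\Pn_{m_C}$, and the induction hypothesis applied to $C$ yields $m_C < 2^{k-1}$.

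To close the argument, I would consider the sub-hypergraph $\HG_L := \{e_0\} \cup \hE^{-1}(P_L)$ with its induced vertex set. Using the boundary-vertex argument above, $\HG_L$ is connected (each component of $\hE^{-1}(P_L)$ meets $\beta(e_0)$), surjects onto the sub-path of $\Pn_n$ of length $a \geq 2^{k-1}+1$, and, by attaching the strict elimination trees of its components (which lie in $\SHD_{k-1}$ by the technical claim) below $e_0$ in a way that respects shared heritage, can be shown to lie in $\SHD_{k-1}$ as well. Applying the induction hypothesis to $\HG_L$ then forces $a < 2^{k-1}$, contradicting $a \geq 2^{k-1}+1$.

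The main obstacle is the technical claim that each connected component $C$ of $\HG_j$ is in $\SHD_{k-1}$: the naive candidate—the restriction $\T_{c_j}[\Gamma^{-1}(\E(C))]$—may fail shared heritage exactly for boundary vertices in $\beta(e_0)$ with $t^*(v) = r$, and the reorganisation into a valid strict elimination tree without increasing the height requires careful use of connectedness and of the uniqueness of $t^*(v)$. A secondary obstacle is the refinement needed to argue that $\HG_L$, although it naturally inherits a height-$k$ elimination tree, in fact lies in $\SHD_{k-1}$, so that the induction hypothesis applies.
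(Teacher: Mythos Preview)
Your inductive setup and the base case are fine, and deleting the root hyperedge $e_0=\Gamma(r)$ is the right first move. The genuine gap is your ``technical core'': the claim that every connected component $C$ of $\HG_j$ lies in $\SHD_{k-1}$ is \emph{false}, and no reorganisation of the sub-forest can save it. Take $\HG$ with $\f(e_0)=\{v\}$, $\f(e_1)=\{v,a\}$, $\f(e_2)=\{a,b\}$, $\f(e_3)=\{b,v\}$. The tree with root $e_0$, child $e_2$, grandchildren $e_1,e_3$ is a valid strict elimination tree of height~$3$, so $\shd(\HG)=3$; but the single component below the root is the $3$-cycle on $\{e_1,e_2,e_3\}$, and one checks directly that it admits no strict elimination tree of height~$2$, i.e.\ its strict hypertree depth is $3=k$, not $k-1$. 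The obstruction you flag (a boundary vertex $v\in\f(e_0)$ with $t^*(v)=r$) is exactly what makes this cycle rigid; it is not a matter of ``careful reorganisation''. For the same reason your secondary claim $\HG_L\in\SHD_{k-1}$ cannot be expected to hold.

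The paper sidesteps this by deleting the \emph{vertices} of $\f(e_0)$ along with $e_0$: set $\HG':=\HG\setminus e_0$, where each remaining hyperedge $e'$ has its content shrunk to $\f(e')\setminus\f(e_0)$. Then $(\T\setminus\{r\},\Gamma|)$ is a strict elimination forest for $\HG'$ of height $\leq k-1$ with no further work, because any violation of shared heritage in a subtree would force the shared vertex into $\hGamma(r)=\f(e_0)$, and those vertices no longer exist (in the example above, $\HG'$ becomes the path $\{a\}$--$\{a,b\}$--$\{b\}$, which has $\shd=2$). The finish is then a one-line count: by \cref{obs:hom-of-conn-hg-is-conn} and the induction hypothesis, each connected component of $\HG'$ maps onto a sub-path of at most $2^{k-1}-1$ edges; since $\HG$ was connected, every such component meets $\f(e_0)$ in $\HG$, so its image is adjacent to $e_a=\hE(e_0)$. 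Hence the whole image has at most $1+2(2^{k-1}-1)=2^k-1<2^k$ edges, a contradiction. No auxiliary $\HG_L$ is needed.
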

\begin{proof}
	Clearly this holds for $k = 1$ since all connected $\HG \in \SHD_1$ consist of a single hyperedge, which cannot be mapped to two hyperedges.

	Assume there is a $k \geq 2$ for which the statement of the lemma does not hold.
	Consider the smallest $k$, for which there is a connected hypergraph $\HG \in \SHDk$ and a homomorphism $(\hV, \hE)$ from $\HG$ into $\Pn_n$ for $n \geq 2^k$.
	Let $(\T, \Gamma)$ be a strict elimination tree for $\HG$ and let $\troot$ be the root of $\T$.
	Let $\HG' \isdef \HG \setminus \Gamma(\troot)$.
	Then the restriction $(\hV', \hE')$ of $(\hV, \hE)$ to $\V(\HG')$ and $\E(\HG')$ is still a homomorphism from $\HG'$ into $\Pn_n$, but $\hV'$ or $\hE'$ (or both) may no longer be surjective and $\shd(\HG') = k-1$.
	
	If $\HG'$ is not connected, consider the connected components $C_1, \dots, C_\ell$ of $\HG'$.
	The restriction $(\hV^i, \hE^i)$ of $(\hV', \hE')$ to $C_i$ is a homomorphism from $C_i$ to $\Pn_n$ and according to \cref{obs:hom-of-conn-hg-is-conn} it is connected, but since $k$ is minimal, it covers a path of at most $2^{k{-}1}-1$ hyperedges.
	And since $\HG$ was connected, these paths must all touch $\hE(\Gamma(\troot))$.
	Thus, the number of hyperedges $\HG$ covered could have been at most $1 + 2 \cdot (2^{k{-}1}-1) = 2^k - 1$, which contradicts our assumption.
	Therefore, $\HG'$ must still be connected.
	
	If $\hV'$ and $\hE'$ are surjective, $k$ was not minimal, i.e., $\hV'$ or $\hE'$ must no longer be surjective.
	Since $\HG'$ is still connected, the homomorphic image must be as well.
	Thus, $\hE(\Gamma(\troot))$ must be the first or the last edge.
	I.e., $\HG'$ has a surjective homomorphism to $\Pn_{n-1}$.
	Since $\shd(\HG') = k-1$ and $n \geq 2^k$ and $k$ was minimal, this means that $n-1 < 2^{k-1}$.
	This is the case, iff $n=1$ and $k=0$, which contradicts $k \geq 2$.
\end{proof}

Notice that this lemma only holds for $\SHDk$.
If we do not require $(\T, \Gamma)$ to be strict, the proof fails since we can not be sure that the hypertree depth of $\HG'$ is smaller than that of $\HG$.
But since $\HDk \subseteq \SHD_{k{+}1}$, we get the following corollary.
\begin{corollary}\label{cor:homs-from-hd-are-short}
	A connected hypergraph $\HG \in \HDk$ cannot have a homomorphism $(\hV, \hE)$ into $\Pn_n$ for any $n \geq 2^{k{+}1}$, such that $\hV$ and $\hE$ are surjective.
\end{corollary}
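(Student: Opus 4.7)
My plan is to derive this corollary directly from \cref{lem:homs-from-shd-are-short} by invoking the containment $\HDk \subseteq \SHD_{k+1}$ that falls out of \cref{thm:shd-is-almost-hd}.

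The first step is to observe that by \cref{thm:shd-is-almost-hd}, every hypergraph $\HG$ with $\hd(\HG) \leq k$ satisfies $\shd(\HG) \leq k+1$, and thus $\HDk \subseteq \SHD_{k+1}$. So any connected $\HG \in \HDk$ is in particular a connected hypergraph in $\SHD_{k+1}$.

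Now suppose, for contradiction, that such a connected $\HG \in \HDk$ admits a homomorphism $(\hV, \hE)$ into $\Pn_n$ for some $n \geq 2^{k+1}$ with both $\hV$ and $\hE$ surjective. Applying \cref{lem:homs-from-shd-are-short} with parameter $k+1$ in place of $k$, this would force $n < 2^{k+1}$, contradicting the assumption on $n$. Therefore no such homomorphism exists, which is exactly the claim.

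I do not anticipate a substantive obstacle here: the entire corollary is essentially a reindexing of the preceding lemma via the bound $\shd \leq \hd + 1$, and no further combinatorial argument on elimination forests or on $\Pn_n$ is required. The only thing worth flagging in the write-up is the reason that the stronger bound $n \geq 2^k$ (which would be the naive guess when passing from $\SHDk$ to $\HDk$) does not apply: the proof of \cref{lem:homs-from-shd-are-short} crucially used that removing the root of a strict elimination tree strictly decreases the strict hypertree depth, and this argument breaks for (non-strict) elimination forests, so one really does pay a factor of $2$ in the bound when moving from $\SHDk$ to $\HDk$.
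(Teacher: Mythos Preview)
Your proposal is correct and follows exactly the paper's approach: the corollary is obtained from \cref{lem:homs-from-shd-are-short} via the containment $\HDk \subseteq \SHD_{k+1}$ supplied by \cref{thm:shd-is-almost-hd}, and your remark about why the bound loses a factor of~$2$ matches the paper's explanation preceding the corollary.
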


For the proof of \cref{thm:hd-hom-are-skewed}, we need to tighten this result a bit further.
\begin{lemma}\label{lem:homs-from-hd-are-short}
	A connected hypergraph $\HG \in \HDk$ cannot have a homomorphism $(\hV, \hE)$ into $\Pn_n$ for any $n \geq 2^{k{+}1}-2$, such that $\hV$ and $\hE$ are surjective.
\end{lemma}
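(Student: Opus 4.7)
I would prove this by induction on $k$, following the structure of the proof of \cref{lem:homs-from-shd-are-short} but exploiting the stronger hypertree depth assumption to sharpen the bound. For the base case $k = 1$, a connected $\HG \in \HD_1$ has an elimination forest of height $1$, so there is a hyperedge $e^*$ containing all vertices of $\HG$ and all other hyperedges' contents. Any hypergraph homomorphism $(\hV, \hE)$ to $\Pn_n$ forces $\hV(\f(e^*)) = \f(\hE(e^*))$, a $2$-element set. Since $\V(\HG) \subseteq \f(e^*)$, the image $\hV(\V(\HG))$ has at most two elements and cannot surject onto $\Pn_n$'s vertex set for $n \geq 2 = 2^{k+1} - 2$, giving the contradiction.

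For the inductive step, suppose the claim holds for $k - 1$ and let $\HG \in \HDk$ be connected with surjective homomorphism $(\hV, \hE)$ to $\Pn_n$ for $n \geq 2^{k+1} - 2$. Fix an elimination tree $(\T, \Gamma)$ of height $\leq k$ (a tree by \cref{lem:restrict_to_connected}) and assume $\Gamma$ is injective, using the preprocessing from the proof of \cref{thm:shd-is-almost-hd}. Let $e^* = \Gamma(\troot)$ with $\hE(e^*) = e_j$. The key claim is that every connected component $C_i$ of $\HG \setminus e^*$ satisfies $\hd(C_i) \leq k - 1$. Granting this, I would mirror the proof of \cref{lem:homs-from-shd-are-short}: each $C_i$'s image in $\Pn_n$ is a contiguous subpath that must touch $\{e_{j-1}, e_j, e_{j+1}\}$, because every vertex of $C_i$ adjacent to $e^*$ in $\HG$ has image in $\{j, j+1\}$, which must lie in $\f(\hE(e))$ for the incident hyperedge $e \in \E(C_i)$. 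By the induction hypothesis each such subpath covers at most $2^k - 3$ edges, so the component covering $e_1$ must span from $e_1$ to at least $e_{j-1}$, giving $j - 1 \leq 2^k - 3$; analogously the component covering $e_n$ yields $n - j \leq 2^k - 3$. Adding gives $n \leq 2^{k+1} - 5 < 2^{k+1} - 2$, contradicting the assumption.

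The main obstacle is proving the key claim $\hd(C_i) \leq k - 1$, since the elimination tree may be non-strict (so $\Gamma$ can miss hyperedges entirely, called \emph{floating} hyperedges). For each child $u$ of $\troot$, I would define $H_u^-$ as the sub-hypergraph of $\HG \setminus e^*$ consisting of hyperedges in $\Gamma(\V(\T_u))$ together with the floating hyperedges whose contents are covered by a path in $\T_u \cup \{\troot\}$ in the original $\T$; this assignment is well defined because paths in $\T$ are monotonic, so a floating hyperedge's content lies entirely within a single subtree-extended path. Shared heritage at $\troot$ forces hyperedges in distinct $H_u^-$'s to share vertices only through $\f(e^*)$, which is removed in $\HG \setminus e^*$; hence each component $C_i$ lies entirely within one $H_u^-$. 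One then verifies that $(\T_u, \Gamma|_{\T_u})$ is an elimination tree of height $\leq k-1$ for $H_u^-$: completeness and shared heritage carry over by subtracting $\f(e^*)$-vertices, edge containment is automatic for hyperedges in $\img(\Gamma|_{\T_u})$, and for floating hyperedges it follows from the observation that after removing $\f(e^*)$-vertices, their contents are entirely covered by a sub-path of $\T_u$. A minor additional technicality, handled as in the proof of \cref{lem:homs-from-shd-are-short}, is that the restriction of $(\hV, \hE)$ to a component may fail strict hypergraph-homomorphism equality because of removed vertices in $\f(e^*)$; this is sidestepped by applying the induction hypothesis to the sub-hypergraph of $\HG$ induced by $\E(C_i)$ with the original (unreduced) contents, which admits a genuine hypergraph homomorphism with the same image.
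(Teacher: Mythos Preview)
Your inductive argument mimics the proof of \cref{lem:homs-from-shd-are-short}, but the paper proceeds entirely differently: it attaches two fresh ``handle'' hyperedges $f=\{v,u\}$ and $g=\{v',w\}$ (with $u,w$ new) at pre-images of the endpoints of $\Pn_n$, obtaining $\HG'$ together with a surjective hypergraph homomorphism onto $\Pn_{n+2}$. Using the construction from the proof of \cref{thm:shd-is-almost-hd} to make the height-$k$ elimination tree of $\HG$ strict (height $\leq k{+}1$), and then inserting the nodes for $f,g$ below \emph{original} (hence level-$\leq k$) nodes witnessing $v,v'$, one gets $\shd(\HG')\leq k{+}1$; since $n{+}2\geq 2^{k+1}$, this contradicts \cref{lem:homs-from-shd-are-short}. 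No induction, no decomposition of $\HG$.

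Your approach has a real gap at the very point you flag as ``a minor additional technicality''. You establish $\hd(C_i)\leq k{-}1$ for components $C_i$ of $\HG\setminus e^*$ (vertices of $e^*$ removed), but then pass to $\tilde C_i$, the sub-hypergraph of $\HG$ on $\E(C_i)$ with \emph{original} contents, in order to recover a genuine hypergraph homomorphism. The induction hypothesis, however, requires $\hd(\tilde C_i)\leq k{-}1$, which you have not shown and which fails in general: restoring the deleted vertices of $\f(e^*)$ is pumping, and $\HD_{k-1}$ is not closed under pumping (\cref{prop:shd-hd-closures}). Concretely, for $k=2$ take $e^*=\{1,2\}$, $e_1=\{1,3\}$, $e_2=\{2,3\}$; the tree with root $\troot\mapsto e^*$ and single child $t_1\mapsto e_1$ (so $e_2$ is floating) witnesses $\hd(\HG)=2$. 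After deleting $e^*$ the sole component $C_1$ has both hyperedges reduced to $\{3\}$, so $\hd(C_1)=1$, yet $\tilde C_1$ is the $2$-path $\{1,3\},\{2,3\}$ with $\hd(\tilde C_1)=2$, and it surjects onto $\Pn_2$, exceeding your claimed bound $2^k-3=1$ on a component's image length. Your count $n\leq 2^{k+1}-5$ therefore does not go through. The paper's reduction avoids this obstacle because it never has to control the hypertree depth of pieces of $\HG$.
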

\begin{proof}
	Assume for contradiction, that $\HG \in \HDk$ has a homomorphism $(\hV, \hE)$ into $\Pn_{n}$ for $n \geq 2^{k{+}1}-2$, such that $\hV$ and $\hE$ are surjective.
	Let $\E(\Pn_n) = \set{ e_1, \dots, e_n }$ and $\V(\Pn_n) = \set{ v_1, \dots, v_{n{+}1} }$.

	Let $(\T, \Gamma)$ be an elimination forest for $\HG$ of height $k$.
	There are $e, e' \in \E(\HG)$ and $v \in \f(e), v' \in \f(e')$ such that $\hE(e) = e_1$, $\hE(e') = e_n$, and $\hV(v) = v_1$, $\hV(v') = v_{n{+}1}$.
	Consider the hypergraph $\HG'$ with $\V(\HG') = \V(\HG) \disunion \set{ u,w }$, $\E(\HG') = \E(\HG) \disunion \set{ f,g }$ and $\f_{\HG'} = \f_{\HG} \union \set{ f \to \set{ v, u }, g \to \set{ v',w } }$.
	It is easy to see that $(\T, \Gamma)$ can be turned into a strict elimination tree for $\HG'$ of height $k{+}1$: First we apply the method presented in the proof of \cref{thm:shd-is-almost-hd}.
	Then we insert two new nodes that map to $f$ and $g$.
	It is not hard to see that we can add these new nodes as children of \emph{inner nodes}, which means the height does not increase any further.

	It is also not hard to see that $(\hV', \hE')$ is a homomorphism from $\HG'$ to $\Pn_{n{+}2}$, where $(\hV', \hE')$ are defined in the following way:
	\begin{itemize}
		\item 
		Let $\hV'(u) = v_1$, $\hV'(w) = v_{n{+}3}$ and for all $v \in \V(\HG)$, let $\hV'(v) = v_{i{+}1}$, where $i \in [n{+}1]$ such that $\hV(v) = v_i$.

		\item
		Let $\hE'(f) = e_1$, $\hE'(g) = e_{n{+}2}$ and for all $e \in \E(\HG)$, let $\hE'(e) = e_{i{+}1}$, where $i \in [n]$ such that $\hE(e) = e_i$.
	\end{itemize}
	This contradicts \cref{lem:homs-from-shd-are-short}.
\end{proof}

\begin{observation}\label{obs:hd-inv-to-subedges}
	Inserting a hyperedge into $\HG$, whose contents are a subset of an already existing hyperedge does not increase the hypertree depth.
\end{observation}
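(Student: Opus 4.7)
The plan is to show that any elimination forest for $\HG$ is also, without any modification, an elimination forest for the enlarged hypergraph $\HG'$ obtained by inserting a new hyperedge $e'$ with $\f_{\HG'}(e') \subseteq \f_{\HG}(e)$ for some existing $e \in \E(\HG)$. Since no new red vertices are introduced (we only add a fresh blue vertex $e'$, while its red neighbours already belong to $\R(\I_\HG)$), I would take an elimination forest $(\F, \Gamma)$ for $\I_\HG$ of minimum height and argue that, reinterpreted as a map $\Gamma\colon \V(\F) \to \B(\I_{\HG'})$, the pair $(\F, \Gamma)$ is still an elimination forest for $\I_{\HG'}$. This immediately yields $\hd(\HG') \leq \hd(\HG)$, and the reverse inequality is trivial since $\HG$ is obtained from $\HG'$ by deleting a hyperedge.

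The verification splits across the three conditions of \cref{def:hd}. \cref{def:hd:vertex-complete} is unchanged since $\R(\I_{\HG'}) = \R(\I_\HG)$ and $\Gamma$ is unchanged. \cref{def:hd:shared-heritage} is likewise unaffected: the sets $\hGamma(s) \intersect \hGamma(t)$ depend only on $\Gamma$ and on the red neighbourhoods of blue vertices in $\img(\Gamma) \subseteq \B(\I_\HG)$, neither of which changes when $e'$ is added. The only genuinely new case is \cref{def:hd:edge-containment} for the fresh blue vertex $e'$; here I would invoke the condition for $e$ in the original elimination forest to obtain $s \leq_\F t$ with $\blueN(e) \subseteq \bigunion \hGamma(\p(s,t))$, and then observe that $\blueN(e') \subseteq \blueN(e)$ lets the same $s, t$ witness the condition for $e'$.

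There is no real technical obstacle, but the subtle point worth flagging is that the argument crucially uses the \emph{non-strict} version of the definition: we are never forced to map any node of $\F$ to $e'$, since \cref{def:hd:edge-containment} only requires coverage by some path and not bijectivity of $\Gamma$. If one tried to prove an analogous statement for strict hypertree depth, one would have to insert a fresh node mapped to $e'$, which generally cannot be done without increasing the height — mirroring the asymmetry between $\HDk$ and $\SHDk$ already reflected in the failure of $\HDk$ to be closed under pumping (\cref{prop:shd-hd-closures}).
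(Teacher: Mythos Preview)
Your argument is correct and is precisely the natural one: reuse the same elimination forest, note that \cref{def:hd:vertex-complete} and \cref{def:hd:shared-heritage} are untouched, and for \cref{def:hd:edge-containment} let the path witnessing $e$ also witness the new $e'$ via $\blueN(e') \subseteq \blueN(e)$. The paper states this as an observation without proof, so there is nothing to compare against; your write-up simply spells out what the paper leaves implicit. (The remark about the reverse inequality is unnecessary, since the observation only asserts $\hd(\HG') \leq \hd(\HG)$.)
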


\begin{observation}[due to~\cite{Adler2012}]\label{obs:hd-of-paths}
	For all $n \in \natpos$, $\hd(\Pn_n) = \lfloor \log(n+2) \rfloor$.
\end{observation}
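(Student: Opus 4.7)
The plan is to prove both directions by induction on $d := \lfloor \log_2(n+2) \rfloor$; equivalently, to prove that $\hd(\Pn_n) \leq d$ if and only if $n \leq N(d)$, where $N(d) := 2^{d+1} - 3$.

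For the upper bound $\hd(\Pn_n) \leq d$, I would construct an elimination tree of height $d$ for $\Pn_{N(d)}$; any $\Pn_n$ with $n \leq N(d)$ inherits such a tree by embedding. The construction places a node at every odd-indexed hyperedge $e_1, e_3, \ldots, e_{N(d)}$ (yielding $2^d - 1$ nodes), arranged as a balanced binary search tree on the indices $\{1, 3, \ldots, N(d)\}$ with root at the median index. The even-indexed ``gap'' hyperedges $e_{2j}$ are not mapped to, but are covered by the combined path from the nodes at $e_{2j-1}$ and $e_{2j+1}$: two values adjacent in the in-order of a BST are always ancestor-descendant, so the ancestor-descendant path in the tree realises the combined contents $\{2j-1, 2j\} \cup \{2j+1, 2j+2\} \supseteq \{2j, 2j+1\}$, establishing hyperedge containment for $e_{2j}$. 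Vertex completeness is immediate because the odd-indexed hyperedges collectively cover all vertices of $\Pn_{N(d)}$, and shared heritage is automatic since the contents of distinct odd-indexed hyperedges are pairwise disjoint. The height of the balanced BST is $d$, as required.

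For the lower bound, I would prove by induction on $d$ that any elimination tree $(\T, \Gamma)$ of height $d$ for $\Pn_n$ forces $n \leq N(d)$. Fix such a tree with root $r$ and $\Gamma(r) = e_k$. By shared heritage, the subtrees $\T_1, \ldots, \T_\ell$ rooted at children of $r$ may only share red vertices inside $\hGamma(r) = \{k, k+1\}$; this forces the image of each $\T_i$ into a contiguous interval $\{e_{l_i}, \ldots, e_{r_i}\}$ lying entirely on one side of $e_k$ (except for possible overlap at $e_{k-1}$ or $e_{k+1}$). Viewing the ``left-side'' group of subtrees (and analogously the ``right-side'' group) as an elimination structure of height $d-1$ for the left sub-path of $\Pn_n$, the root's contribution of vertex $k$ acts as a donated boundary vertex, which by an auxiliary induction allows the handling of up to $N(d-1)+1$ hyperedges on that side. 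Combining both sides together with the root gives $n \leq 1 + 2(N(d-1) + 1) = 2N(d-1) + 3 = N(d)$, completing the induction.

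The main obstacle is the lower-bound structural analysis: rigorously arguing that the child-rooted subtrees of $r$ partition into left-side and right-side groups with contiguous image intervals, and that each side yields a depth-$(d-1)$ subtree capable of handling at most $N(d-1)+1$ hyperedges thanks to the donated boundary vertex. I expect this to require a coupled induction over two parameters, ``depth'' and ``number of donated boundary vertices'' (zero, one, or two), simultaneously proving matching bounds for $N(d)$ and its donated variants. The balance between upper- and lower-bound recursions, both converging to $N(d) = 2N(d-1) + 3$ with $N(1) = 1$, then yields the closed form $N(d) = 2^{d+1} - 3$ and hence the claim.
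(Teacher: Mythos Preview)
The paper does not give its own proof of this observation; it is stated with attribution to Adler, Gaven\v{c}iak and Klimo\v{s}ov\'a. Your upper-bound BST construction on the odd-indexed hyperedges is correct and clean: the $2^d-1$ odd edges have pairwise disjoint contents (so shared heritage is vacuous), they jointly cover all $2^{d+1}-2$ vertices, and BST-adjacent odd edges are always ancestor--descendant, which takes care of hyperedge containment for the even ``gap'' edges.

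Your lower bound, however, has a genuine gap. The assertion that shared heritage ``forces the image of each $\T_i$ into a contiguous interval lying entirely on one side of $e_k$'' is false. For instance, with $\Gamma(r)=e_5$ one can have a child $c$ of $r$ with $\Gamma(c)=e_3$ and a grandchild $s$ with $\Gamma(s)=e_7$: since $\{3,4\}\cap\{7,8\}=\emptyset$, shared heritage is vacuous between $c$ and $s$, so a single subtree may straddle both sides of the root edge. Hence the partition of subtrees into ``left-side'' and ``right-side'' groups does not exist as you describe it, and the inductive step as written does not go through.

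The recursion $N(d)=2N(d-1)+3$ is still the right target. One way to repair the argument is to work with the \emph{vertex} sets $V_i\setminus\{k,k+1\}$, which (unlike the hyperedge images) genuinely partition $[n{+}1]\setminus\{k,k+1\}$ by shared heritage, and then run the donated-boundary induction on the two vertex intervals $\{1,\ldots,k-1\}$ and $\{k+2,\ldots,n+1\}$ separately, building an elimination forest of height $\leq d-1$ for each side from the union of all child subtrees (not a subset of them). Alternatively, the paper's \cref{lem:homs-from-hd-are-short}---which it proves via strict hypertree depth and \cref{thm:shd-is-almost-hd}---yields the lower bound immediately: $\Pn_n$ has the identity as a surjective homomorphism onto itself, so $\hd(\Pn_n)\leq k$ forces $n\leq 2^{k+1}-3$, whence $\lfloor\log_2(n+2)\rfloor\leq k$.
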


\hdhomareskewed*
\begin{proof}
\noindent\textcolor{lipicsGray}{\textbf{\textsf{1.}}}\;
For all $k \geq 2$ let $\GG_k$ and $\HG_k$ be the hypergraphs with 
\begin{align*}
	\V(\GG_k) &= \V(\HG_k) = \V(\Pn_{2^{k{+}1}+1}) = \set{ u_1, u_2, \dots }\,,\\
	\E(\GG_k) &= \E(\HG_k) = \E(\Pn_{2^{k{+}1}+1}) \disunion \set{ e,f,g }\,, \quad\text{and}\\
	\f_{\GG_k} &= \set{ e \to \set{ u_{2^{k{+}1}+2}, u_1 }, f \to \set{ u_1 }, g \to \set{ u_{2^k+1} } } \union \f_{\Pn_{2^{k{+}1}+1}}\,,\\
	\f_{\HG_k} &= \set{ e \to \set{ u_{2^{k{+}1}+2}, u_1 }, f \to \set{ u_1 }, g \to \set{ u_{2^k+2} } } \union \f_{\Pn_{2^{k{+}1}+1}}\,.
\end{align*}
I.e., $\GG_k$ and $\HG_k$ both are circles of $2^{k{+}1}+2$ edges, with two additional singleton hyperedges.
The difference is in the distance between those: There are $2^k$ hyperedges between them in $\GG_k$, but $2^k+1$ in $\HG_k$.

$\GG_k$ and $\HG_k$ are distinguished by $\Pn'$ with $\V(\Pn') = \V(\Pn_{2^k}) = \set{ v_1, v_2, \dots }$, $\E(\Pn') = \E(\Pn') \disunion \set{ f', g' }$ and $\f_{\Pn'} = \set{ f' \to v_1, g' \to v_{2^k+1} } \union \f_{\Pn_{2^k}}$.
Using \cref{obs:hd-of-paths,obs:hd-inv-to-subedges} we know that $\hd(\Pn') = \lfloor \log(2^{k}+2) \rfloor$ and since $k \geq 2$, $\hd(\Pn') = k$.
Thus, $\GG_k$ and $\HG_k$ are homomorphism distinguishable over $\HDk$.

To show that they are indistinguishable over $\SHDk$ we use \cref{lem:homs-from-shd-are-short}.
Let $\HG \in \SHDk$.
We may assume that $\HG$ is connected.
According to \cref{obs:hom-of-conn-hg-is-conn,lem:homs-from-shd-are-short} we know that for every homomorphism $(\hV, \hE)$ from $\HG$ to $\GG_k$, either $f \not\in \img(\hE)$ or $g \not\in \img(\hE)$.
The same holds for the homomorphisms from $\HG$ to $\HG_k$.
It is easy to see that this means we can give a bijective mapping between the homomorphisms into $\GG_k$ and those into $\HG_k$.
I.e., $\hom(\HG, \GG_k) = \hom(\HG, \HG_k)$.

\medskip
\noindent\textcolor{lipicsGray}{\textbf{\textsf{2.}}}\;
We use a similar construction and \cref{lem:homs-from-hd-are-short}.
For all $k \geq 2$ let $\GG_k'$ and $\HG_k'$ be the hypergraphs with 
\begin{align*}
	\V(\GG_k') &= \V(\HG_k') = \V(\Pn_{2^{k{+}2}+1}) \disunion \set{ v,w } = \set{ u_1, u_2, \dots } \disunion \set{ v,w }\,,\\
	\E(\GG_k') &= \E(\HG_k') = \E(\Pn_{2^{k{+}2}+1}) \disunion \set{ e,f,g }\,, \quad\text{and}\\
	\f_{\GG_k} &= \set{ e \to \set{ u_{2^{k{+}2}+2}, u_1 }, f \to \set{ u_1, v }, g \to \set{ u_{2^{k+1}-2}, w } } \union \f_{\Pn_{2^{k{+}1}+1}}\,,\\
	\f_{\HG_k} &= \set{ e \to \set{ u_{2^{k{+}2}+2}, u_1 }, f \to \set{ u_1, v }, g \to \set{ u_{2^{k+1}-1} , w }} \union \f_{\Pn_{2^{k{+}1}+1}}\,.
\end{align*}
I.e., $\GG_k$ and $\HG_k$ both are circles of $2^{k{+}2}+2$ edges, with two additional hyperedges \enquote{sticking out as handles}.
Again, the only difference is in the distance between those handles: There are $2^{k{+}1}-3$ hyperedges between them in $\GG_k$ and $2^{k+1}-2$ in $\HG_k$.

With the same argument as before one can see that $\Hom(\HDk, \GG_k) = \Hom(\HDk, \GG_k)$, because this time the handles are not included in the outer edges of the path, rather they \emph{are} the outer edges of the path.
Thus, the path a homomorphism would have to map to, to distinguish $\GG_k$ and $\HG_k$ contains $2^{k{+}1}-3+2 = 2^{k{+}1}-1$ edges.

Consider $\Pn'$ with $\V(\Pn') = \V(\Pn_{2^{k+1}-3}) = \set{ v_1, v_2, \dots }$, $\E(\Pn') = \E(\Pn_{2^{k+1}-3}) \disunion \set{ f', g' }$ and $\f_{\Pn'} = \set{ f' \to v_1, g' \to v_{2^k-1} } \union \f_{\Pn_{2^{k{+}1}-3}}$.
Obviously, $\hd(\Pn') = \lfloor \log(2^{k+1}-3+2) \rfloor = k$ and $\Pn'$ is \enquote{too short} to distinguish $\GG_k$ and $\HG_k$.
But we can map the singleton hyperedges $f'$ and $g'$ of $\Pn'$ onto the handles if we consider homomorphisms from $\I_{\Pn'}$.
Thus, $\hom(\I_{\Pn'}, \GG_k) \neq \hom(\I_{\Pn'}, \HG_k)$.\qedhere \end{proof}
 
\section{Missing Definitions in Section~\ref{sec:k-labeled-incidence-graphs}}%
\label{app:k-labeled}

The following is taken almost verbatim from the full version of~\cite{Scheidt2023} to assure that we are using the same definitions as them,
so that we can also use their proofs.

\begin{definition}\label{def:precise-def:changing-labels}
	Let $\kLI = (\I, r, b, g)$ be a $k$-labeled incidence graph.
	Let $\XR \subseteq \natpos$ be finite, and let $\XB \subseteq [k]$.
	\begin{itemize}
	\item
	Removing from the red vertices all the labels in $\XR$ is achieved by the operation $\removeR{\kLI}{\XR} \isdef (\I, r', b, g')$ where $r'$ is the restriction of $r$ to $\dom(r) \setminus \XR$ and $g'$ is the restriction of $g$ to $\dom(g) \setminus \XR$.
		
	\item
	Removing from the blue vertices all the labels in $\XB$ is achieved by the operation $\removeB{\kLI}{\XB} \isdef (\I,r,b',g)$ where $b'$ is the restriction of $b$ to $\dom(b) \setminus \XB$.

	\item
	Let $\enum{ i_1, \dots, i_\ell }$ be the enumeration of $\XR$.
	For every $\tupel{v} = (v_1, \ldots, v_\ell) \in {\R(\I)}^\ell$ we let $\changeR{\kLI}{\XR}{\tupel{v}} \isdef (\I,r',b,g)$ with $\dom(r') = \dom(r) \union \XR$ and $r'(i_j) = v_j$ for all $j \in [\ell]$ and $r'(i) = r(i)$ for all $i \in \dom(r) \setminus \XR$.

	\item
	Let $\enum{ i_1, \dots, i_\ell }$ be the enumeration of $\XB$.
	For every $\tupel{e}=(e_1,\ldots,e_\ell) \in {\B(\I)}^\ell$ we let $\changeB{\kLI}{\XB}{\tupel{e}} \isdef (\I,r,b',g)$ with $\dom(b') = \dom(b) \union \XB$ and $b'(i_j) = e_j$ for all $j \in [\ell]$ and $b'(i) = b(i)$ for all $i \in \dom(b) \setminus \XB$.
	\endhere%
	\end{itemize}
\end{definition}

\begin{definition}[Glueing $k$-labeled incidence graphs]%
	\label{def:precise-def:glueop}
	Let $\kLI_i \isdef (\I_i, r_i, b_i, g_i)$ be a $k$-labeled incidence graph for $i\in[2]$.
	The \emph{glueing operation} produces the $k$-labeled incidence graph
	$(\kLI_1 \cdot \kLI_2) \isdef (\I, r, b, g)$ in the following	way.
	\smallskip

	\noindent Let $\I' \isdef \I_1 \disunion \I_2$ be	the disjoint union of $\I_1$ and $\I_2$.
	Precisely, we let
	\begin{align*}
		\R(\I') &\;\;\isdef\;\; \textstyle\bigcup_{i \in [2]} (\R(\I_i) \times \set{i}), \quad
		\B(\I') \;\;\isdef\;\; \textstyle\bigcup_{i \in [2]}(\B(\I_i) \times \set{i}) \quad\text{and}\\
		\E(\I') &\;\;\isdef\;\; \textstyle\bigcup_{i \in [2]} \set{ \bigl( (e,i), (v,i) \bigr) \mid (e,v) \in \E(\I_i)}.
	\end{align*}
	
	Let $\sim_R$ be the equivalence relation on $\R(\I')$ obtained as the reflexive, symmetric, and transitive closure of the relation $\set{ \bigl( (r_1(j),1),(r_2(j),2) \bigr) \mid j \in \dom(r_1) \intersect \dom(r_2)}$.
	Let ${[v]}_{\sim_R}$ denote the equivalence class of each $v \in \R(\I')$.

	Let $\sim_B$ be the equivalence relation on $\B(\I')$ obtained as the reflexive, symmetric, and transitive closure of the relation
	$\set{ \bigl( (b_1(j),1), (b_2(j),2) \bigr) \mid j \in \dom(b_1) \intersect \dom(b_2) }$.
	Let ${[e]}_{\sim_B}$ denote the equivalence class of each $e \in \B(\I')$.
	\smallskip

	\noindent $\I$ is the incidence graph $(\R(\I), \B(\I), \E(\I))$ for
	\begin{align*}
		\R(\I) &\isdef \set{{[v]}_{\sim_R} \mid v\in\R(\I')}, \quad
		\B(\I) \isdef \set{{[e]}_{\sim_B} \mid e \in \B(\I') } \quad\text{and}\\
		\E(\I) &\isdef \set{ \bigl( {[(e,i)]}_{\sim_B}\,,\,{[(v,i)]}_{\sim_R} \bigr) \mid i \in [2],\ (e,v) \in \E(\I_i)}.
	\end{align*}

	\noindent Next we define the functions $r, b, g$.
	\begin{description}
		\item[\textup{(r)}] 
		Let $r\colon \natpos \pto \R(\I)$ with $\dom(r) = \dom(r_1) \union \dom(r_2)$ be defined by
		$r(j) \isdef {[(r_1(j),1)]}_{\sim_R}$ for all $j \in \dom(r_1)$ and
		$r(j) \isdef {[(r_2(j),2)]}_{\sim_R}$ for all $j \in \dom(r_2) \setminus \dom(r_1)$.

		\item[\textup{(b)}] 
		Let $b\colon [k] \pto \B(\I)$ with $\dom(b) = \dom(b_1) \union \dom(b_2)$ be defined by
		$b(j) \isdef {[(b_1(j),1)]}_{\sim_B}$ for all $j \in \dom(b_1)$ and
		$b(j) \isdef {[(b_2(j),2)]}_{\sim_B}$ for all $j \in \dom(b_2) \setminus \dom(b_1)$.

		\item[\textup{(g)}] 
		Let $g \isdef g_1 \union g_2$ \ (recall from \cref{sec:preliminaries} that	this means that $g(j) = g_1(j)$ for all $j \in \dom(g_1)$ and $g(j) = g_2(j)$ for all	$j \in \dom(g_2)\setminus \dom(g_1)$).
	\end{description}
	\smallskip
	
	\noindent	Finally, for both $i \in [2]$ we define mappings $\RvertexMap{\kLI_i}\colon \R(\I_i) \to \R(\I)$ and $\BvertexMap{\kLI_i}\colon \B(\I_i) \to \B(\I)$ by
	\begin{align*}
		\RvertexMap{\kLI_i}(v) &\isdef {[(v,i)]}_{\sim_R} \quad\all\; v \in \R(\I_i)\; \text{and} \\
		\BvertexMap{\kLI_i}(e) &\isdef {[(e,i)]}_{\sim_B} \quad\all\; e \in \B(\I_i).
	\end{align*}
	Note that $\RvertexMap{\kLI_i}(v)$ is the red vertex of	$\I_{(\kLI_1 \cdot \kLI_2)}$ that corresponds to $v \in \R(\I_i)$, and $\BvertexMap{\kLI_i}(e)$ is the blue vertex of
	$\I_{(\kLI_1 \cdot \kLI_2)}$ that corresponds to $e \in \B(\I_i)$.
\end{definition}

\begin{definition}\label{def:mf}
	Let $f\colon \natpos \pto [k]$ with finite, non-empty $\dom(f)$.
	The $k$-labeled incidence graph $\Mf$ \emph{defined by $f$} is the $k$-labeled incidence graph $\kLI=(\I,r,b,g)$ with $g \isdef f$, where $\I$ consists of a red vertex $v_i$ for every $i \in \dom(f)$, 
	a blue vertex $e_j$ for every $j \in \img(f)$, and an edge $(e_{f(i)},v_i)$	for every $i \in \dom(f)$, and where $\dom(r) = \dom(f)$ and $r(i) = v_i$	for all $i \in \dom(r)$, and $\dom(b) = \img(f)$ and $b(j) = e_j$ for all	$j \in \dom(b)$.
	\end{definition}

\begin{definition}\label{def:precise-def:transition}
	Consider a partial function $g\colon \natpos \pto [k]$.
	A \emph{transition for $g$} is a partial function $f\colon \natpos \pto [k]$ with $\emptyset \neq \dom(f) \subseteq \dom(g)$ satisfying the following:
	for every $i\in \dom(g)$ with $g(i) \in \img(f)$ we have $i \in \dom(f)$.
	
	Let $\kLI = (\I, r, b, g)$ be a $k$-labeled incidence graph.
	$f$ is a transition for $\kLI$, if it is a transition for $g$.
	We can \emph{apply} the transition $f$ to $\kLI$ and obtain the $k$-labeled incidence graph $\transition{\kLI}{f} \isdef (\Mf \cdot \removeB{\kLI}{\XB})$, where  $\XB \isdef \img(g) \intersect \img(f) \intersect \dom(b)$ and $\Mf$ is defined as in \cref{def:mf}.
\end{definition} 
\section{Exemplary Construction}%
\label{app:characterising-hd}

\begin{figure}[t]
	\centering
	\begin{subfigure}{.53\textwidth}
		\centering

\begin{tikzpicture}[
	every node/.style={regular},
]
	\node[blue] (e7) at (\cebase,\cheight) {$a$};						%
	\node[blue] (e6) at (\cebase+1*\cstep,\cheight) {$b$};	%
	\node[blue] (e5) at (\cebase+2*\cstep,\cheight) {$c$};	%
	\node[blue] (e4) at (\cebase+3*\cstep,\cheight) {$d$};	%
	\node[blue] (e3) at (\cebase+4*\cstep,\cheight) {$e$};	%
	\node[blue] (e2) at (\cebase+5*\cstep,\cheight) {$f$};	%
	\node[blue] (e1) at (\cebase+6*\cstep,\cheight) {$g$};	%

	\node[red] (v1) at (\cvbase, 0) {$s$};								%
	\node[red] (v2) at (\cvbase+1*\cstep, 0) {$t$};				%
	\node[red] (v3) at (\cvbase+2*\cstep, 0) {$u$};				%
	\node[red] (v4) at (\cvbase+3*\cstep, 0) {$v$};				%
	\node[red] (v5) at (\cvbase+4*\cstep, 0) {$w$};				%
	\node[red] (v6) at (\cvbase+5*\cstep, 0) {$x$};				%
	\node[red] (v7) at (\cvbase+6*\cstep, 0) {$y$};				%
	\node[red] (v8) at (\cvbase+7*\cstep, 0) {$z$};				%

	\draw[edge] (e7) -- (v1);
	\draw[edge] (e7) -- (v2);

	\draw[edge] (e6) -- (v2);
	\draw[edge] (e6) -- (v3);

	\draw[edge] (e5) -- (v3);
	\draw[edge] (e5) -- (v4);

	\draw[edge] (e4) -- (v4);
	\draw[edge] (e4) -- (v5);

	\draw[edge] (e3) -- (v5);
	\draw[edge] (e3) -- (v6);

	\draw[edge] (e2) -- (v6);
	\draw[edge] (e2) -- (v7);

	\draw[edge] (e1) -- (v7);
	\draw[edge] (e1) -- (v8);
\end{tikzpicture} 		\caption{The path $\I_{\Pn_7}$ on 8 vertices.}\label{fig:construction:reference-path}%
	\end{subfigure}
	\hfill
	\begin{subfigure}{.42\textwidth}
		\centering

\begin{tikzpicture}[
	every node/.style={draw, rectangle, font={\footnotesize}},
	level distance=8mm,level/.style={sibling distance=30mm/#1}
	]
	\node[label=left:{$\troot$:}, level distance=2cm] { $t_1{:} \set{ v,w }$ }
	child { node {$t_2{:} \set{ t,u }$} edge from parent [Stealth-] 
		child { node {$t_4{:} \set{ s,t }$} }
		child { node {$t_5{:} \set{ u,v }$} }
	}
	child { node {$t_3{:} \set{ x,y }$} edge from parent [Stealth-] 
		child { node {$t_6{:} \set{ w,x }$} }
		child { node {$t_7{:} \set{ y,z }$} }
	};
\end{tikzpicture} 		\caption{A (strict) elimination tree for $\I_{\Pn_7}$.}%
		\label{fig:construction:reference-path-tree}%
	\end{subfigure}
	\caption{}\label{fig:construction}
\end{figure}

\begin{example}\label{app:example-construction}
	Consider the path $\Pn_7$ with $\V(\Pn_7) = \set{ s,t,u,v,w,x,y,z }$ and $\E(\Pn_7) = \set{ \set{ s,t }, \set{ t,u }, \set{ u,v }, \set{ v,w }, \set{ w,x }, \set{ x,y }, \set{ y,z } }$ and its incidence graph $\I_{\Pn_7}$ as depicted in \cref{fig:construction:reference-path}.
	We show how the method behind the proof of \cref{lem:bounded-shd-in-glik} applied on $\I_{\Pn_7}$ and the elimination tree depicted in \cref{fig:construction:reference-path-tree} constructs a label-free $\kLI \in \GLI_3^3$ whose skeleton is isomorphic to $\I_{\Pn_7}$.

	We construct $\kLI$ bottom up along the elimination tree.
	In the following, we encode the label of a vertex as its exponent and the guard function as thicker edges between the red vertex and its guard.
	For every leaf $t_4, t_5, t_6, t_7$, we take the following $k$-labeled incidence graphs $\kLI_{t_i} \in \GLI_3^0$.
	\begin{center}
		\begin{tabularx}{\columnwidth}{C C}
			$\kLI_{t_4}$:
			&
			$\kLI_{t_5}$:
			\\
			\adjustbox{valign=c}{

\begin{tikzpicture}[
	every node/.style={regular},
]
	\node[blue] (e7) at (\cebase,\cheight) {$a^3$};						%
	\node[blue] (e6) at (\cebase+1*\cstep,\cheight) {$b^2$};	%
	\node[blue] (e4) at (\cebase+3*\cstep,\cheight) {$d^1$};	%

	\node[red] (v1) at (\cvbase, 0) {$s^1$};								%
	\node[red] (v2) at (\cvbase+1*\cstep, 0) {$t^2$};				%
	\node[red] (v3) at (\cvbase+2*\cstep, 0) {$u^3$};				%
	\node[red] (v4) at (\cvbase+3*\cstep, 0) {$v^4$};				%
	\node[red] (v5) at (\cvbase+4*\cstep, 0) {$w^5$};				%

	\draw[guard] (e7) -- (v1);
	\draw[edge] (e7) -- (v2);

	\draw[guard] (e6) -- (v2);
	\draw[guard] (e6) -- (v3);

	\draw[guard] (e4) -- (v4);
	\draw[guard] (e4) -- (v5);
\end{tikzpicture} 			}
			&
			\adjustbox{valign=c}{

\begin{tikzpicture}[
	every node/.style={regular},
]
	\node[blue] (e6) at (\cebase+1*\cstep,\cheight) {$b^2$};	%
	\node[blue] (e5) at (\cebase+2*\cstep,\cheight) {$c^3$};	%
	\node[blue] (e4) at (\cebase+3*\cstep,\cheight) {$d^1$};	%

	\node[red] (v2) at (\cvbase+1*\cstep, 0) {$t^2$};				%
	\node[red] (v3) at (\cvbase+2*\cstep, 0) {$u^3$};				%
	\node[red] (v4) at (\cvbase+3*\cstep, 0) {$v^4$};				%
	\node[red] (v5) at (\cvbase+4*\cstep, 0) {$w^5$};				%

	\draw[guard] (e6) -- (v2);
	\draw[guard] (e6) -- (v3);

	\draw[edge] (e5) -- (v3);
	\draw[edge] (e5) -- (v4);

	\draw[guard] (e4) -- (v4);
	\draw[guard] (e4) -- (v5);
\end{tikzpicture} 			}\\
			$\kLI_{t_6}$:
			&
			$\kLI_{t_7}$:\\
			\adjustbox{valign=c}{

\begin{tikzpicture}[
	every node/.style={regular},
]
	\node[blue] (e4) at (\cebase+3*\cstep,\cheight) {$d^1$};	%
	\node[blue] (e3) at (\cebase+4*\cstep,\cheight) {$e^3$};	%
	\node[blue] (e2) at (\cebase+5*\cstep,\cheight) {$f^2$};	%

	\node[red] (v4) at (\cvbase+3*\cstep, 0) {$v^4$};				%
	\node[red] (v5) at (\cvbase+4*\cstep, 0) {$w^5$};				%
	\node[red] (v6) at (\cvbase+5*\cstep, 0) {$x^6$};				%
	\node[red] (v7) at (\cvbase+6*\cstep, 0) {$y^7$};				%

	\draw[guard] (e4) -- (v4);
	\draw[guard] (e4) -- (v5);

	\draw[edge] (e3) -- (v5);
	\draw[edge] (e3) -- (v6);

	\draw[guard] (e2) -- (v6);
	\draw[guard] (e2) -- (v7);
\end{tikzpicture} 			}
			&
			\adjustbox{valign=c}{

\begin{tikzpicture}[
	every node/.style={regular},
]
	\node[blue] (e4) at (\cebase+3*\cstep,\cheight) {$d^1$};	%
	\node[blue] (e2) at (\cebase+5*\cstep,\cheight) {$f^2$};	%
	\node[blue] (e1) at (\cebase+6*\cstep,\cheight) {$g^3$};	%

	\node[red] (v4) at (\cvbase+3*\cstep, 0) {$v^4$};				%
	\node[red] (v5) at (\cvbase+4*\cstep, 0) {$w^5$};				%
	\node[red] (v6) at (\cvbase+5*\cstep, 0) {$x^6$};				%
	\node[red] (v7) at (\cvbase+6*\cstep, 0) {$y^7$};				%
	\node[red] (v8) at (\cvbase+7*\cstep, 0) {$z^8$};				%

	\draw[guard] (e4) -- (v4);
	\draw[guard] (e4) -- (v5);

	\draw[guard] (e2) -- (v6);
	\draw[guard] (e2) -- (v7);

	\draw[edge] (e1) -- (v7);
	\draw[guard] (e1) -- (v8);
\end{tikzpicture} 			}
		\end{tabularx}
	\end{center}
	$\kLI_{t_2}$ is then constructed as $(\kLI_{t_4}' \cdot \kLI_{t_5}')$, where
	\begin{center}
		\begin{tabularx}{\columnwidth}{C C}
			$\kLI_{t_4}' = \removeB{\removeR{\kLI_{t_4}}{\set{ 1 }}}{\set{ 3 }}$:
			&
			$\kLI_{t_5}' = \removeB{\kLI_{t_5}}{\set{ 3 }}$:
			\\
			\adjustbox{valign=c}{

\begin{tikzpicture}[
	every node/.style={regular},
]
	\node[blue] (e7) at (\cebase,\cheight) {$a$};						%
	\node[blue] (e6) at (\cebase+1*\cstep,\cheight) {$b^2$};	%
	\node[blue] (e4) at (\cebase+3*\cstep,\cheight) {$d^1$};	%

	\node (v1) at (\cvbase, 0) {$s$};								%
	\node[red] (v2) at (\cvbase+1*\cstep, 0) {$t^2$};				%
	\node[red] (v3) at (\cvbase+2*\cstep, 0) {$u^3$};				%
	\node[red] (v4) at (\cvbase+3*\cstep, 0) {$v^4$};				%
	\node[red] (v5) at (\cvbase+4*\cstep, 0) {$w^5$};				%

	\draw[edge] (e7) -- (v1);
	\draw[edge] (e7) -- (v2);

	\draw[guard] (e6) -- (v2);
	\draw[guard] (e6) -- (v3);

	\draw[guard] (e4) -- (v4);
	\draw[guard] (e4) -- (v5);
\end{tikzpicture} 			}
			&
			\adjustbox{valign=c}{

\begin{tikzpicture}[
	every node/.style={regular},
]
	\node[blue] (e6) at (\cebase+1*\cstep,\cheight) {$b^2$};	%
	\node[blue] (e5) at (\cebase+2*\cstep,\cheight) {$c$};	%
	\node[blue] (e4) at (\cebase+3*\cstep,\cheight) {$d^1$};	%

	\node[red] (v2) at (\cvbase+1*\cstep, 0) {$t^2$};				%
	\node[red] (v3) at (\cvbase+2*\cstep, 0) {$u^3$};				%
	\node[red] (v4) at (\cvbase+3*\cstep, 0) {$v^4$};				%
	\node[red] (v5) at (\cvbase+4*\cstep, 0) {$w^5$};				%

	\draw[guard] (e6) -- (v2);
	\draw[guard] (e6) -- (v3);

	\draw[edge] (e5) -- (v3);
	\draw[edge] (e5) -- (v4);

	\draw[guard] (e4) -- (v4);
	\draw[guard] (e4) -- (v5);
\end{tikzpicture} 			}
		\end{tabularx}
	\end{center}
	I.e., $\kLI_{t_2} \in \GLI_3^1$ is the following $k$-labeled incidence graph:
	\begin{center}

\begin{tikzpicture}[
	every node/.style={regular},
]
	\node[blue] (e7) at (\cebase,\cheight) {$a$};						%
	\node[blue] (e6) at (\cebase+1*\cstep,\cheight) {$b^2$};	%
	\node[blue] (e5) at (\cebase+2*\cstep,\cheight) {$c$};	%
	\node[blue] (e4) at (\cebase+3*\cstep,\cheight) {$d^1$};	%

	\node (v1) at (\cvbase, 0) {$s$};								%
	\node[red] (v2) at (\cvbase+1*\cstep, 0) {$t^2$};				%
	\node[red] (v3) at (\cvbase+2*\cstep, 0) {$u^3$};				%
	\node[red] (v4) at (\cvbase+3*\cstep, 0) {$v^4$};				%
	\node[red] (v5) at (\cvbase+4*\cstep, 0) {$w^5$};				%

	\draw[edge] (e7) -- (v1);
	\draw[edge] (e7) -- (v2);

	\draw[guard] (e6) -- (v2);
	\draw[guard] (e6) -- (v3);

	\draw[edge] (e5) -- (v3);
	\draw[edge] (e5) -- (v4);

	\draw[guard] (e4) -- (v4);
	\draw[guard] (e4) -- (v5);
\end{tikzpicture} 	\end{center}
	Analogously, $\kLI_{t_3}$ is constructed as $(\kLI_{t_6}' \cdot \kLI_{t_7}')$, where
	\begin{center}
		\begin{tabularx}{\columnwidth}{C C}
			$\kLI_{t_6}' = \removeB{\kLI_{t_6}}{\set{ 3 }}$:
			&
			$\kLI_{t_7}' = \removeB{\removeR{\kLI_{t_7}}{8}}{\set{ 3 }}$:
			\\
			\adjustbox{valign=c}{

\begin{tikzpicture}[
	every node/.style={regular},
]
	\node[blue] (e4) at (\cebase+3*\cstep,\cheight) {$d^1$};	%
	\node[blue] (e3) at (\cebase+4*\cstep,\cheight) {$e$};	%
	\node[blue] (e2) at (\cebase+5*\cstep,\cheight) {$f^2$};	%

	\node[red] (v4) at (\cvbase+3*\cstep, 0) {$v^4$};				%
	\node[red] (v5) at (\cvbase+4*\cstep, 0) {$w^5$};				%
	\node[red] (v6) at (\cvbase+5*\cstep, 0) {$x^6$};				%
	\node[red] (v7) at (\cvbase+6*\cstep, 0) {$y^7$};				%

	\draw[guard] (e4) -- (v4);
	\draw[guard] (e4) -- (v5);

	\draw[edge] (e3) -- (v5);
	\draw[edge] (e3) -- (v6);

	\draw[guard] (e2) -- (v6);
	\draw[guard] (e2) -- (v7);
\end{tikzpicture} 			}
			&
			\adjustbox{valign=c}{

\begin{tikzpicture}[
	every node/.style={regular},
]
	\node[blue] (e4) at (\cebase+3*\cstep,\cheight) {$d^1$};	%
	\node[blue] (e2) at (\cebase+5*\cstep,\cheight) {$f^2$};	%
	\node[blue] (e1) at (\cebase+6*\cstep,\cheight) {$g$};	%

	\node[red] (v4) at (\cvbase+3*\cstep, 0) {$v^4$};				%
	\node[red] (v5) at (\cvbase+4*\cstep, 0) {$w^5$};				%
	\node[red] (v6) at (\cvbase+5*\cstep, 0) {$x^6$};				%
	\node[red] (v7) at (\cvbase+6*\cstep, 0) {$y^7$};				%
	\node (v8) at (\cvbase+7*\cstep, 0) {$z$};				%

	\draw[guard] (e4) -- (v4);
	\draw[guard] (e4) -- (v5);

	\draw[guard] (e2) -- (v6);
	\draw[guard] (e2) -- (v7);

	\draw[edge] (e1) -- (v7);
	\draw[edge] (e1) -- (v8);
\end{tikzpicture} 			}
		\end{tabularx}
	\end{center}
	I.e., $\kLI_{t_3} \in \GLI_3^1$ is the following $k$-labeled incidence graph:
	\begin{center}

\begin{tikzpicture}[
	every node/.style={regular},
]
	\node[blue] (e4) at (\cebase+3*\cstep,\cheight) {$d^1$};	%
	\node[blue] (e3) at (\cebase+4*\cstep,\cheight) {$e$};	%
	\node[blue] (e2) at (\cebase+5*\cstep,\cheight) {$f^2$};	%
	\node[blue] (e1) at (\cebase+6*\cstep,\cheight) {$g$};	%

	\node[red] (v4) at (\cvbase+3*\cstep, 0) {$v^4$};				%
	\node[red] (v5) at (\cvbase+4*\cstep, 0) {$w^5$};				%
	\node[red] (v6) at (\cvbase+5*\cstep, 0) {$x^6$};				%
	\node[red] (v7) at (\cvbase+6*\cstep, 0) {$y^7$};				%
	\node (v8) at (\cvbase+7*\cstep, 0) {$z$};				%

	\draw[guard] (e4) -- (v4);
	\draw[guard] (e4) -- (v5);

	\draw[edge] (e3) -- (v5);
	\draw[edge] (e3) -- (v6);

	\draw[guard] (e2) -- (v6);
	\draw[guard] (e2) -- (v7);

	\draw[edge] (e1) -- (v7);
	\draw[edge] (e1) -- (v8);
\end{tikzpicture} 	\end{center}
	Now, $\kLI_{t_4}$ is $(\kLI_{t_2}' \cdot \kLI_{t_3}')$ where
	\begin{center}
		\begin{tabularx}{\columnwidth}{C C}
			$\kLI_{t_2}' = \removeB{\removeR{\kLI_{t_2}}{\set{ 2,3 }}}{\set{ 2 }}$:
			&
			$\kLI_{t_3}' = \removeB{\removeR{\kLI_{t_3}}{\set{ 6,7 }}}{\set{ 2 }}$:
			\\
			\adjustbox{valign=c}{

\begin{tikzpicture}[
	every node/.style={regular},
]
	\node[blue] (e7) at (\cebase,\cheight) {$a$};						%
	\node[blue] (e6) at (\cebase+1*\cstep,\cheight) {$b$};	%
	\node[blue] (e5) at (\cebase+2*\cstep,\cheight) {$c$};	%
	\node[blue] (e4) at (\cebase+3*\cstep,\cheight) {$d^1$};	%

	\node (v1) at (\cvbase, 0) {$s$};								%
	\node (v2) at (\cvbase+1*\cstep, 0) {$t$};				%
	\node (v3) at (\cvbase+2*\cstep, 0) {$u$};				%
	\node[red] (v4) at (\cvbase+3*\cstep, 0) {$v^4$};				%
	\node[red] (v5) at (\cvbase+4*\cstep, 0) {$w^5$};				%

	\draw[edge] (e7) -- (v1);
	\draw[edge] (e7) -- (v2);

	\draw[edge] (e6) -- (v2);
	\draw[edge] (e6) -- (v3);

	\draw[edge] (e5) -- (v3);
	\draw[edge] (e5) -- (v4);

	\draw[guard] (e4) -- (v4);
	\draw[guard] (e4) -- (v5);
\end{tikzpicture} 			}
			&
			\adjustbox{valign=c}{

\begin{tikzpicture}[
	every node/.style={regular},
]
	\node[blue] (e4) at (\cebase+3*\cstep,\cheight) {$d^1$};	%
	\node[blue] (e3) at (\cebase+4*\cstep,\cheight) {$e$};	%
	\node[blue] (e2) at (\cebase+5*\cstep,\cheight) {$f$};	%
	\node[blue] (e1) at (\cebase+6*\cstep,\cheight) {$g$};	%

	\node[red] (v4) at (\cvbase+3*\cstep, 0) {$v^4$};				%
	\node[red] (v5) at (\cvbase+4*\cstep, 0) {$w^5$};				%
	\node (v6) at (\cvbase+5*\cstep, 0) {$x$};				%
	\node (v7) at (\cvbase+6*\cstep, 0) {$y$};				%
	\node (v8) at (\cvbase+7*\cstep, 0) {$z$};				%

	\draw[guard] (e4) -- (v4);
	\draw[guard] (e4) -- (v5);

	\draw[edge] (e3) -- (v5);
	\draw[edge] (e3) -- (v6);

	\draw[edge] (e2) -- (v6);
	\draw[edge] (e2) -- (v7);

	\draw[edge] (e1) -- (v7);
	\draw[edge] (e1) -- (v8);
\end{tikzpicture} 			}
		\end{tabularx}
	\end{center}
	I.e., $\kLI_{t_4} \in \GLI_3^2$ is the following $k$-labeled incidence graph:
	\begin{center}

\begin{tikzpicture}[
	every node/.style={regular},
]
	\node[blue] (e7) at (\cebase,\cheight) {$a$};						%
	\node[blue] (e6) at (\cebase+1*\cstep,\cheight) {$b$};	%
	\node[blue] (e5) at (\cebase+2*\cstep,\cheight) {$c$};	%
	\node[blue] (e4) at (\cebase+3*\cstep,\cheight) {$d^1$};	%
	\node[blue] (e3) at (\cebase+4*\cstep,\cheight) {$e$};	%
	\node[blue] (e2) at (\cebase+5*\cstep,\cheight) {$f$};	%
	\node[blue] (e1) at (\cebase+6*\cstep,\cheight) {$g$};	%

	\node (v1) at (\cvbase, 0) {$s$};								%
	\node (v2) at (\cvbase+1*\cstep, 0) {$t$};				%
	\node (v3) at (\cvbase+2*\cstep, 0) {$u$};				%
	\node[red] (v4) at (\cvbase+3*\cstep, 0) {$v^4$};				%
	\node[red] (v5) at (\cvbase+4*\cstep, 0) {$w^5$};				%
	\node (v6) at (\cvbase+5*\cstep, 0) {$x$};				%
	\node (v7) at (\cvbase+6*\cstep, 0) {$y$};				%
	\node (v8) at (\cvbase+7*\cstep, 0) {$z$};				%

	\draw[edge] (e7) -- (v1);
	\draw[edge] (e7) -- (v2);

	\draw[edge] (e6) -- (v2);
	\draw[edge] (e6) -- (v3);

	\draw[edge] (e5) -- (v3);
	\draw[edge] (e5) -- (v4);

	\draw[guard] (e4) -- (v4);
	\draw[guard] (e4) -- (v5);

	\draw[edge] (e3) -- (v5);
	\draw[edge] (e3) -- (v6);

	\draw[edge] (e2) -- (v6);
	\draw[edge] (e2) -- (v7);

	\draw[edge] (e1) -- (v7);
	\draw[edge] (e1) -- (v8);
\end{tikzpicture} 	\end{center}
	Simply removing the remaining labels gives us $\kLI \in \GLI_3^3$, i.e., $\kLI = \kLI_{t_4}' = \removeB{\removeR{\kLI_{t_4}}{\set{ 4,5 }}}{1}$:
	\begin{center}

\begin{tikzpicture}[
	every node/.style={regular},
]
	\node[blue] (e7) at (\cebase,\cheight) {$a$};						%
	\node[blue] (e6) at (\cebase+1*\cstep,\cheight) {$b$};	%
	\node[blue] (e5) at (\cebase+2*\cstep,\cheight) {$c$};	%
	\node[blue] (e4) at (\cebase+3*\cstep,\cheight) {$d$};	%
	\node[blue] (e3) at (\cebase+4*\cstep,\cheight) {$e$};	%
	\node[blue] (e2) at (\cebase+5*\cstep,\cheight) {$f$};	%
	\node[blue] (e1) at (\cebase+6*\cstep,\cheight) {$g$};	%

	\node (v1) at (\cvbase, 0) {$s$};								%
	\node (v2) at (\cvbase+1*\cstep, 0) {$t$};				%
	\node (v3) at (\cvbase+2*\cstep, 0) {$u$};				%
	\node (v4) at (\cvbase+3*\cstep, 0) {$v$};				%
	\node (v5) at (\cvbase+4*\cstep, 0) {$w$};				%
	\node (v6) at (\cvbase+5*\cstep, 0) {$x$};				%
	\node (v7) at (\cvbase+6*\cstep, 0) {$y$};				%
	\node (v8) at (\cvbase+7*\cstep, 0) {$z$};				%

	\draw[edge] (e7) -- (v1);
	\draw[edge] (e7) -- (v2);

	\draw[edge] (e6) -- (v2);
	\draw[edge] (e6) -- (v3);

	\draw[edge] (e5) -- (v3);
	\draw[edge] (e5) -- (v4);

	\draw[edge] (e4) -- (v4);
	\draw[edge] (e4) -- (v5);

	\draw[edge] (e3) -- (v5);
	\draw[edge] (e3) -- (v6);

	\draw[edge] (e2) -- (v6);
	\draw[edge] (e2) -- (v7);

	\draw[edge] (e1) -- (v7);
	\draw[edge] (e1) -- (v8);
\end{tikzpicture} 	\end{center}
	Obviously, $\I_\kLI \isomorphic \I_{\Pn_7}$.
\end{example}  
\end{document}